\documentclass[letterpaper,11pt]{article}
\usepackage{times}
\usepackage{fullpage}
\usepackage{hyperref,microtype,pdfsync}
\usepackage{amsmath,amsfonts,amssymb,amsthm}
\usepackage{nicefrac}
\usepackage{url}
\usepackage{enumitem}
\usepackage{algorithm}
\usepackage[noend]{algorithmic}


\newcommand{\Q}{\ensuremath{\mathbb{Q}}}
\newcommand{\R}{\ensuremath{\mathbb{R}}}

\newcommand{\Z}{\ensuremath{\mathbb{Z}}}


\newcommand{\inner}[1]{\langle{#1}\rangle}

\newcommand{\abs}[1]{\lvert{#1}\rvert}
\newcommand{\absfit}[1]{\left\lvert{#1}\right\rvert}
\newcommand{\set}[1]{\{{#1}\}}
\newcommand{\setfit}[1]{\left\{{#1}\right\}}



\newcommand{\round}[1]{\lfloor{#1}\rceil}

\newcommand{\floor}[1]{\lfloor{#1}\rfloor}

\newcommand{\ceil}[1]{\lceil{#1}\rceil}

\newcommand{\length}[1]{\lVert{#1}\rVert}



\DeclareMathOperator{\poly}{poly}
\DeclareMathOperator{\polylog}{polylog}

\DeclareMathOperator*{\E}{E}







\theoremstyle{plain}            
\newtheorem{theorem}{Theorem}[section]
\newtheorem{lemma}[theorem]{Lemma}
\newtheorem{corollary}[theorem]{Corollary}
\newtheorem{proposition}[theorem]{Proposition}

\theoremstyle{definition}       
\newtheorem{definition}[theorem]{Definition}
\newtheorem{conjecture}[theorem]{Conjecture}

\theoremstyle{remark}           

\numberwithin{equation}{section}


\newif\ifnotes\notesfalse

\ifnotes

\newcommand{\dnote}[1]{{\bf (Daniel:} {#1}{\bf ) }}
\newcommand{\cnote}[1]{{\bf (Chris:} {#1}{\bf ) }}
\newcommand{\snote}[1]{{\bf (Santosh:} {#1}{\bf ) }}

\else

\newcommand{\dnote}[1]{}
\newcommand{\cnote}[1]{}
\newcommand{\snote}[1]{}

\fi


\def\Id{\mathrm{Id}}

\def\eps{\epsilon}

\def\st{~\mathrm{s.t.}~}
\DeclareMathOperator*{\conv}{conv}

\DeclareMathOperator*{\TVD}{d_{TV}}

\DeclareMathOperator*{\argmin}{arg\,min}
\DeclareMathOperator*{\cov}{cov}
\DeclareMathOperator{\SVP}{SVP}
\DeclareMathOperator{\CVP}{CVP}
\DeclareMathOperator*{\vol}{vol}
\def\nicefrac{\frac}

\newcommand{\pr}[2]{\left\langle #1, #2 \right\rangle }
\newcommand{\enc}[1]{\langle #1 \rangle}


\title{Enumerative Lattice Algorithms\\
  in Any Norm via M-Ellipsoid Coverings}

\author{Daniel Dadush\thanks{School of Computer Science, Georgia
    Institute of Technology.}
  \and
  Chris Peikert\footnotemark[1]
  \and
  Santosh Vempala\footnotemark[1]
}

\begin{document}

\maketitle

\begin{abstract}
  \ifnotes \begin{center}{\Huge{NOTES ARE ON}}\end{center} \fi
  We give a novel algorithm for enumerating lattice points in any convex
body, and give applications to several classic lattice problems,
including the Shortest and Closest Vector Problems (SVP and CVP,
respectively) and Integer Programming (IP).  Our enumeration technique
relies on a classical concept from asymptotic convex geometry known as
the \emph{M-ellipsoid}, and uses as a crucial subroutine the recent
algorithm of Micciancio and Voulgaris (STOC 2010) for lattice problems
in the $\ell_{2}$ norm.  As a main technical contribution, which may
be of independent interest, we build on the techniques of Klartag
(Geometric and Functional Analysis, 2006) to give an expected
$2^{O(n)}$-time algorithm for computing an M-ellipsoid for any
$n$-dimensional convex body.

As applications, we give deterministic $2^{O(n)}$-time and -space
algorithms for solving exact SVP, and exact CVP when the target point
is sufficiently close to the lattice, on $n$-dimensional lattices
\emph{in any (semi-)norm} given an M-ellipsoid of the unit ball.  In
many norms of interest, including all $\ell_{p}$ norms, an M-ellipsoid
is computable in deterministic $\poly(n)$ time, in which case these
algorithms are fully deterministic.  Here our approach may be seen as
a derandomization of the ``AKS sieve'' for exact SVP and CVP (Ajtai,
Kumar, and Sivakumar; STOC 2001 and CCC 2002).

As a further application of our SVP algorithm, we derive an expected
$O(f^*(n))^n$-time algorithm for Integer Programming, where $f^*(n)$
denotes the optimal bound in the so-called ``flatness theorem,'' which
satisfies $f^*(n) = O(n^{4/3} \polylog(n))$ and is conjectured to be
$f^{*}(n)=\Theta(n)$.  Our runtime improves upon the previous best of
$O(n^{2})^{n}$ by Hildebrand and K{\"o}ppe (2010).



\end{abstract}

\textbf{Keywords.}  Shortest/Closest Vector Problem, Integer
Programming, lattice point enumeration, M-ellipsoid.

\thispagestyle{empty}

\newpage

\setcounter{page}{1}

\section{Introduction}
\label{sec:introduction}

The Shortest and Closest Vector Problems (SVP and CVP, respectively)
on lattices are central algorithmic problems in the geometry of
numbers, with applications to Integer
Programming~\cite{lenstra83:_integ_progr_with_fixed_number_of_variab},
factoring polynomials over the rationals~\cite{lenstra82:_factor},
cryptanalysis
(e.g.,~\cite{odlyzko90:_rise_and_fall_of_knaps_crypt,DBLP:journals/joc/JouxS98,DBLP:conf/calc/NguyenS01}),
and much more.  (An $n$-dimensional \emph{lattice} $L$ is a discrete
additive subgroup of $\R^{n}$, and is generated as the set of integer
linear combinations of some basis vectors $b_{1}, \ldots, b_{k} \in
\R^{n}$, for some $k \leq n$.)  The SVP is simply: given a lattice $L$
represented by a basis, find a nonzero $v \in L$ such that
$\length{v}$ is minimized, where $\length{\cdot}$ denotes a particular
norm on $\R^{n}$.  The CVP is an inhomogeneous analogue of SVP: given
a lattice $L$ and a point $t \in \R^{n}$, find some $v \in L$ that
minimizes $\length{v - t}$.  In these problems, one often uses the
Euclidean ($\ell_{2}$) norm, but many applications require other norms
like $\ell_{p}$ or, most generally, the semi-norm defined by a convex
body $K \ni 0$ as $\length{x}_{K} = \inf \set{ r \geq 0 : x \in rK }$.
Indeed, general (semi-)norms arise quite often in the study of
lattices; for example, the ``flatness theorem'' in Integer Programming
--- which states that every lattice-free convex body has lattice width
bounded by a function of the dimension alone --- is a statement about
SVP in general norms.

Much is known about the computational complexity of SVP and CVP, in
both their exact and approximation versions.  On the negative side,
SVP is NP-hard (in $\ell_{2}$, under randomized reductions) to solve
exactly, or even to approximate to within any constant
factor~\cite{DBLP:conf/stoc/Ajtai98,DBLP:journals/jcss/CaiN99,DBLP:journals/siamcomp/Micciancio00,DBLP:journals/jacm/Khot05}.
Many more hardness results are known for other $\ell_{p}$ norms and
under stronger complexity assumptions than P $\neq$ NP (see,
e.g.,~\cite{emde81:_anoth_np,DBLP:journals/tcs/Dinur02,DBLP:conf/stoc/RegevR06,DBLP:conf/stoc/HavivR07}).
CVP is NP-hard to approximate to within $n^{c/\log \log n}$ factors
for some constant $c >
0$~\cite{DBLP:journals/jcss/AroraBSS97,DBLP:journals/combinatorica/DinurKRS03,DBLP:journals/tcs/Dinur02},
where $n$ is the dimension of the lattice.  Therefore, we do not
expect to solve (or even closely approximate) these problems
efficiently in high dimensions.  Still, algorithms providing weak
approximations or having super-polynomial running times are the
foundations for the many applications mentioned above.

The celebrated LLL algorithm~\cite{lenstra82:_factor} and
variants~\cite{DBLP:journals/tcs/Schnorr87} give $2^{n /
  \text{polylog}(n)}$ approximations to SVP and CVP in $\ell_{2}$, in
$\poly(n)$ time.  For exact SVP and CVP in the $\ell_{2}$ norm,
Kannan's
algorithm~\cite{kannan87:_minkow_convex_body_theor_and_integ_progr}
gives a solution in deterministic $2^{O(n \log n)}$ time and
$\poly(n)$ space.  This performance remained essentially unchallenged
until the breakthrough randomized ``sieve'' algorithm of Ajtai, Kumar,
and Sivakumar~\cite{DBLP:conf/stoc/AjtaiKS01}, which provides a
$2^{O(n)}$-time and -space solution for exact SVP; moreover, the
algorithm generalizes straightforwardly to $\ell_{p}$ and other
norms~\cite{DBLP:conf/icalp/BlomerN07,DBLP:conf/fsttcs/ArvindJ08}. For
CVP, in a sequence of works
~\cite{DBLP:conf/coco/AjtaiKS02,DBLP:conf/icalp/BlomerN07,DBLP:conf/fsttcs/ArvindJ08}
it was shown that a modified version of the AKS sieve can approximate
CVP in any $\ell_p$ norm to within a $(1+\epsilon)$ factor in time and
space $(1/\epsilon)^{O(n)}$ for any $\epsilon > 0$. Furthermore, these
algorithms can solve CVP exactly in $2^{O(n)}$ time as long as the
target point is ``very close'' to the lattice. It is worth noting that
the AKS sieve is a \emph{Monte Carlo} algorithm: while the output
solution is correct with high probability, it is not guaranteed.

In a more recent breakthrough, Micciancio and
Voulgaris~\cite{DBLP:conf/stoc/MicciancioV10} gave a
\emph{deterministic} $2^{O(n)}$-time (and space) algorithm for exact
SVP and CVP in the $\ell_{2}$ norm, among many other lattice problems
in NP.  Interestingly, their algorithm works very differently from the
AKS sieve, by computing an explicit description of the Voronoi cell of
the lattice.  (The Voronoi cell is the set of all points in $\R^{n}$
that are closer to the origin than to any other lattice point.)  In
contrast to the AKS sieve, however, the algorithm
of~\cite{DBLP:conf/stoc/MicciancioV10} appears to be quite specialized
to $\ell_{2}$ (or any norm defined by an ellipsoid, simply by applying
a linear transformation).  This is in part because in $\ell_{2}$ the
Voronoi cell is convex and has $2^{O(n)}$ facets, but in general norms
this is not the case.  A main problem left open
in~\cite{DBLP:conf/stoc/MicciancioV10} was to find deterministic
$2^{O(n)}$-time algorithms for lattice problems in $\ell_{p}$ and
other norms.

\subsection{Results and Techniques}
\label{sec:results-techniques}

Our main contribution is a novel algorithm for enumerating lattice
points in any convex body.  It uses as a crucial subroutine the
Micciancio-Voulgaris (MV) algorithm
\cite{DBLP:conf/stoc/MicciancioV10} for the $\ell_{2}$ norm that
enumerates lattice points in an ellipsoid, and relies on a classical
concept from asymptotic convex geometry known as the M-ellipsoid.
This connection between lattice algorithms and convex geometry appears
to be a fertile direction for further research.

For a lattice $L$ and convex body $K$ in $\R^n$, let $G(K,L)$ be the
largest number of lattice points contained in any translate of $K$, i.e., 
\begin{equation}
  \label{eq:G-K-L}
  G(K,L) = \max_{x \in \R^{n}} \abs{(K+x) \cap L}.
\end{equation}
Our starting point is the following guarantee on the enumeration of $K
\cap L$.\footnote{For simplicity, throughout this introduction the
  claimed running times will omit polynomial factors in the lengths of
  the algorithms' inputs, which are represented in the usual way.}

\begin{theorem}[Enumeration in convex bodies, informal]
  \label{thm:body-lat-enum-informal}
  Given any convex body $K \subseteq \R^n$ along with an
  \emph{M-ellipsoid} $E$ of $K$, and any $n$-dimensional lattice $L
  \subseteq \R^n$, the set $K \cap L$ can be computed in deterministic
  time $G(K,L) \cdot 2^{O(n)}$.
\end{theorem}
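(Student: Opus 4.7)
The plan is to reduce enumeration of $L \cap K$ to enumeration of lattice points in ellipsoids, for which the Micciancio--Voulgaris (MV) algorithm is directly applicable after a linear change of coordinates. The M-ellipsoid hypothesis supplies two geometric facts that drive the argument: (i) $K$ can be covered by $N = 2^{O(n)}$ translates of $E$, and (ii) $E$ can be covered by $2^{O(n)}$ translates of $K$. Fact (i) yields an outer decomposition of $K$ into ellipsoidal pieces; fact (ii) yields the running-time bound.

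Given such a covering $K \subseteq \bigcup_{i=1}^N (c_i + E)$, for each $c_i$ I would call MV to enumerate $L \cap (c_i + E)$: applying the affine map $T$ with $T E = B_2$ turns this into enumeration of $T L$ inside the Euclidean ball $T c_i + B_2$, which MV handles in $2^{O(n)}$ time per output point. Taking the union over $i$ and filtering by membership in $K$ returns exactly $L \cap K$. There are $N = 2^{O(n)}$ MV invocations, each returning at most $G(E,L)$ points, so the total output is at most $N \cdot G(E,L)$. Fact (ii) now enters: since $E$ is covered by $2^{O(n)}$ translates of $K$, every translate of $E$ meets at most $2^{O(n)} \cdot G(K,L)$ lattice points, giving $G(E,L) \leq 2^{O(n)} \cdot G(K,L)$ and hence the claimed total running time of $G(K,L) \cdot 2^{O(n)}$.

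The remaining task, and the step I expect to be the main obstacle, is to produce the cover $\{c_1,\dots,c_N\}$ algorithmically in $2^{O(n)}$ time. The approach I would take is to fix a structured ``covering'' lattice $\Lambda^* \subseteq \R^n$ (for instance a suitably scaled image of $\Z^n$ under the affine map that sends $E$ to the Euclidean unit ball) chosen so that $\Lambda^* + E \supseteq \R^n$ and $\det(\Lambda^*) = \Theta(\vol(E))$; the desired centers can then be taken from $\Lambda^* \cap (K+E)$, since for any $p \in K$ there is a $c \in \Lambda^*$ with $p \in c + E$. The volume estimate $\vol(K+E) = 2^{O(n)}\vol(E)$ (a consequence of (i)) shows $|\Lambda^* \cap (K+E)| = 2^{O(n)}$, so this cover is small. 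To list it I would initialize a breadth-first search from some $c_0 \in \Lambda^*$ within $E$-distance $1$ of a known point of $K$, and at each queued center $c$ explore the neighbors $\Lambda^* \cap (c + \alpha E)$ for a fixed constant $\alpha$, enumerated by a separate MV call inside this ellipsoidal region. Convexity of $K$ ensures the cover is connected under this neighbor relation; pruning discovered centers by the explicit test $(c+E)\cap K \neq \emptyset$ keeps the BFS from wandering past the $2^{O(n)}$ true centers, and each neighbor query returns at most $2^{O(n)}$ points in $2^{O(n)}$ time, so the center-finding stage also fits within the time budget.
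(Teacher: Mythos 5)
Your proposal takes essentially the same approach as the paper: cover $K$ by $2^{O(n)}$ translates of the M-ellipsoid $E$, enumerate each ellipsoidal piece with Micciancio--Voulgaris, filter by $K$-membership, and bound the total work via $G(E,L)\le N(E,K)\cdot G(K,L)\le 2^{O(n)}G(K,L)$ together with the cover size. The paper's \textsf{Build-Cover} subroutine implements your breadth-first covering step using the maximum-volume inscribed parallelepiped $P\subseteq E$ and its exact tiling lattice --- so the tiles partition space, the lattice neighbors are the explicit $\pm$ basis steps rather than MV-enumerated $\alpha E$-neighbors, the connectivity of tiles meeting a convex set is a classical fact, and the intersection test is implemented carefully against a weak membership oracle --- which is somewhat cleaner than, but equivalent in substance to, your covering-lattice BFS; note only the small slip that for a scaled $\Z^n$ you get $\det(\Lambda^*)=2^{-\Theta(n)}\vol(E)$ rather than $\Theta(\vol(E))$, which is harmless for the $2^{O(n)}$ bound on the number of centers.
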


As we describe later, an M-ellipsoid $E$ of a convex body $K \subseteq
\R^n$ is an ellipsoid with roughly the same `size' and `shape' as $K$.
We will show that it can generated in randomized $\poly(n)$ time with
high probability, and verified in deterministic $2^{O(n)}$ time, and
hence can always be computed in expected $2^{O(n)}$ time. Moreover, in
many specific cases of interest, such as the unit ball of any
$\ell_{p}$ norm, an M-ellipsoid is deterministically computable in
$\poly(n)$ time.

Our enumeration algorithm is at the core of the following
applications.  We begin with the Shortest Vector Problem in {\em any}
``well-centered'' semi-norm.\footnote{``Well-centered'' means that
  $\vol(K \cap -K) \geq 4^{-n} \vol(K)$; this clearly holds for
  centrally symmetric $K$, which corresponds to a standard norm.  It
  also holds for any convex body $K$ with centroid at or very near the
  origin.}

\begin{theorem}[SVP in any (semi-)norm, informal]
  \label{thm:svp-deterministic-informal}
  There is a deterministic $2^{O(n)}$-time (and -space) algorithm
  that, given any well-centered $n$-dimensional convex body $K$ and
  an M-ellipsoid $E$ of $K$, solves SVP
  exactly on any $n$-dimensional lattice $L$ in the semi-norm
  $\length{\cdot}_{K}$ defined by $K$.
\end{theorem}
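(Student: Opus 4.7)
The plan is to reduce exact SVP in $\length{\cdot}_{K}$ to $O(n)$ invocations of the enumeration algorithm of Theorem~\ref{thm:body-lat-enum-informal} at geometrically decreasing scales, controlled by a single packing bound near the SVP scale.

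First I would produce a coarse approximation of $\lambda_{1}^{K}(L) := \min_{v \in L \setminus \set{0}} \length{v}_{K}$: apply a deterministic $\poly(n)$-time lattice-basis-reduction routine (say LLL in the linear coordinates in which the given M-ellipsoid $E$ becomes the Euclidean unit ball), and then use the covering bounds defining an M-ellipsoid to convert the resulting short-$\length{\cdot}_{E}$ lattice vector into a nonzero $v_{0} \in L$ with $\length{v_{0}}_{K} \leq 2^{O(n)} \lambda_{1}^{K}(L)$. Setting $r_{0} = \length{v_{0}}_{K}$ and $r_{k} = r_{0}/2^{k}$ gives $r_{0} \geq \lambda_{1}^{K}(L)$ and $r_{k} < \lambda_{1}^{K}(L)$ for every $k$ above some $K_{\max} = O(n)$. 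I would then sweep $k = K_{\max}, K_{\max}-1, \ldots$ downward, invoking Theorem~\ref{thm:body-lat-enum-informal} at each step to compute $L \cap r_{k} K$, halt at the first index $k^{*}$ for which this set contains a nonzero vector, and output its minimum-$\length{\cdot}_{K}$ element.

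For correctness, the iteration at $k^{*}+1$ returned $\set{0}$, so $\lambda_{1}^{K}(L) > r_{k^{*}+1} = r_{k^{*}}/2$; combined with $r_{k^{*}} \geq \lambda_{1}^{K}(L)$, this forces $\lambda_{1}^{K}(L) \leq r_{k^{*}} < 2\lambda_{1}^{K}(L)$, so $r_{k^{*}} K$ contains every shortest lattice vector and the enumerated minimum is indeed a solution. The runtime rests on the claim $G(rK, L) \leq 2^{O(n)}$ whenever $r \leq 2\lambda_{1}^{K}(L)$, which covers every iteration of the sweep (since $r_{k} \leq r_{k^{*}}$ for all $k \geq k^{*}$). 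To establish it, set $K' = K \cap (-K)$, note $\length{v}_{K'} \geq \length{v}_{K}$ and therefore $\lambda_{1}^{K'}(L) \geq \lambda_{1}^{K}(L)$, and for any translate $x + rK$ containing lattice points $v_{1}, \ldots, v_{m}$, place the bodies $v_{i} + \epsilon K'$ with $\epsilon$ just below $\lambda_{1}^{K'}(L)/2$. These are pairwise disjoint because $v_{i}-v_{j} \in L\setminus\set{0}$ forces $\length{v_{i}-v_{j}}_{K'}\geq \lambda_{1}^{K'}(L) > 2\epsilon$, and $K' \subseteq K$ puts them all inside $x + (r+\epsilon)K$; the well-centered hypothesis $\vol(K') \geq 4^{-n}\vol(K)$ then yields
\[
  m \;\leq\; \frac{(r+\epsilon)^{n}\vol(K)}{\epsilon^{n}\vol(K')}
  \;\leq\; \Bigl(\frac{4(r+\epsilon)}{\epsilon}\Bigr)^{n}
  \;\leq\; 2^{O(n)}.
\]
Combining the $O(n)$ outer iterations with per-iteration cost $G(r_{k}K, L)\cdot 2^{O(n)} = 2^{O(n)}$ gives the claimed deterministic $2^{O(n)}$ time and space.

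The main obstacle is the initial coarse approximation: the M-ellipsoid hypothesis supplies only covering-number equivalence between $E$ and $K$, not pointwise norm equivalence, so translating a short-$\length{\cdot}_{E}$ lattice vector (cheaply produced by LLL) into a short-$\length{\cdot}_{K}$ lattice vector in deterministic polynomial time requires a careful pigeonhole over the $2^{O(n)}$ cover of $K$ by translates of $E$ and the dual cover going the other way. The remainder --- the scale sweep and the well-centered packing bound --- is a clean assembly of Theorem~\ref{thm:body-lat-enum-informal} with a standard lattice-point-counting argument.
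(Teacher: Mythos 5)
Your proof is essentially correct, and it follows the same overall skeleton as the paper's Algorithm~Shortest-Vectors (Theorem~\ref{thm:svp-alg}): obtain a coarse bracket on $\lambda_1(K,L)$, run the enumeration of Theorem~\ref{thm:body-lat-enum-informal} at geometrically spaced scales until the ball contains a nonzero lattice vector, and control the per-call cost with a $K\cap(-K)$ packing argument that is, up to presentation, identical to the paper's Lemma~\ref{lem:lambda1-bd}. Where you diverge is in the initial coarse bound. The paper calls the Micciancio--Voulgaris algorithm once to solve exact $\ell_2$-SVP and then leans on the $(0,r,R)$-centering of $K$ to get $\lambda_1(K,L) \in [t, t R/r]$, so only $O(\log(R/r))$ doublings are needed. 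You instead run LLL in the $E$-coordinates and convert a short-$\length{\cdot}_E$ lattice vector into a $2^{O(n)}$-approximation of $\lambda_1(K,L)$ using the covering hypothesis, which forces $O(n)$ halvings; both yield $2^{O(n)}$ total since each iteration already costs $2^{O(n)}$. Your route has the minor virtue of not needing the $R/r$ bound at this step and using the cheaper LLL call, while the paper's route gives a tighter bracket and meshes naturally with the weak-distance-oracle bookkeeping it must do anyway.

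One remark on what you call the ``main obstacle'': the conversion from $\length{\cdot}_E$ to $\length{\cdot}_K$ is less delicate than you suggest. If $N(K,E) \leq C^n$ and $x \in K$ with $\length{x}_E = R$, then the segment $[0,x] \subseteq K$ must be covered by the $C^n$ translates of $E$, and each translate meets that segment in a chord of $\length{\cdot}_E$-length at most $2$, so $R \leq 2 C^n$; hence $\length{x}_E \leq 2 C^n \length{x}_K$ for all $x$, and symmetrically $\length{x}_K \leq 2 C^n \length{x}_E$ from $N(E,K)\leq C^n$ (with an extra constant for the well-centered non-symmetric case via $K\cap(-K)$). So the covering numbers \emph{do} deliver a $2^{O(n)}$-factor pointwise norm equivalence directly; no pigeonhole beyond this one-line chord argument is needed. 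You should also make explicit that $K_{\max}$ must be fixed in advance from the explicit $2^{O(n)}$ approximation constant (so that $r_{K_{\max}} < \lambda_1(K,L)$ is guaranteed and the boundary case $k^* = K_{\max}$ cannot arise), and, if you want to match the paper's care about the oracle model, account for the $\eps$-slack of the weak distance oracle as the paper's Shortest-Vectors does via its final pass with radius $d+t$.
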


Besides being a novel algorithm, the improvement over previous
approaches is in the generalization to (semi-)norms defined by
arbitrary convex bodies, the use of much less randomness (if any), and
in having a Las Vegas algorithm whose output is guaranteed to be
correct.

We get a similar algorithm for the Closest Vector Problem, but its
complexity grows with the distance from the target point to the
lattice.

\begin{theorem}[CVP in any (semi-)norm, informal]
  \label{thm:cvp-deterministic-informal}
  There is a deterministic algorithm that, given any well-centered
  $n$-dimensional convex body $K$ and an M-ellipsoid $E$ of $K$,
  solves CVP exactly on any $n$-dimensional lattice $L$ in the
  semi-norm $\length{\cdot}_{K}$ defined by $K$, in $(1+2\alpha)^{n}
  \cdot 2^{O(n)}$ time and space, provided that the distance from the
  query point $x$ to $L$ is at most $\alpha$ times the length of the
  shortest nonzero vector of $L$ (under $\length{\cdot}_{K}$).
\end{theorem}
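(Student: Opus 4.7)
The plan is to reduce exact CVP within radius $\alpha \lambda_1$ to a single call of the enumeration procedure of Theorem~\ref{thm:body-lat-enum-informal}, combined with a packing bound on the enumeration output. First, I would run the SVP algorithm of Theorem~\ref{thm:svp-deterministic-informal} on $L$ under the semi-norm $\length{\cdot}_K$ (using the supplied M-ellipsoid $E$ of $K$) to compute $\lambda_1 := \lambda_1(L, \length{\cdot}_K)$ exactly in deterministic $2^{O(n)}$ time and space. Set $r := \alpha \lambda_1$. By hypothesis, the closest lattice vector $v^\ast$ to the query point $x$ satisfies $\length{v^\ast - x}_K \leq r$, so $v^\ast \in L \cap (x + rK)$. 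Since the M-ellipsoid notion is scale-invariant, $rE$ is an M-ellipsoid of $rK$, so Theorem~\ref{thm:body-lat-enum-informal} enumerates all of $L \cap (x + rK)$ in time $G(rK, L) \cdot 2^{O(n)}$, after which the algorithm returns the enumerated point minimizing $\length{v - x}_K$.

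The crux is the packing bound $G(rK, L) \leq (1 + 2\alpha)^n \cdot 2^{O(n)}$. When $K$ is centrally symmetric this is classical: any distinct $v, w \in L$ satisfy $\length{v - w}_K \geq \lambda_1$, so the translates $v + \tfrac{\lambda_1}{2} K$ over $v \in L \cap (y + rK)$ have pairwise disjoint interiors and all lie inside $y + (r + \tfrac{\lambda_1}{2}) K$, and comparing volumes yields $|L \cap (y + rK)| \leq (1 + 2r/\lambda_1)^n = (1 + 2\alpha)^n$ uniformly in the center $y$. For a body $K$ that is only well-centered, I would apply the same packing argument to the symmetrized body $K' := K \cap (-K)$, whose induced norm
\[
  \length{y}_{K'} = \max\bigl(\length{y}_K,\; \length{-y}_K\bigr) \geq \length{y}_K
\]
satisfies $\lambda_1(L, \length{\cdot}_{K'}) \geq \lambda_1$. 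Packing interior-disjoint $\tfrac{\lambda_1}{2} K'$ balls inside $y + rK + \tfrac{\lambda_1}{2} K'$, comparing volumes, and invoking the well-centered hypothesis $\vol(K') \geq 4^{-n} \vol(K)$ then gives $G(rK, L) \leq 4^n \cdot (1 + 2\alpha)^n = (1 + 2\alpha)^n \cdot 2^{O(n)}$, as needed.

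The main obstacle is precisely this asymmetry: for a semi-norm, $\length{v - w}_K$ and $\length{w - v}_K$ need not agree, so the direct triangle-inequality packing in $K$ breaks down; symmetrizing to $K'$ and invoking the well-centered hypothesis together resolve the issue at a $2^{O(n)}$ cost that is absorbed into the stated running time. The remaining ingredients---the SVP call, the enumeration on the translate $x + rK$ (which one handles by feeding the affine lattice coset $L - x$ into the algorithm of Theorem~\ref{thm:body-lat-enum-informal} against the body $rK$), and the final linear scan for the minimizer---each contribute only $2^{O(n)}$ overhead and no extra dependence on $\alpha$.
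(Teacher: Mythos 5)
Your proposal is correct and follows essentially the same route as the paper. The packing bound via the symmetrization $K' = K \cap (-K)$ — pack interior-disjoint copies of $\tfrac{\lambda_1}{2}K'$ inside $y + (r + \tfrac{\lambda_1}{2})K$, then pay $\vol(K)/\vol(K') \leq 4^n$ for well-centeredness — is exactly the paper's Lemma~\ref{lem:lambda1-bd}, and the reduction to a single call of the enumeration theorem is the same. One small divergence: the paper's Closest-Vectors algorithm does not assume $\alpha$ is known; it seeds a distance estimate from an $\ell_2$ CVP call and doubles it until Lattice-Enum returns a nonempty set, which automatically lands within a factor $2$ of $d_K(L,x)$. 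Your proposal sets $r := \alpha\lambda_1$ directly, which presupposes $\alpha$ is given as a parameter; if it is not, you would need the same doubling search (at no asymptotic cost), since the theorem's hypothesis is a promise about the input rather than an input quantity. Apart from that bookkeeping point, the argument matches.
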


A main motivation of our work is to develop more powerful tools for
solving Integer Programming.  We note that solving IP reduces to
solving CVP in any well-centered semi-norm: to decide if $K \cap L
\neq \emptyset$, first approximate the centroid $b$ of $K$, then solve
CVP with respect to the well-centered body $K-b$ on lattice $L$ and
target point $b$.  Then $K \cap L \neq \emptyset$ if and only if there
exists $y \in L$ such that $\length{y-b}_{K-b} \leq 1$.  However,
unless we have a bound on the ratio $\alpha$ from the above theorem,
we may not get a satisfactory guarantee on the running time of our CVP
algorithm in this setting.

For the general case, we can still get an unqualified improvement in
the state of the art for IP using our SVP algorithm for general norms.

\begin{theorem}[Integer Programming, informal]
  \label{thm:faster-ip-informal}
  There exists a randomized algorithm that, given a convex body $K
  \subseteq \R^{n}$ and an $n$-dimensional lattice $L \subset \R^{n}$,
  either decides that $K \cap L = \emptyset$ or returns a point $y \in
  K \cap L$ in expected $O(f^*(n))^n$ time, where $f^{*}(n)$ is the
  optimal bound for the ``flatness theorem.''
\end{theorem}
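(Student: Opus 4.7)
The plan is to apply Lenstra's recursive framework for Integer Programming, with our general-norm SVP algorithm (Theorem~\ref{thm:svp-deterministic-informal}) as the core subroutine and the flatness theorem controlling the per-level branching.

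At each recursive call on a pair $(K, L)$, I first define the width (semi-)norm on $(\R^n)^*$ by $\length{y}_W := h_{K-K}(y) = \max_{x \in K} \inner{y,x} - \min_{x \in K} \inner{y,x}$, whose unit ball is the centrally symmetric body $(K-K)^\circ$. I then compute an M-ellipsoid for $K-K$ in expected $2^{O(n)}$ time using the construction established earlier in the paper; this is the source of the ``expected'' qualifier in the runtime. Next, I invoke Theorem~\ref{thm:svp-deterministic-informal} on the dual lattice $L^*$ under $\length{\cdot}_W$ to recover a primitive shortest dual vector $y^* \in L^*$ with $W^* = \length{y^*}_W$, in $2^{O(n)}$ time. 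After renormalizing $y^*$ so that $\inner{y^*, L} = \Z$, only $\floor{W^*} + 1$ of the hyperplanes $H_k = \set{x : \inner{y^*, x} = k}$ intersect $K$; I recurse on each $(n-1)$-dimensional subproblem $(K \cap H_k,\, L \cap H_k)$, identifying $H_k$ with $\R^{n-1}$ and the affine sub-lattice accordingly, and return the first lattice point found, or ``empty'' if none.

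The runtime analysis hinges on the flatness theorem: whenever a subproblem is lattice-free we have $W^* \le f^*(n)$, so branching is $O(f^*(n))$. Together with the $2^{O(n)}$ per-call cost of SVP and M-ellipsoid computation, the recursion satisfies $T(n) \le (f^*(n)+1)\, T(n-1) + 2^{O(n)}$, which solves to $T(n) = O(f^*(n))^n$, since $f^*(n) = \Omega(n)$ absorbs the additive $2^{O(n)}$ term.

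The main obstacle is the lattice-containing case, where a priori $W^*$ can be arbitrarily large and the flatness-based branching bound no longer applies. I would handle this by inspecting $W^*$ after the SVP call: if $W^* \le f^*(n)$, proceed with slicing; otherwise the contrapositive of the flatness theorem guarantees $K \cap L \ne \emptyset$, and I locate a point directly via a Banaszczyk-style transference argument --- a large $\lambda_1(L^*)$ in $\length{\cdot}_W$ forces a small covering radius of $L$ in the dual norm $\length{\cdot}_{K-K}$ --- reducing the search to a single call of our CVP algorithm (Theorem~\ref{thm:cvp-deterministic-informal}) at the centroid of $K$ on a suitably well-centered representation, in an additional $2^{O(n)}$ time. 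The main technical care is in calibrating the well-centeredness assumption of Theorem~\ref{thm:cvp-deterministic-informal}; this affects only constants and preserves the $O(f^*(n))^n$ asymptotic bound.
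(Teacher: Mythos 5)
Your high-level structure --- Lenstra/Kannan recursion driven by an exact shortest-vector computation on $L^*$ in the width norm with unit ball $(K-K)^*$, with $f^*(n)$ controlling the branching --- coincides with the paper's. The gap is in your handling of the case $W^* > f^*(n)$, which is exactly where the paper's argument takes a different and cleaner route.

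You correctly note that $W^* > f^*(n)$ forces $\mu(K,L) < 1$, so $K \cap L \neq \emptyset$; but the proposed fallback, a single call to the CVP algorithm of Theorem~\ref{thm:cvp-deterministic-informal} at the centroid, is not justified. That algorithm's runtime is $(1+2\alpha)^n 2^{O(n)}$ with $\alpha = d_K(L,x)/\lambda_1(K,L)$. A small covering radius controls the numerator $d$, but nothing controls $\lambda_1(K,L)$ from below. For instance, with $K-b = [-\tfrac12,\tfrac12]^n$ and $L = n^{-100}\Z \times n^{-2}\Z^{n-1}$ one has $\lambda_1((K-K)^*,L^*) = n^2 > f^*(n)$ and $\mu(K,L) \approx n^{-2}$, yet $\lambda_1(K,L) \approx n^{-100}$, so $\alpha \approx n^{98}$ and the single CVP call already costs far more than $O(f^*(n))^n$. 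No transference bound rescues this: from a large $\lambda_1((K-K)^*,L^*)$ one only gets $\lambda_1(K,L) \gtrsim 1/\lambda_n((K-K)^*,L^*)$, and $\lambda_n$ can dominate $\lambda_1$ by an arbitrary factor, precisely as in the example.

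The paper avoids a separate lattice-containing branch entirely: it \emph{always} recurses on $O(f^*(n))$ hyperplanes and proves this suffices via a shrunk-body argument. Taking $x_0 \in \argmin_{x\in K}\pr{y}{x}$ and $\tilde K = (1-\beta)x_0 + \beta K$ with $\beta = (f^*(n)+1)/\length{y}_{(K-K)^*}$, one has $\tilde K \subseteq K$ contained in the first $f^*(n)+O(1)$ hyperplanes, while by homogeneity $\lambda_1((\tilde K-\tilde K)^*,L^*) = \beta\,\lambda_1((K-K)^*,L^*) \geq f^*(n)$; flatness then gives $\mu(\tilde K, L) \leq 1$, hence $\tilde K \cap L \neq \emptyset$, so a lattice point of $K$ lies in the branched strip whenever one exists. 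This is the idea missing from your proposal. Two further points you elide, which matter for the stated complexity though they are more routine: a basis-refinement preprocessing step is needed to keep bit sizes polynomially bounded across recursive calls, and a GLS-rounding step separately handles the case where the sandwiching ellipsoid has volume far below $\det(L)$ --- there the paper falls back to a Euclidean SVP on $L^*$ and Minkowski's theorem to isolate a single hyperplane on which to recurse.
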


The flatness theorem, a fundamental result in the geometry of numbers,
says that every lattice-free convex body has lattice width bounded by
a function of the dimension alone (see Equation~\eqref{eq:flatness}
for a precise statement).  As first noticed by
Lenstra~\cite{lenstra83:_integ_progr_with_fixed_number_of_variab}, it
suggests a recursive algorithm for IP that uses a subroutine for
finding good flatness directions.  Finding an optimal flatness
direction directly reduces to solving an SVP in a general norm, which
was solved only approximately in previous refinements of Lenstra's
algorithm.  The above is therefore an essentially ``optimal''
Lenstra-type algorithm with respect to the classical analysis.

Using the current best known bounds on $f^*(n)$~\cite{Bana99,R00}, our
IP algorithm has a main complexity term of order $O(n^{4/3} \log^c
n)^n$.  This improves on the previous fastest algorithm of Hildebrand
and K{\"o}ppe~\cite{arxiv/HildebrandK10}
which gives a leading complexity term of $O(n^2)^n$; the previous best
before that is due to
Kannan~\cite{kannan87:_minkow_convex_body_theor_and_integ_progr} and
achieves a leading complexity term of $O(n^{2.5})^n$.  It is
conjectured that $f^*(n) = \Theta(n)$~\cite{Bana99}, and this would
give a bound of $O(n)^n$ for IP.

In the rest of this introduction we give an overview of our
enumeration technique and its application to SVP, CVP, and IP.

\paragraph{Enumeration via M-ellipsoid coverings.}

We now explain the main technique underlying
Theorem~\ref{thm:body-lat-enum-informal} (enumeration of lattice
points in a convex body $K$).  The key concept we use is a classical
notion from asymptotic convex geometry, known as the
\emph{M-ellipsoid}.  An M-ellipsoid $E$ for a convex body $K$ has the
property that $2^{O(n)}$ copies (translates) of $E$ can be used to
cover $K$, and $2^{O(n)}$ copies of $K$ suffice to cover~$E$.  The
latter condition immediately implies that
\begin{equation}
  \label{eq:GKL-covering}
  G(E,L) \leq 2^{O(n)} \cdot G(K,L).
\end{equation}
Using the former condition, enumerating $K \cap L$ therefore reduces
to enumerating $(E+t) \cap L$ for at most $2^{O(n)}$ values of $t$
(and keeping only those lattice points in $K$), which can be done in
deterministic $2^{O(n)} \cdot G(E,L)$ time by (an extension of) the MV
algorithm~\cite{DBLP:conf/stoc/MicciancioV10}.

The existence of an M-ellipsoid for any convex body $K$ was
established by Milman~\cite{M86,MP00}, and there are now multiple
proofs.  Under the famous \emph{slicing conjecture}~\cite{B86}, an
appropriate scaling of $K$'s \emph{inertial ellipsoid} (defined by the
covariance matrix of a uniform random point from $K$) is in fact an
M-ellipsoid.  When $K$ is an $\ell_{p}$ ball, an M-ellipsoid is simply
the scaled $\ell_{2}$ ball $n^{1/2-1/p} \cdot B_{2}^{n}$.

For general convex bodies $K$, we give an algorithm for computing an
M-ellipsoid of $K$, along with a covering by copies of the ellipsoid.
Under the slicing conjecture, the former task is straightforward:
simply estimate the covariance matrix of $K$ using an algorithm for
sampling uniformly from a convex body (e.g.,~\cite{DyerFK91}).  To
avoid assuming the slicing conjecture, we use an alternative proof of
M-ellipsoid existence due to Klartag~\cite{K06}.  The resulting
guarantees can be stated as follows.

\begin{theorem}[M-ellipsoid generator, informal]
  \label{thm:M-gen-informal}
  There is a polynomial-time randomized algorithm that with high
  probability computes an M-ellipsoid $E$ of a given $n$-dimensional
  convex body $K$.\footnote{We thank Bo'az Klartag for suggesting to
    us that the techniques in~\cite{K06} could be used to
    algorithmically construct an M-ellipsoid.}
\end{theorem}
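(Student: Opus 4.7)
The plan is to algorithmically implement the proof of M-ellipsoid existence due to Klartag~\cite{K06}, which does not rely on the slicing conjecture. Klartag's key construction takes a convex body $K \subseteq \R^{n}$ and produces a nearby convex body $T$ (close to $K$ in Banach--Mazur distance) with a bounded isotropic constant. For any such $T$, it is classical that the scaled inertia ellipsoid $E(T) = c \cdot \cov(U_{T})^{1/2}(B_{2}^{n})$, where $U_{T}$ denotes the uniform distribution on $T$, is an M-ellipsoid of $T$; since $T$ is Banach--Mazur close to $K$, the same ellipsoid (possibly rescaled by an $O(1)$ factor) is also an M-ellipsoid of $K$. Thus the algorithmic task splits into constructing $T$ and estimating $\cov(U_{T})$.

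For the first step, Klartag's $T$ is defined by a sequence of concrete geometric operations---convolutions or averages---applied to $K$. I would show that each such operation is implementable as a randomized polynomial-time (approximate) membership oracle built on top of the given oracle for $K$, so that the composed oracle defines a membership oracle for $T$ up to small error. For the second step, I would feed this oracle to a standard polynomial-time log-concave sampler (e.g., the ball walk or hit-and-run of Dyer--Frieze--Kannan~\cite{DyerFK91} and its log-concave extensions), first bootstrapping $T$ into approximately isotropic position via the usual alternation between sampling and updating a linear transformation of the body.

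Given $\poly(n)$ approximately independent samples from the (near-isotropic) $U_{T}$, I would form the empirical covariance $\hat{\Sigma}$ and invoke matrix concentration for log-concave distributions (Rudelson's bound and subsequent refinements) to conclude that $\hat{\Sigma} = (1 \pm 1/\poly(n))\, \Sigma$ in the operator norm with high probability. The output ellipsoid is $E = c \cdot \hat{\Sigma}^{1/2}(B_{2}^{n})$, with $c$ chosen as prescribed by~\cite{K06}. Because the defining covering-number inequalities $N(K,E), N(E,K) \leq 2^{O(n)}$ admit exponential slack, polynomial-precision estimates at every stage are more than sufficient to preserve them.

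The main obstacle is making Klartag's perturbation step quantitatively effective: his argument is written asymptotically, and one must explicitly track how the isotropic constant of $T$ and the Banach--Mazur distance between $K$ and $T$ degrade under the polynomial-precision approximations introduced by the simulated oracle. A secondary difficulty is verifying that the standard convex-body rounding and sampling machinery applies without modification to $T$ (which is defined only implicitly, via the composed oracle), but this is mostly a bookkeeping exercise once the approximate oracle for $T$ is in hand.
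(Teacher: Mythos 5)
Your high-level plan---algorithmize Klartag's proof by sampling and covariance estimation---is the right starting point, but you take an unnecessary and somewhat hazardous detour through explicitly constructing Klartag's approximating body $T = K'$ and building a membership oracle for it. Klartag's $K'$ is \emph{not} a simple convolution or average of $K$; it is obtained by applying Ball's construction (which converts a logconcave density to a convex body via a radial integral) to an exponential reweighting $f_s(x) = e^{\langle s,x\rangle}$ of the uniform density on $K$, where $s$ is drawn from a scaled polar of the symmetrized body. Building an approximate membership oracle for this Ball body and then sampling from it uniformly is considerably more delicate than you acknowledge, and none of it is needed.

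The paper's algorithm (M-Gen) bypasses $K'$ entirely. It (1) estimates the centroid $b(K)$; (2) samples a single direction $s$ approximately uniformly from $n\left(\conv\set{K-b, b-K}\right)^*$, using the standard polar-oracle construction; (3) estimates the covariance matrix of the density proportional to $e^{\langle s, x\rangle}$ on $K$ itself---a logconcave density with an explicit formula, so existing samplers apply directly; and (4) returns the scaled inertial ellipsoid of that density. The key structural input is Lemma~\ref{lem:iner-to-m}, which shows directly that if $f_s$ has bounded isotropic constant and is not too far from uniform (both of which hold with high probability over $s$, by the quantitative Lemma~\ref{lem:exp-slice} extracted from Klartag's argument), then $\sqrt{n}E_{f_s}$ is already an M-ellipsoid of $K$---no construction of, or sampling from, $K'$ is required. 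Your route and the paper's require comparable sampling and concentration machinery, but the paper's avoids the one step you yourself flagged as the main obstacle (making the $K \mapsto T$ operation oracle-implementable with quantitative error control), replacing it with a one-line density reweighting.
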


\begin{theorem}[M-ellipsoid covering algorithm, informal]
  \label{thm:M-cover-informal}
  Given an ellipsoid $E$ and convex body $K$, there is a deterministic
  $2^{O(n)}$-time algorithm which certifies that $E$ is an M-ellipsoid
  of $K$, and if so returns a covering of $K$ by $2^{O(n)}$ copies of
  $E$.\footnote{Gideon Schechtman suggested a construction of the
    covering using parallelepiped tilings.}
\end{theorem}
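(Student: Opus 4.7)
The plan is to reduce to the case $E = B_2^n$ (the unit Euclidean ball) by a linear change of coordinates, which preserves all relevant covering numbers, and then exploit Schechtman's suggestion of tiling via a parallelepiped inscribed in $E$. Concretely, let $Q = [-1/\sqrt{n},1/\sqrt{n}]^n$: its diameter is $2$, so $Q \subseteq B_2^n$, and the translates $\{Q + \lambda : \lambda \in \Lambda\}$ with $\Lambda = (2/\sqrt{n})\Z^n$ tile $\R^n$. I would enumerate $S = \{\lambda \in \Lambda : (Q+\lambda) \cap K \neq \emptyset\}$ by a breadth-first search, starting from a tile near an approximate centroid of $K$ (computable in $\poly(n)$ time via the ellipsoid method). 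Each BFS step decides $(Q+\lambda) \cap K \neq \emptyset$ by a linear-feasibility query against the separation oracle for $K$ at $\poly(n)$ cost, and neighbors of a tile that meets $K$ are pushed to the queue. The BFS is capped at $2^{Cn}$ tiles for a suitable large constant $C$; if the cap is exceeded, the algorithm rejects, declaring that $E$ is not an M-ellipsoid of $K$. Otherwise $|S|\le 2^{Cn}$, and since $Q + \lambda \subseteq E + \lambda$, the family $\{E + \lambda : \lambda \in S\}$ is a covering of $K$ by $2^{O(n)}$ translates of $E$.

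To certify the reverse covering $N(E,K) \le 2^{O(n)}$ also required by the M-ellipsoid definition, I would carry out a symmetric procedure: construct a parallelepiped $P$ with $P \subseteq K$ (up to translation) and $\vol(P) \ge 2^{-O(n)}\vol(K)$, tile $\R^n$ by its translates $\{P + \mu : \mu \in \Lambda_P\}$ using the lattice $\Lambda_P$ generated by $P$'s edge vectors, and BFS-enumerate $S' = \{\mu \in \Lambda_P : (P+\mu) \cap E \neq \emptyset\}$ under the same cap. The inclusion $P + \mu \subseteq K + \mu$ then exhibits a covering of $E$ by $|S'|$ translates of $K$. If either BFS exceeds the cap, reject; otherwise output the forward covering from the previous step.

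The main obstacle is the constructive step producing a parallelepiped $P$ inscribed in a translate of $K$ with $\vol(P) \ge 2^{-O(n)}\vol(K)$: John's theorem alone yields only a parallelepiped of volume $n^{-O(n)}\vol(K)$, losing a $2^{\Theta(n\log n)}$ factor. To recover the missing slack I would invoke the M-position hypothesis directly: if $E$ is in M-position for $K$ then $\vol(K)$ and $\vol(E)$ agree within a $2^{O(n)}$ factor (otherwise one covering number already exceeds $2^{O(n)}$ by volume counting), and in the normalized coordinates this constrains $K$ enough that an axis-aligned inscribed box of the required volume exists and can be located by linear programming over the separation oracle. Should no such $P$ be found within $2^{O(n)}$ time, the algorithm rejects. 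Overall correctness of the certification is then immediate from the two capped BFS counts, each of which upper-bounds the respective covering number.
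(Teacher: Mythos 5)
Your forward direction --- certifying $N(K,E)\le 2^{O(n)}$ by tiling with a maximum-volume parallelepiped inscribed in $E$ and running a capped breadth-first search over the tiling lattice, issuing a linear-feasibility/separation query at each tile --- is exactly the paper's Build-Cover algorithm (Algorithm~\ref{alg:build-cover}, Theorem~\ref{thm:build-cover}), so that half is fine.

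The reverse direction is where the proposal breaks. You need a space-tiling parallelepiped $P$ inscribed in (a translate of) $K$ with $\vol(P)\ge 2^{-O(n)}\vol(K)$. You correctly observe that John's theorem only gives $n^{-\Theta(n)}\vol(K)$, and then try to recover the slack by appealing to the M-position hypothesis: ``$\vol(K)\approx\vol(E)$ up to $2^{O(n)}$ factors constrains $K$ enough that an axis-aligned inscribed box of the required volume exists.'' This does not follow. Volume comparability with $B_2^n$ in no way implies the existence of an \emph{axis-aligned} inscribed box of near-full volume, and more fundamentally, whether \emph{every} convex body contains \emph{some} parallelepiped of volume $c^n\vol(K)$ for a universal $c>0$ is not a fact you may invoke --- it is closely related to open questions in asymptotic convex geometry, and is certainly not implied by the M-position assumption. (Note also that even if such a box existed, locating it is not a polynomial-size linear program: checking $P\subseteq K$ requires membership queries at $2^n$ vertices, and maximizing the volume is log-concave, not linear.) So the rejection criterion is not sound: the algorithm could reject a valid M-ellipsoid because the inscribed-box search fails, even though $N(E,K)$ is small.

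The paper avoids this entirely by a duality trick. Instead of attempting to inscribe a parallelepiped in $K$ and tile $E$, it invokes \emph{duality of entropy} (Theorem~\ref{thm:dual-entr}, a consequence of the Bourgain--Milman and Rogers--Shephard inequalities): for symmetric $T$,
\[
N(T,K)\;\le\;C^n\,N\bigl((K-K)^*,\,T^*\bigr)
\]
up to a further $C^n$ in the other direction. Since $E^*$ is again an ellipsoid and $(K-K)^*$ is again a convex body with an efficiently computable weak membership oracle, the reverse covering number $N(E,K)$ is certified by a \emph{second} call to the very same Build-Cover routine on the polar pair $\bigl((K-K)^*,E^*\bigr)$. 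This is what Steps 3 and 4 of Algorithm~\ref{alg:M-Ellipsoid} do. The crucial point is that inscribing a maximum-volume parallelepiped in an \emph{ellipsoid} is an exact closed-form construction (Cholesky basis scaled by $1/\sqrt{n}$), so both calls to Build-Cover are of the tractable ``cover body by ellipsoid'' type, and you never need to solve the problematic ``inscribe a large parallelepiped in an arbitrary $K$'' subproblem at all.
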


Combining these two theorems, we get an expected $2^{O(n)}$-time
algorithm that is guaranteed to output an M-ellipsoid and its implied
covering for any given convex body $K$.  It is an interesting open
problem to find a \emph{deterministic} $2^{O(n)}$-time algorithm.  We
note that deterministic algorithms must have complexity
$2^{\Omega(n)}$, since an M-ellipsoid gives a $2^{O(n)}$ approximation
to the volume of $K$, and such an approximation is known to require
$2^{\Omega(n)}$ time when $K$ is specified by an oracle \cite{BF87}.

\vspace{-6pt}
\paragraph{Shortest and Closest Vector Problems.}

Here we outline our deterministic $2^{O(n)}$-time algorithm for SVP in
any norm defined by a symmetric convex body~$K$, given an M-ellipsoid
of $K$.  (Well-centered semi-norms are dealt with similarly.)  For
instance, as noted above the scaled $\ell_{2}$ ball $E_{p}=n^{1/2-1/p}
\cdot B_{2}^{n}$ is an M-ellipsoid for any $\ell_{p}$ ball $K =
B_{p}^{n}$.  Moreover, a good covering of~$B_{p}^{n}$ by $E_{p}$ is
straightforward to obtain: for $p \geq 2$, just one copy of $E_{p}$
works (since $B_{p}^{n} \subseteq E_{p}$), while for $1 \leq p < 2$,
we can cover $B_{p}^{n}$ by a tiling of $E_{p}$'s axis-aligned
inscribed cuboid.

Let $L$ be an $n$-dimensional lattice, and let $\lambda_1 =
\lambda_1(K,L)$ be the length of its shortest vector under
$\length{\cdot}_{K}$.  We can assume by rescaling that $1/2 <
\lambda_{1} \leq 1$, so $K$ contains an SVP solution.  Our algorithm
simply enumerates all nonzero points in $K \cap L$ (using
Theorem~\ref{thm:body-lat-enum-informal}), and outputs one of the
shortest.  For the running time, it suffices to show that $G(K,L) \leq
2^{O(n)}$, which follows by a simple packing argument: for any $x \in
\R^{n}$, copies of $\tfrac14 K$ centered at each point in $(K+x) \cap
L$ are pairwise disjoint (because $\lambda_{1} > 1/2$) and contained
in $\tfrac54 K + x$, so $\abs{(K+x) \cap L} \leq \vol(\tfrac54 K) /
\vol(\tfrac14 K) = 5^{n}$.

For CVP with target point $x$, the strategy is exactly the same as
above, but we use a scaling $dK$ so that $(dK - x) \cap L \neq
\emptyset$ and $(\tfrac{d}{2}K - x) \cap L = \emptyset$ (i.e., $d$ is
a $2$-approximation of the distance from $x$ to $L$).  In this case,
the packing argument gives a bound of $G(dK,L) \leq
(1+2d/\lambda_{1})^{n}$.

In retrospect, the above algorithms can be seen as a derandomization
(and generalization to semi-norms) of the AKS sieve-based algorithms
for exact SVP in general norms, and exact CVP in $\ell_{p}$
norms~\cite{DBLP:conf/stoc/AjtaiKS01,DBLP:conf/coco/AjtaiKS02,DBLP:conf/icalp/BlomerN07,DBLP:conf/fsttcs/ArvindJ08},
with matching running times (up to $2^{O(n)}$ factors).  Specifically,
our algorithms deterministically enumerate all lattice points in a
convex region, rather than repeatedly sampling until all such points
are found with high probability.  However, we do not know whether our
techniques can derandomize the $(1+\epsilon)$-approximate CVP
algorithms
of~\cite{DBLP:conf/coco/AjtaiKS02,DBLP:conf/icalp/BlomerN07} in
asymptotically the same running time.

\vspace{-6pt}
\paragraph{Integer Programming.}

Our algorithm for Integer Programming (finding a point in $K \cap L$,
if it exists) follows the basic outline of all algorithms since that
of Lenstra~\cite{lenstra83:_integ_progr_with_fixed_number_of_variab}.
It begins with two pre-processing steps: one to refine the basis of
the lattice, and the other to find an ellipsoidal approximation of
$K$.  If the ellipsoid volume is sufficiently small compared to the
lattice determinant, then we can directly reduce to a
lower-dimensional problem.  The main step of the algorithm (and
Lenstra's key insight, refined dramatically by
Kannan~\cite{kannan87:_minkow_convex_body_theor_and_integ_progr}) is
to find a direction along which the lattice width of $K$ is small.
Given such a direction, we recurse on the lattice hyperplanes
orthogonal to this direction that intersect $K$, thus reducing the
dimension of the problem by one.

In previous work, a small lattice-width direction was found by
replacing $K$ by an ellipsoid $E$ \emph{containing}~$K$, then solving
SVP in the norm defined by the dual ellipsoid $E^{*}$ on the dual
lattice $L^*$.  Here we instead use our SVP algorithm for general
norms, solving it directly for the norm induced by $(K-K)^*$ on
$L^{*}$.  This refinement allows us to use the best-known bounds on
$f^{*}(n)$ (from the flatness theorem) for the number of hyperplanes
on which we have to recurse.

\subsection{Organization}
\label{sec:organization}

The remainder of the paper is organized as follows.  In
Section~\ref{sec:basic-concepts} we recall basic concepts from convex
geometry that are needed to understand our M-ellipsoid algorithms.  In
Section~\ref{sec:m-ellipsoid-covering} we give the M-ellipsoid
construction (formalizing Theorems~\ref{thm:M-gen-informal}
and~\ref{thm:M-cover-informal}).  In Section~\ref{sec:lattice-algs} we
formalize our enumeration technique
(Theorem~\ref{thm:body-lat-enum-informal}) and apply it to give
algorithms for SVP, CVP and IP.  Appendix~\ref{sec:m-ellipsoid-proofs}
contains the proofs of correctness for our M-ellipsoid construction,
and Appendix~\ref{sec:additional-background} contains supporting
technical material.

\section{Convex Geometry Background}
\label{sec:basic-concepts}

\paragraph{Convex bodies.}

$K \subseteq \R^n$ is a convex body if $K$ is convex, compact and
full-dimensional.  We say that a body is centrally symmetric, or
$0$-symmetric, if $K = -K$.

For sets $A,B \in \R^n$ we define the Minkowski sum of $A$ and $B$ as
\begin{equation}
  A + B = \set{x + y: x \in A, y \in B}.
\end{equation}
For a vector $t \in \R^n$, we define $t + A = \set{t} + A$ for
notational convenience.

Let $K \subseteq \R^n$ be a convex body such that $0 \in K$. We define
the gauge function, or Minkowski functional, of $K$ as
\begin{equation}
  \length{x}_K = \inf \set{r \geq 0: x \in rK}, \quad x \in \R^n.
\end{equation}

From classical convex analysis, we have that the functional
$\length{\cdot}_K$ 
is a semi-norm, i.e., it satisfies the triangle inequality and
$\length{tx}_{K} = t\length{x}_{K}$ for $t \geq 0$, $x \in \R^{n}$.
If $K$ is centrally symmetric, then $\length{.}_K$ is a norm in the
usual sense.

The \emph{polar} (or \emph{dual}) body $K^{*}$ is defined as
\begin{equation}
  K^* = \set{x \in \R^n: \forall y \in K, \; \pr{x}{y} \leq 1}.
\end{equation}
A basic result in convex geometry is that $K^*$ is convex and that
$(K^{*})^{*} = K$.

Define the $\ell_{p}$ norm on $\R^{n}$ as
\begin{equation}
  \length{x}_p = \left(\sum_{i=1}^n |x_i|^p \right)^{\nicefrac{1}{p}}.
\end{equation}
For convenience we write $\length{x}$ for $\length{x}_{2}$.  Let
$B_p^n = \set{x \in \R^n: \length{x}_p \leq 1}$ denote the $\ell_p$
ball in $\R^n$.  Note from our definitions that $\length{x}_{B_p^n} =
\length{x}_p$ for $x \in \R^n$.

For a positive definite matrix $A \in \R^{n \times n}$, we define the
inner product with respect to $A$ as
\begin{equation}
  \pr{x}{y}_A = x^t A y \quad x,y \in \R^n.
  \label{def:ip}
\end{equation}
We define the norm generated by $A$ as $\length{x}_A =
\sqrt{\pr{x}{x}_A} = \sqrt{x^t A x}$.  For a vector $a \in \R^{n}$, we
define the ellipsoid $E(A,a) = \set{x \in \R^n: \length{x-a}_A \leq
  1}$.  For convenience we shall let $E(A) = E(A,0)$.  Note that with
our notation, $\length{x}_A = \length{x}_{E(A)}$.  The volume of an
ellipsoid $E(A,a)$ is given by the formula
\begin{equation}
  \vol(E(A,a)) = \vol(E(A)) = \vol(B_2^n) \cdot \sqrt{\det(A^{-1})}.
  \label{eq:ell-vol-form}
\end{equation}
Lastly, an elementary computation gives the useful fact that $E(A)^* =
E(A^{-1})$.

We define the \emph{centroid} (or \emph{barycenter}) $b(K) \in \R^{n}$
and \emph{covariance} matrix $\cov(K) \in \R^{n \times n}$ as
\begin{align*}
  b(K) &= \int_{K} \frac{x\, dx}{\vol(K)} & \cov(K) &= \int_K
  (x-b(K))(x-b(K))^{t} \frac{dx}{\vol(K)}.
\end{align*}
We note that $\cov(K)$ is always positive definite and symmetric.  The
\emph{inertial ellipsoid} of $K$ is defined as $E_{K} = E(\cov(K)^{-1})$.
The \emph{isotropic constant} of $K$ is
\begin{equation}
  L_K = \det(\cov(K))^{\nicefrac{1}{2n}}/\vol(K)^{\nicefrac{1}{n}}.
\end{equation}

A major open conjecture in convex geometry is the following:

\begin{conjecture}[Slicing Conjecture~\cite{B86}]
  \label{conj:slicing}
  There exists an absolute constant $C > 0$, such that $L_K \leq C$
  for all $n \geq 0$ and any convex body $K \subseteq \R^n$.
\end{conjecture}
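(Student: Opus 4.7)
The slicing conjecture is one of the most famous open problems in asymptotic convex geometry, so rather than claim a complete proof I will outline the plan I would pursue to obtain the sharpest possible upper bound on $L_K$. The first step is a standard reduction to isotropic position: by replacing $K$ with an appropriate volume-preserving affine image, I may assume $b(K) = 0$ and $\cov(K)$ is a scalar multiple of the identity. Since $L_K$ is affine-invariant, this places the problem inside the well-studied class of isotropic log-concave probability measures, and reduces proving the conjecture to showing that the density at the origin of a normalized isotropic log-concave measure is bounded above by $C^n$ for an absolute constant $C$.

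The next step is to exploit the equivalence, due to Eldan and Klartag, between bounding $L_K$ and bounding the \emph{thin-shell width} of isotropic log-concave random vectors: one has $L_K \lesssim \sigma_n$, where $\sigma_n$ is the smallest constant such that $\Var(\length{X}) \leq \sigma_n^2$ for every isotropic log-concave random vector $X \in \R^n$. This converts the geometric question into an analytic one about concentration of $\length{X}$ around $\sqrt{n}$, which is amenable to Poincar\'e-type inequalities, heat-flow arguments, and semigroup techniques on log-concave measures.

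The main step, and what I expect to be the fundamental obstacle, is to bound $\sigma_n$ by a universal constant. Here I would deploy Eldan's \emph{stochastic localization}: drive the target measure by a matrix-valued SDE and track the covariance process $A_t$ of an associated family of Gaussian-tilted log-concave measures, aiming to show that $\length{A_t}_{op}$ stays $O(1)$ up to times $t = \Theta(1)$; combined with a bootstrap on higher moments of $\length{X}$, such control yields a constant thin-shell bound and hence a constant slicing bound. Known instantiations of this program (Eldan, Lee--Vempala, Chen, Klartag--Lehec) currently give only $\sigma_n = n^{o(1)}$, and hence $L_K = n^{o(1)}$. Closing the remaining gap to a dimension-free bound is exactly what makes the conjecture open; any honest plan has to acknowledge that a qualitatively new ingredient beyond current stochastic-localization technology appears to be required, and I have none to propose here --- the authors of this paper likewise state the assertion as a conjecture and make no attempt at proof, using it only for motivation.
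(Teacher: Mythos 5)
You are correct that the statement is an open conjecture and that the paper does not prove it --- it is stated as Conjecture~\ref{conj:slicing} with a citation to Bourgain~\cite{B86}, and used only as motivation (e.g., under the slicing conjecture a $\sqrt{n}$ scaling of the inertial ellipsoid of $K$ is directly an M-ellipsoid, which would simplify the M-Gen algorithm). Your response is therefore the right one: recognizing that no proof can be given, surveying the known reduction to isotropic log-concave measures, the Eldan--Klartag thin-shell equivalence, and the stochastic-localization program that gives the best current bounds. Your technical summary is accurate, and you correctly note that what the paper \emph{does} rely on is Klartag's theorem (stated here as Theorem~\ref{thm:eps-slicing}) that every convex body is $(1+\eps)$-close to one with isotropic constant $O(1/\sqrt{\eps})$, which is an unconditional substitute for the conjecture and is what actually drives the M-ellipsoid construction. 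In short: there is nothing to prove, you said so, and the paper agrees.
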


The original bound computed by Bourgain~\cite{B86} was $L_K =
O(n^{1/4} \log n)$.  This has since been improved by
Klartag~\cite{K06} to $L_k = O(n^{1/4})$. In addition, the conjecture has been
verified for many classes of bodies including the $\ell_p$ norm balls.

The above concepts (centroid, covariance, isotropic constant, inertial
ellipsoid) all generalize easily to logconcave functions in lieu of
convex bodies; see Appendix~\ref{sec:additional-background} for
details.

\paragraph{Computational model.} All our algorithms will work with
convex bodies and norms presented by oracles in the standard way. The
complexity of our algorithms will be measured by the number of
arithmetic operations as well as the number of calls to the
oracle. See Appendix~\ref{sec:additional-background} for a more
detailed description of the kinds of oracles we use.


\section{Computing M-Ellipsoids and Coverings}
\label{sec:m-ellipsoid-covering}

An M-ellipsoid of a convex body $K$ is an ellipsoid $E$ with the
property that at most $2^{O(n)}$ translated copies of $E$ are
sufficient to cover all of $K$, and at most $2^{O(n)}$ copies of $K$
are sufficient to cover $E$.  More precisely, for any two subsets $A,B
\in \R^n$, define the covering number
\begin{equation}
  N(A,B) = \min \set{|\Lambda|: \Lambda \subseteq \R^n, A \subseteq B + \Lambda}.
  \label{def:cov-num}
\end{equation}
Hence $N(A,B)$ is the minimum number of translates of $B$ needed to
cover $A$.  The following theorem was first proved for symmetric
bodies by Milman~\cite{M86} and extended by Milman and
Pajor~\cite{MP00} to the general case.

\begin{theorem}[\cite{MP00}]
  \label{thm:m-ell-exist}
  There exists an absolute constant $C > 0$, such that for all $n \geq
  1$ and any convex body $K \subseteq \R^n$, there exists an ellipsoid
  $E$ satisfying
  \begin{equation}
    \label{eq:m-ell-exist}
    N(K,E) \cdot N(E,K) \leq C^n.
  \end{equation}
\end{theorem}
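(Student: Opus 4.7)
The plan is to prove existence of an M-ellipsoid in two stages: the centrally symmetric case — Milman's classical M-position theorem — and then an extension to general convex bodies via the difference body, following Milman-Pajor.

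For centrally symmetric $C \subseteq \R^n$, the goal is to find a volume-preserving linear map $T$ such that the Euclidean ball $B$ with $\vol(B) = \vol(TC)$ satisfies $N(TC, B) \cdot N(B, TC) \leq C^n$; then $E := T^{-1} B$ is an M-ellipsoid for $C$. I would use the standard volume-based covering bound $N(A, D) \leq \vol(A + \tfrac{1}{2} D) / \vol(\tfrac{1}{2} D)$, valid for symmetric $D$, to reduce the problem to Milman's \emph{reverse Brunn-Minkowski inequality}: there exists a choice of $T$ — the \emph{M-position} — for which
\[
  \vol(TC + B)^{1/n} \leq O(1) \cdot \bigl( \vol(TC)^{1/n} + \vol(B)^{1/n} \bigr),
\]
together with the symmetric statement obtained by swapping the roles of $TC$ and $B$. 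Plugging the first into the covering bound yields $N(TC, B) \leq 2^{O(n)}$, and the second gives $N(B, TC) \leq 2^{O(n)}$.

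The main obstacle — and the heart of the theorem — is establishing that such an M-position exists. I would invoke one of the standard deep routes: Pisier's K-convexity theorem, which bounds the Rademacher projection on $L^2$ and allows one to extract $T$ by minimizing an appropriate $MM^*$ functional over $SL_n(\R)$; or Milman's original quotient-of-subspace approach via concentration of measure on the sphere. Either route produces the absolute constant independent of $n$.

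To extend to a general convex body $K$, pass to the difference body $C := K - K$, which is centrally symmetric, and take an M-ellipsoid $E$ for $C$ from the previous step. After translating so that $0 \in K$, the containment $K \subseteq C$ gives $N(K, E) \leq N(C, E) \leq 2^{O(n)}$ immediately. The reverse covering $N(E, K)$ is the delicate step: one combines the Rogers-Shephard inequality $\vol(C) \leq \binom{2n}{n} \vol(K) \leq 4^n \vol(K)$ with a careful covering argument (the full Milman-Pajor technique, which leverages that $K$ and $C$ have comparable volume up to this exponential factor) to cover each translate of $C$ by $2^{O(n)}$ translates of $K$. This yields $N(E, K) \leq N(E, C) \cdot 2^{O(n)} \leq 2^{O(n)}$, and all accumulated exponential factors are absorbed into the final absolute constant.
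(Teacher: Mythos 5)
Your sketch reconstructs the classical Milman and Milman--Pajor argument that the paper cites, and at that level of detail it is essentially correct; the paper itself does not reprove the theorem but simply quotes it from \cite{MP00}. What the paper \emph{does} develop internally, in Section~\ref{sec:m-ellipsoid-covering} and Appendix~\ref{sec:m-ellipsoid-proofs}, is a genuinely different route to M-ellipsoid existence due to Klartag~\cite{K06}, chosen because it is constructive. There, the M-ellipsoid is obtained via exponential reweightings of the uniform density on $K$: one samples a direction $s$ from (roughly) $n(\conv\set{K,-K})^*$, forms $f_s(x) = e^{\inner{s,x}}$ on $K$, and shows that $\sqrt{n}$ times the inertial ellipsoid of $f_s$ is an M-ellipsoid with high probability (Lemmas~\ref{lem:iner-to-m} and~\ref{lem:exp-slice}). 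This bypasses the M-position and reverse Brunn--Minkowski machinery entirely: in place of Pisier's K-convexity theorem or the quotient-of-subspace theorem, the heavy lifting is done by a moment bound $\E[L_{f_X}^{2n}] \leq C^n$ on the isotropic constant of the random reweighting, whose proof rests on the Bourgain--Milman volume-product inequality. The trade-off is the point of the paper: your route is the shortest path to pure existence, but the M-position is not efficiently computable, whereas Klartag's route yields Algorithm~\ref{alg:M-Gen}, a randomized polynomial-time construction underlying Theorem~\ref{thm:M-gen-informal}.

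One caution on your extension to general bodies. Saying ``translating so that $0 \in K$'' and invoking comparable volume of $K$ and $K-K$ is not quite enough to bound $N(K-K,K)$, because the volume-ratio covering bound you want needs a centrally symmetric body and $K$ is not symmetric. The standard fix, reproduced in the paper's proof of Theorem~\ref{thm:dual-entr}, is to translate so that $b(K)=0$, pass through the symmetric body $K \cap -K \subseteq K \subseteq K-K$, and combine $\vol(K) \leq 2^n \vol(K \cap -K)$ (Theorem~\ref{thm:symmetrize}) with Rogers--Shephard to get $N(K-K,K) \leq N(K-K,K\cap -K) \leq 24^n$. With that detail supplied, your argument is complete.
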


\begin{definition}[M-ellipsoid]
  Let $K \subseteq \R^n$ be a convex body.  If $E$ is an ellipsoid
  satisfying Equation~\eqref{eq:m-ell-exist} (for some particular
  fixed $C$) with respect to $K$, then we say that $E$ is an M-ellipsoid of $K$.
  \label{def:m-ell}
\end{definition}

\noindent There are many equivalent ways of understanding the
M-ellipsoid; here we list a few (proofs of many of these equivalences
can be found in~\cite{MP00}).
\begin{theorem} 
  \label{thm:m-ellipsoid-equiv}
  Let $K \subseteq \R^n$ be a convex body with $b(K) = 0$ (centroid at
  the origin), and let $E \subseteq \R^n$ be an origin-centered
  ellipsoid. Then the following conditions are equivalent, where the
  absolute constant $C$ may vary from line to line:
\begin{enumerate}[itemsep=0pt]
\item $N(K,E) \cdot N(E,K) \leq C^n$.
\item $\vol(K+E) \leq C^n \cdot \min \set{ \vol(E), \vol(K) }$.
\item $\sup_{t \in \R^n} \vol(K \cap (t+E)) \geq C^{-n} \cdot \max
  \set{\vol(E), \vol(K)}$. \label{item:m-ellipsoid-volume}
\item $E^{*}$ is an M-ellipsoid of $K^{*}$.
\end{enumerate}
\end{theorem}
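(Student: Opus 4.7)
The plan is to prove the cycle $(1)\Leftrightarrow(2)\Leftrightarrow(3)$ by direct volumetric arguments, together with $(1)\Leftrightarrow(4)$ via entropy duality; throughout, the absolute constant $C>0$ is allowed to vary from line to line, absorbing factors of $2^{O(n)}$. Three ingredients do the bulk of the work: the classical volumetric packing/covering sandwich for a symmetric body; the Milman--Pajor estimate $\vol(K \cap -K) \geq 2^{-n}\vol(K)$, which lets us replace the non-symmetric $K$ by the comparable symmetric body $K_0 := K \cap -K$; and the Rogers--Shephard-type inequality
$$\vol(K+E)\cdot\sup_{t\in\R^n}\vol(K\cap(t+E)) \;\leq\; \binom{2n}{n}\vol(K)\vol(E).$$

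For $(1)\Leftrightarrow(2)$ I would run the standard argument with a maximal $E$-packing inside $K$: disjointness gives $\vol(K+E) \leq 2^{n} N(K,E)\vol(E)$, and maximality gives $N(K,E) \leq 2^{n}\vol(K+E)/\vol(E)$. The roles-reversed estimate is subtler because $K$ is not symmetric, so I would pack inside $E$ with translates of $K_0$ rather than $K$; since $\vol(K_0) \geq 2^{-n}\vol(K)$ and $K_0 \subseteq K$, this yields $N(E,K) \leq N(E,K_0) \leq 2^{n}\vol(E+K_0)/\vol(K_0) \leq 4^{n}\vol(K+E)/\vol(K)$. Combining, $\vol(K+E) \leq 2^{n}\min\{N(K,E)\vol(E),\,N(E,K)\vol(K)\}$, so (1) forces $\vol(K+E) \leq C^{n}\min\{\vol(K),\vol(E)\}$. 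Conversely, under (2) each individual covering number is at most $C^n$ by the same sandwich, so their product is at most $C^{2n}$.

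For $(2)\Leftrightarrow(3)$, the forward direction is a one-line Fubini computation: $\int_{\R^{n}}\vol(K\cap(t+E))\,dt = \vol(K)\vol(E)$, with the integrand supported on $K-E = K+E$ (since $E=-E$), so the supremum over $t$ is at least the average $\vol(K)\vol(E)/\vol(K+E)$; plugging in (2) yields (3). The reverse direction uses the displayed Rogers--Shephard-type inequality: rearranged, it reads $\vol(K+E) \leq \binom{2n}{n}\vol(K)\vol(E)/\sup_{t}\vol(K\cap(t+E))$, and substituting the lower bound of (3) gives (2) with a $4^{n}$ blow-up.

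For $(1)\Leftrightarrow(4)$ I would invoke the K\"onig--Milman duality of entropy: for a symmetric convex body $A$ and a centered ellipsoid $B$ in $\R^{n}$, the covering numbers $N(A,B)$ and $N(B^{*},A^{*})$ agree up to a factor of $2^{O(n)}$. Applied after replacing $K$ by $K_0$, and using the Bourgain--Milman inverse-Santal\'o inequality to compare $\vol(K^{*})$ with $\vol(K_0^{*})$, this shows that the product $N(K,E)\cdot N(E,K)$ is preserved, up to a $2^{O(n)}$ factor, under the polarity $(K,E)\mapsto(K^{*},E^{*})$, which is exactly $(1)\Leftrightarrow(4)$. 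This is the step I expect to be the main obstacle: the K\"onig--Milman theorem is itself a nontrivial result of asymptotic convex geometry, and bridging the asymmetric and symmetric statements requires careful bookkeeping with $K_0^{*}$. The cleanest route is to cite Pisier's book for the duality theorem rather than reprove it.
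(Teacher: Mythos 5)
The paper does not give its own proof of this theorem; it points the reader to Milman and Pajor \cite{MP00} for the equivalences, and only proves the supporting lemmas. Your outline is a correct reconstruction, and the cycle $(1)\Leftrightarrow(2)\Leftrightarrow(3)$ is exactly the standard volumetric argument: the two-body Rogers--Shephard inequality $\vol(K-L)\,\vol(K\cap L)\le\binom{2n}{n}\vol(K)\vol(L)$ you invoke for $(3)\Rightarrow(2)$ is a genuine theorem of Rogers and Shephard (1958), and your other ingredients appear in the paper as Lemma~\ref{lem:cov-est} (packing/covering estimates) and Theorem~\ref{thm:symmetrize} (Rogers--Shephard and Milman--Pajor).

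Three small remarks. First, in the $(1)\Leftrightarrow(2)$ paragraph the attributions are reversed: the bound $\vol(K+E)\le 2^nN(K,E)\vol(E)$ comes from the \emph{covering} property (if $K\subseteq\Lambda+E$ then $K+E\subseteq\Lambda+2E$), while it is the combination of disjointness and maximality of an $E/2$-packing that gives $N(K,E)\le 2^n\vol(K+E)/\vol(E)$; the conclusion is unchanged. Second, for $(1)\Leftrightarrow(4)$ your plan is sound, but reaching for the full K\"onig--Milman theorem is heavier than the situation demands, and one must be careful that the general theorem comparing $\log N$ translates into the $2^{O(n)}$-\emph{multiplicative} comparison you actually use; in the $2^{O(n)}$ regime the paper's own Theorem~\ref{thm:dual-entr}, proved directly from Bourgain--Milman, Blaschke--Santal\'o, and Rogers--Shephard, already gives $N(T,K)\asymp N((K-K)^*,T^*)$ up to $2^{O(n)}$, which after the $K_0$ bookkeeping you describe yields $(1)\Leftrightarrow(4)$ cleanly. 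Third, to compare $\vol(K^*)$ with $\vol(K_0^*)$ where $K_0=K\cap -K$, the inverse-Santal\'o detour is unnecessary: since $K_0^*=\conv\{K^*,-K^*\}$ and $0\in K^*$, Rogers--Shephard gives $\vol(K_0^*)\le 2^n\vol(K^*)$ directly, which together with $K^*\subseteq K_0^*$ is already the $2^{O(n)}$ comparison you need.
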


From the above we see that the M-ellipsoid is very robust object, and
in particular is stable under polarity (assuming $K$ is
well-centered).  We will use this fact (or a slight variant of it) in
what follows, to help us certify a candidate M-ellipsoid.

As mentioned in the introduction, an M-ellipsoid for an $\ell_p$ ball
is trivial to compute.  Using condition~\ref{item:m-ellipsoid-volume}
of Theorem~\ref{thm:m-ellipsoid-equiv} above and standard volume
estimates for $\ell_p$ balls, i.e., that $\vol(B_p^n)^{1/n} =
\Theta(n^{-1/p})$, we have the following:

\begin{lemma}
  \label{lem:m-ellipsoid-lp}
  Let $B_p^n$ denote the $n$-dimensional $\ell_p$ ball. Then
  \begin{itemize}[itemsep=0pt]
  \item For $1 \leq p \leq 2$, $n^{\nicefrac{1}{2}-\nicefrac{1}{p}}
    \cdot B_2^n \subseteq B_p^n$ (the largest inscribed ball in
    $B_{p}^{n}$) is an M-ellipsoid for $B_p^n$.
  \item For $p \geq 2$, $n^{\nicefrac{1}{2}-\nicefrac{1}{p}} \cdot
    B_2^n \supseteq B_{p}^{n}$ (the smallest containing ball of
    $B_{p}^{n}$) is an M-ellipsoid for $B_p^n$.
  \end{itemize}
\end{lemma}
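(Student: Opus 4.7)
The plan is to verify condition~\ref{item:m-ellipsoid-volume} of Theorem~\ref{thm:m-ellipsoid-equiv}, namely that the intersection $\vol(B_p^n \cap (t + E))$ is at least $C^{-n} \max\{\vol(E),\vol(B_p^n)\}$ for some translate $t$. In both regimes one body is contained in the other, so I will simply take $t = 0$ and verify that the two volumes already differ by at most a factor of the form $C^n$.

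First I would establish the containments. For $p \geq 2$, the standard inequality $\length{x}_2 \leq n^{\nicefrac{1}{2}-\nicefrac{1}{p}} \length{x}_p$ (from H\"older applied coordinatewise) yields $B_p^n \subseteq n^{\nicefrac{1}{2}-\nicefrac{1}{p}} B_2^n = E$, so $E \cap B_p^n = B_p^n$. For $1 \leq p \leq 2$, the reverse inequality $\length{x}_p \leq n^{\nicefrac{1}{p}-\nicefrac{1}{2}} \length{x}_2$ (Cauchy--Schwarz / power-mean) gives $E \subseteq B_p^n$, so $E \cap B_p^n = E$. In either case the intersection (at $t=0$) equals the smaller of the two bodies.

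Next I would apply the standard volume estimates. By Stirling, $\vol(B_2^n)^{\nicefrac{1}{n}} = \Theta(n^{-\nicefrac{1}{2}})$, and more generally $\vol(B_p^n)^{\nicefrac{1}{n}} = \Theta(n^{-\nicefrac{1}{p}})$ for all $p \geq 1$ (from the formula $\vol(B_p^n) = (2\Gamma(1+\nicefrac{1}{p}))^n / \Gamma(1 + \nicefrac{n}{p})$). Scaling by $n^{\nicefrac{1}{2}-\nicefrac{1}{p}}$ then gives
\begin{equation*}
  \vol(E)^{\nicefrac{1}{n}} = n^{\nicefrac{1}{2}-\nicefrac{1}{p}} \cdot \vol(B_2^n)^{\nicefrac{1}{n}} = \Theta(n^{-\nicefrac{1}{p}}) = \Theta\bigl(\vol(B_p^n)^{\nicefrac{1}{n}}\bigr).
\end{equation*}
Thus $\vol(E)$ and $\vol(B_p^n)$ agree up to a factor $C^n$ for some absolute constant $C$, so the volume of the intersection (which is the smaller of the two) is at least $C^{-n} \max\{\vol(E),\vol(B_p^n)\}$. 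This verifies condition~\ref{item:m-ellipsoid-volume} of Theorem~\ref{thm:m-ellipsoid-equiv}, so $E$ is an M-ellipsoid of $B_p^n$.

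There is essentially no obstacle here beyond locating the two containments and the standard asymptotic $\vol(B_p^n)^{\nicefrac{1}{n}} = \Theta(n^{-\nicefrac{1}{p}})$; the whole proof is a one-line check once condition~\ref{item:m-ellipsoid-volume} is in hand. The only subtlety worth noting is that $B_p^n$ is centrally symmetric, so the hypothesis $b(B_p^n) = 0$ required by Theorem~\ref{thm:m-ellipsoid-equiv} is automatic.
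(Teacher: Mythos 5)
Your proof is correct and follows exactly the route the paper indicates in the sentence immediately preceding Lemma~\ref{lem:m-ellipsoid-lp}: verify condition~\ref{item:m-ellipsoid-volume} of Theorem~\ref{thm:m-ellipsoid-equiv} at $t=0$, using the containment of one body in the other and the volume asymptotic $\vol(B_p^n)^{\nicefrac{1}{n}} = \Theta(n^{-\nicefrac{1}{p}})$. Nothing to add.
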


For general convex bodies, the proofs of existence of an M-ellipsoid
in \cite{M86} and \cite{MP00} are non-constructive.  It is worth
noting, however, that under the {\em slicing conjecture} (also known
as the {\em hyperplane conjecture}), a $\sqrt{n}$ scaling of $K$'s
inertial ellipsoid is an M-ellipsoid --- indeed, this is an equivalent
form of the slicing conjecture.  For many norms, including $\ell_p$,
absolutely symmetric norms (where the norm is preserved under
coordinate sign flips), and other classes, the slicing conjecture has
been proved.  Therefore, for such norms, an M-ellipsoid computation is
straightforward: using random walk techniques, estimate the covariance
matrix $\cov(K)$ of $K$, the unit ball of the norm, and return a
$\sqrt{n}$ scaling of $K$'s inertial ellipsoid.

In the rest of this section, we describe how to generate an
M-ellipsoid in general, without directly relying on the slicing
conjecture, with good probability in probabilistic polynomial time.
Moreover, we show how to certify that an ellipsoid is an M-ellipsoid
in deterministic $2^{O(n)}$ time.  A by-product of the certification
is a covering of the target body by at most $2^{O(n)}$ translates of
the candidate M-ellipsoid.  Such a covering will be used by all the
lattice algorithms in this paper.

Proofs for all the theorems in this section can be found in
Appendix~\ref{sec:m-ellipsoid-proofs}.

\subsection{The Main Algorithm}
\label{sec:main-algorithm}

The main result of this section is Algorithm~\ref{alg:M-Ellipsoid}
(M-Ellipsoid), whose correctness is proved in
Theorem~\ref{thm:m-ellipsoid}.  The algorithm uses two main
subroutines.  The first, M-Gen, described in
Section~\ref{sec:generate-candidate-m} below, produces a candidate
ellipsoid that is an M-ellipsoid with good probability.  The second,
Build-Cover, described in Section~\ref{sec:building-covering}, is used
to check that both $N(K,E), N((K-K)^*,E^{*}) = 2^{O(n)}$ by
constructing explicit coverings (if possible).  Because $N(E,K)
\approx N((K-K)^{*},E^{*})$ (up to $2^{\Theta(n)}$ factors) by the
duality of entropy (Theorem~\ref{thm:dual-entr}), such coverings
suffice to prove that $E$ is an M-ellipsoid for $K$.



\begin{algorithm}
  \caption{M-Ellipsoid: Generate a guaranteed M-ellipsoid and its
    implied covering.}
  \label{alg:M-Ellipsoid}
  \begin{algorithmic}[1]
    \REQUIRE A weak membership oracle $O_K$ for a $(0,r,R)$-centered
    convex body $K$.
    
    \ENSURE An M-ellipsoid $E$ of $K$, and a covering of $K$ by
    $2^{O(n)}$ copies of $E$.

    \STATE Approximate the centroid of $K$ using algorithm
    Estimate-Centroid (Lemma \ref{lem:estimate-centroid}).  If
    Estimate-Centroid fails, restart; otherwise, let $b$ denote
    returned estimate for $b(K)$.

    \STATE Generate a candidate M-ellipsoid $E$ of $K$ using
    Algorithm~\ref{alg:M-Gen} (M-Gen) on $K-b$.

    \STATE Check if $N(K,E) > (13e)^n$ using
    Algorithm~\ref{alg:build-cover} (Build-Cover).  If yes, restart;
    otherwise, let $T$ denote the returned covering of $K$ by $E$.

    \STATE Check if $N((K-K)^*,E^*) > (25 e \cdot 13)^n$ using
    Algorithm~\ref{alg:build-cover} (Build-Cover).  If yes, restart;
    otherwise, return $(E,T)$.
  \end{algorithmic}
\end{algorithm}

\begin{theorem}[Correctness of M-Ellipsoid]
  \label{thm:m-ellipsoid}
  For large enough $n$, Algorithm~\ref{alg:M-Ellipsoid} (M-Ellipsoid)
  outputs an ellipsoid $E$ satisfying
  \begin{equation}
    N(K,E) \leq \left(\sqrt{8\pi e} \cdot 13 e\right)^n
    \quad \text{and} \quad N(E,K) \leq \left(\sqrt{8\pi e} \cdot 25e 
      \cdot 13 \cdot 289 \right)^n
  \end{equation}
  along with a set $T \subseteq \Q^n$, $|T| \leq \left(\sqrt{8\pi e}
    \cdot 13 e\right)^n$ such that $K \subseteq T + E$, in expected
  time $\left(\sqrt{8 \pi e} \cdot 25 e \cdot 13 \right)^n \cdot
  \poly(n, \log(\tfrac{R}{r}))$.
\end{theorem}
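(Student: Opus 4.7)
The plan is to analyze each iteration of Algorithm~\ref{alg:M-Ellipsoid} in isolation, showing (i) that one iteration runs in $\bigl(\sqrt{8\pi e}\cdot 25e\cdot 13\bigr)^n \cdot \poly(n,\log(R/r))$ time, (ii) that the probability of terminating (as opposed to restarting) is $\Omega(1)$, and (iii) that whenever the algorithm does output a pair $(E,T)$, this pair satisfies the claimed covering bounds. Combining (i) and (ii) via the usual geometric-distribution argument yields the claimed expected running time, and (iii) yields correctness.

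For the per-iteration cost, Estimate-Centroid runs in $\poly(n,\log(R/r))$ time by Lemma~\ref{lem:estimate-centroid}, M-Gen runs in polynomial time by the analysis in Section~\ref{sec:generate-candidate-m}, and the dominant cost lies in the two Build-Cover calls, each of which runs in time proportional to the attempted threshold times a constant-to-the-$n$-power factor. For correctness on termination, the first check directly yields $N(K,E) \leq \bigl(\sqrt{8\pi e}\cdot 13e\bigr)^n$, where the extra $\sqrt{8\pi e}$ absorbs the approximation factor intrinsic to Build-Cover (and $T$ is precisely the cover it outputs). The second, less direct bound on $N(E,K)$ is where the bulk of the argument lies, and I would handle it in two reductions.

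The first reduction is to pass from the checked quantity $N((K-K)^*, E^*) \leq (25e\cdot 13)^n$ to a bound $N(E, \tfrac{1}{2}(K-K)) \leq \bigl(\sqrt{8\pi e}\cdot 25e\cdot 13\bigr)^n$ via duality of entropy (Theorem~\ref{thm:dual-entr}), which is applicable because both $\tfrac{1}{2}(K-K)$ and $E$ are centrally symmetric. The second reduction compares $\tfrac{1}{2}(K-K)$ with $K$ itself: since the estimated centroid $b$ is close to $b(K)$, the shifted body $K-b$ has its centroid essentially at the origin, and a Milman-Pajor/Rogers-Shephard style argument yields $N(\tfrac{1}{2}(K-K), K) \leq 289^n$. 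Chaining $N(E,K) \leq N(E, \tfrac{1}{2}(K-K))\cdot N(\tfrac{1}{2}(K-K), K)$ then produces the stated bound. Tracking the explicit numerical constants through these two reductions, rather than just showing $N(E,K) = 2^{O(n)}$, is the main technical obstacle; I would need to be careful about how the centroid approximation error propagates and how the scaling factor from $\tfrac{1}{2}(K-K)$ to $K$ (together with any constant from the symmetric-body version of duality of entropy) combines with the $\sqrt{8\pi e}$ and $13e$ factors already present.

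Finally, for the success probability, I would argue that Estimate-Centroid returns an approximation $b$ with $\length{b - b(K)}$ small with high probability, and that conditional on this, M-Gen returns a true M-ellipsoid $E$ of $K-b$ with $\Omega(1)$ probability (by the construction in Section~\ref{sec:generate-candidate-m} based on Klartag's technique~\cite{K06}). Theorem~\ref{thm:m-ellipsoid-equiv}(4) then guarantees $E^*$ is an M-ellipsoid for $(K-b)^*$, and since $b \approx b(K)$ the bodies $(K-b)^*$ and $(K-K)^*$ differ only by a bounded-factor scaling, so both Build-Cover checks pass with room to spare under the above-threshold constants. Hence the iteration terminates with probability $\Omega(1)$, the expected number of iterations is $O(1)$, and the overall expected runtime matches the claim.
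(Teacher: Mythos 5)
Your overall plan---bounding the per-iteration cost, showing the restart probability is bounded away from $1$, and proving correctness of any output---matches the paper's proof. Where it diverges, and where the gap lies, is in the chain of reductions establishing $N(E,K) \leq (\sqrt{8\pi e}\cdot 25e\cdot 13\cdot 289)^n$. First, passing the step-4 test does not certify $N((K-K)^*,E^*)\leq(25e\cdot 13)^n$: Build-Cover with threshold $H$ either declares $N>H^n$ or returns a cover of size at most $(\sqrt{8\pi e}\,H)^n$, so all you learn is $N((K-K)^*,E^*)\leq(\sqrt{8\pi e}\cdot 25e\cdot 13)^n$. That $\sqrt{8\pi e}$ is Build-Cover's approximation slack, not a duality-of-entropy factor. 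Second, you have swapped the provenance of the remaining two factors: Theorem~\ref{thm:dual-entr} gives $N(E,K)\leq((1+o(1))288)^n\,N((K-K)^*,E^*)$, not a $(\sqrt{8\pi e})^n\approx 8.3^n$ blow-up, while the Rogers--Shephard/Milman--Pajor comparison gives $N(\tfrac12(K-K),K)\leq N(K-K,K)\leq 24^n$, not $289^n$. If you actually chain your decomposition $N(E,K)\leq N(E,\tfrac12(K-K))\cdot N(\tfrac12(K-K),K)$ with the correct constants you overshoot by an extra $24^n$, because the $288$-type constant in Theorem~\ref{thm:dual-entr} already subsumes the $K$-to-$(K-K)$ comparison; your detour through $\tfrac12(K-K)$ double-counts it. The paper applies Theorem~\ref{thm:dual-entr} once, directly with $T=E$: $N(E,K)\leq 289^n N((K-K)^*,E^*)\leq 289^n(\sqrt{8\pi e}\cdot 25e\cdot 13)^n$.

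There is a second, milder gap in the success-probability argument. You claim that since $b\approx b(K)$ the bodies $(K-b)^*$ and $(K-K)^*$ ``differ only by a bounded-factor scaling,'' so that step 4 passes given M-Gen's success. For a general well-centered body this scaling is not $O(1)$: for a simplex centered at its centroid, $-K\subseteq nK$ is tight, hence $K-K\subseteq(n+1)K$ cannot be improved and the polars differ by a $\Theta(n)$ scaling. What the paper uses instead is the second inequality of Theorem~\ref{thm:dual-entr}, $N((K-K)^*,E^*)\leq(12(1+o(1)))^n N(E,K)$, combined with M-Gen's guarantee $N(E,K)\leq(25e)^n$, to conclude $N((K-K)^*,E^*)\leq(25e\cdot 13)^n$ and hence that step 4 passes. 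A covering-number inequality is the right tool here; a constant-factor containment between $(K-b)^*$ and $(K-K)^*$ is simply not available for non-symmetric $K$.
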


\subsection{Generating a Candidate M-Ellipsoid}
\label{sec:generate-candidate-m}

Our algorithm for generating a candidate M-ellipsoid is based on a
constructive proof of Theorem~\ref{thm:m-ell-exist} by
Klartag~\cite{K06}, who suggested to us the idea of using these
techniques to build an M-ellipsoid algorithmically.  The main theorem
of~\cite{K06}, reproduced below, does not explicitly refer to
M-ellipsoids; instead, it shows that for every convex body $K$, there
is another convex body $K'$ that sandwiches $K$ between two small
scalings and satisfies the slicing conjecture.

\begin{theorem}[\cite{K06}]
  \label{thm:eps-slicing}
  Let $K \subseteq \R^n$ be a convex body. Then for every real $\eps
  \in (0,1)$, there exists a convex body $K' \subseteq \R^n$ such that
  \begin{equation}
    d(K,K') = \inf \set{\frac{b}{a}: \exists ~t \in \R^n \st aK' \subseteq K-t \subseteq bK'} \leq 1+\eps \quad \text{ and
    } \quad L_{K'} \leq \frac{c}{\sqrt{\eps}}.
  \end{equation}
  where $c > 0$ is an absolute constant and $L_{K'}$ is the isotropic
  constant of $K'$.
\end{theorem}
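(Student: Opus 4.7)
The plan is to follow Klartag's constructive argument from~\cite{K06}. The statement is a trade-off between Banach--Mazur distance and the isotropic constant, and the natural strategy is a regularization/smoothing scheme tunable by a single parameter~$\eps$.

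First I would normalize $K$ to isotropic position: translate so that $b(K) = 0$ and apply a linear map so that $\cov(K)$ is a scalar multiple of the identity. Both $d(K, \cdot)$ and $L_K$ are invariant under affine transformations, so this is free and lets all subsequent estimates be expressed in scale-invariant quantities.

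The heart of the proof is to introduce a one-parameter family of log-concave probability densities $\{f_t\}_{t \geq 0}$ obtained by applying a suitable ``flow'' --- morally a Gaussian-convolution / Brownian perturbation of the uniform density $\mathbf{1}_K/\vol(K)$ --- chosen so that $f_t$ remains log-concave for every $t$ (so that Pr\'ekopa--Leindler applies and super-level sets of $f_t$ are convex bodies), and so that both $\log\det\cov(f_t)$ and $\log\|f_t\|_\infty$ can be controlled along the flow. The candidate $K'$ is then taken to be an appropriate super-level set $K_t = \{x : f_t(x) \geq \beta \cdot \|f_t\|_\infty\}$ for an absolute constant $\beta$: a direct geometric comparison in isotropic position gives $d(K, K_t) \leq 1 + O(\eps)$ at the suitable time $t = t(\eps)$, while differential inequalities for the two potential functions above give $L_{f_t} \leq c/\sqrt{\eps}$. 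Standard comparisons between the isotropic constant of a log-concave density and those of its super-level sets then transfer the bound to $L_{K_t}$.

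The main obstacle is the simultaneous matching of the two rates so that the final trade-off reads $(1+\eps,\; c/\sqrt{\eps})$. Plain Gaussian convolution alone does not suffice: if the starting $L_K$ is already very large, the convolved density can simply inherit that large isotropic constant. Klartag's key contribution is identifying the right flow and associated potentials for which, after time $t = \Theta(\eps)$, the isotropic constant of $f_t$ is bounded by $c/\sqrt{\eps}$ \emph{independently} of the starting value $L_K$. This is a Bakry--\'Emery-style regularization phenomenon and constitutes the technical heart of the argument; the rest of the proof is bookkeeping to convert the quantitative bound on $L_{f_t}$ into one on $L_{K_t}$ and to certify the distance bound $d(K, K_t) \leq 1 + \eps$.
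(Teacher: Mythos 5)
This theorem is cited from Klartag's paper~\cite{K06}; the present paper does not re-prove it, but Lemma~\ref{lem:exp-slice} isolates the key computation and makes the actual mechanism visible, so the comparison is still informative. Your proposal has the right high-level shape --- a one-parameter family of log-concave perturbations of $\mathbf{1}_K$, a trade-off between Banach--Mazur closeness and the isotropic constant, and a final extraction of a convex body from a density with controlled $L_f$ --- but the engine you describe is not the one Klartag uses. His construction is not Gaussian convolution, not a Brownian/heat flow, and not a Bakry--\'Emery argument: it is exponential tilting. He considers the densities $f_s(x) = e^{\langle s,x\rangle}\mathbf{1}_K(x)$ for $s \in \R^n$, and the crucial input is a transport-type identity for the logarithmic Laplace transform, $\int_{\R^n} \det\cov(f_s)\, ds = \vol(K)$ (Lemma~3.2 of~\cite{K06}, used verbatim in the proof of Lemma~\ref{lem:exp-slice} here). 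Averaging over $s$ uniform in the small dual body $\eps n \bigl(\conv\set{K,-K}\bigr)^{*}$, and applying Bourgain--Milman, Rogers--Shepard, and Jensen, yields $\E\bigl[L_{f_s}^{2n}\bigr] \le (c/\sqrt{\eps})^{2n}$; by Markov, some such $s$ gives $L_{f_s} \le c'/\sqrt{\eps}$, and because $s$ lies in the small dual body the tilting is mild enough that a body $K'$ extracted from $f_s$ is within Banach--Mazur distance $1+O(\eps)$ of $K$.

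Two concrete consequences of the misidentification. First, you defer the real content to ``identifying the right flow and associated potentials'' satisfying a pair of differential inequalities, but there is no flow and no differential inequality in~\cite{K06}: the argument is an average over a family of static tiltings, closed by a single integral identity, and is considerably more elementary than the PDE machinery you sketch. Second, the Bakry--\'Emery framing would require curvature or log-Sobolev hypotheses that are neither available for an arbitrary convex body nor needed for this result. Your instinct that plain convolution fails (because a large initial $L_K$ can survive) is correct, and the level-set extraction of $K'$ from a log-concave density is in the right spirit, but without the transport identity the quantitative $c/\sqrt{\eps}$ bound has no source in your outline.
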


From the closeness of $K$ and $K'$ it follows that an M-ellipsoid for
$K'$ is an M-ellipsoid for $K$, and from the bound on $L_{K'}$ the
inertial ellipsoid of $K'$ is an M-ellipsoid for $K'$.

Here we will not need to construct $K'$ itself, but only an ellipsoid
very close to its inertial ellipsoid (which as just mentioned is an
M-ellipsoid for $K$).  The body $K'$ is derived from a certain family
of reweighted densities over $K$.  These densities are given by
exponential reweightings of the uniform density along some vector $s
\in \R^{n}$, i.e., $f_{s}(x) = e^{\inner{s,x}}$ for $x \in K$ (and $0$
otherwise).  For $s$ chosen uniformly from $n \cdot \conv\set{K-b(K),
  b(K)-K}^{*}$, the reweighting $f_{s}$ has two important properties:
(i) it is not too highly biased away from uniform over $K$, and (ii)
it has bounded isotropic constant (independent of $n$) with very high
probability.  Let $E$ be the inertial ellipsoid of $f_{s}$ (or any
reasonably good approximation to it), which can be found by sampling
from $f_{s}$.  The first property of $f_{s}$ allows us to prove that
$E$ can be covered by $2^{O(n)}$ copies of $K$, while the second
property lets us cover $K$ by $2^{O(n)}$ copies of $E$ (see
Lemma~\ref{lem:iner-to-m}).

To make everything work algorithmically, we need robust versions of
Klartag's main lemmas, since we will only be able to compute an
approximate centroid of $K$, sample $s$ from a distribution close to
uniform, and estimate the covariance matrix of $f_{s}$.

Algorithm~\ref{alg:M-Gen} makes the above description more formal.
Note that given an oracle for a convex body, an oracle for the polar
body can be constructed in polynomial time~\cite{GLS}.  Sampling, both
from the uniform and exponentially reweighted distributions, can be
done in polynomial time using the random walk algorithm
of~\cite{LV3,LV06}.  Theorem~\ref{lem:iner-to-m} together with
Lemma~\ref{lem:exp-slice} implies that the algorithm's output is
indeed an M-ellipsoid with good probability.

\begin{algorithm}
  \caption{M-Gen: Randomized generation of a candidate M-ellipsoid.}
  \label{alg:M-Gen}
  \begin{algorithmic}[1]
    \REQUIRE A weak membership oracle $O_K$ for a $(0,r,R)$-centered
    convex body $K$ with $b(K) \in \frac{1}{n+1}E_K$.
    
    \ENSURE With probability $1-o(1)$, an M-ellipsoid of $K$.

    \STATE Estimate the centroid $b=b(K)$ using uniform samples from
    $K$.

    \STATE Construct a membership oracle for $n\left(\conv
      \set{K-b,b-K}\right)^{*}$.

    \STATE Sample a random vector $s$ from $n\left(\conv
      \set{K-b,b-K}\right)^*$.

    \STATE Estimate the covariance matrix $A$ of the density
    proportional to $e^{\inner{s,x}}$, restricted to $K$.

    \STATE Output the ellipsoid $E(A^{-1}) = \set{x : x^tA^{-1}x \leq
      1}$.
  \end{algorithmic}
\end{algorithm}

\begin{theorem}[Correctness of M-Gen]
  \label{thm:m-gen}
  For large enough $n$, Algorithm~\ref{alg:M-Gen} (M-Gen) outputs an
  ellipsoid $E$ satisfying
  \begin{equation}
    N(E,K) \leq (25e)^n \quad \text{and} \quad N(K,E) \leq (13e)^n
  \end{equation}
  with probability at least $1-\frac{3}{n}$ in time $\poly(n,
  \log(\tfrac{R}{r}))$.
\end{theorem}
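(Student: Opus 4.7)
The plan is to follow Klartag's probabilistic construction underlying Theorem~\ref{thm:eps-slicing} in a robust, algorithmic way, and in the final step invoke Lemma~\ref{lem:iner-to-m} --- which says that for a log-concave density $f$ on $K$ with bounded isotropic constant and whose mass is not too lopsidedly distributed, the inertial ellipsoid of $f$ is an M-ellipsoid for $K$ --- together with Lemma~\ref{lem:exp-slice}, which guarantees that the random reweighting $f_s \propto e^{\inner{s,x}} \mathbf{1}_K(x)$ produced by Steps 1--3 satisfies both of these properties with high probability over the choice of $s$. Each of the three randomized subroutines will be tuned so that its failure probability is at most $1/n$; a union bound then yields the claimed $1 - 3/n$ success guarantee.

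Concretely, Step 1 uses the uniform sampler of~\cite{LV3,LV06} on $K$; $\poly(n, \log(R/r))$ samples suffice to estimate $b(K)$ to within a small fraction of $\cov(K)^{1/2}$, and under the hypothesis $b(K) \in \tfrac{1}{n+1}E_K$ this yields a point $b$ well-centered enough that the symmetrization $\conv\set{K-b, b-K}$ is within a $(1 \pm o(1))$ factor of $\conv\set{K - b(K), b(K) - K}$. Step 2 then invokes the oracle-polarity machinery of~\cite{GLS} to build a weak membership oracle for the polar of this symmetrization in $\poly(n)$ time, and Step 3 uses the log-concave sampler of~\cite{LV3,LV06} to draw $s$ from a distribution within total variation $1/n$ of uniform on $n(\conv\set{K-b, b-K})^*$.

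For Step 4 I would condition on the high-probability event supplied by Lemma~\ref{lem:exp-slice} that $s$ lies in the ``good'' subset on which $L_{f_s} = O(1)$ and $f_s$ is balanced over $K$ in the sense demanded by Lemma~\ref{lem:iner-to-m}. Under this conditioning I would run the log-concave sampler on $f_s$ and apply the standard empirical estimator on $\poly(n)$ samples to obtain a matrix $A$ which, with probability at least $1 - 1/n$, is a $(1 \pm o(1))$ spectral approximation to $\cov(f_s)$. The output $E(A^{-1})$ is then within a $(1 + o(1))$ scaling of the true inertial ellipsoid of $f_s$, which by Lemma~\ref{lem:iner-to-m} satisfies $N(E, K) \leq (25e)^n$ and $N(K, E) \leq (13e)^n$ --- the small scaling perturbation is absorbed into these constants for large enough $n$.

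The main obstacle is the robustness of Klartag's analysis to sampling error: I must show that replacing the true centroid, the true uniform sample from the polar symmetrization, and the true covariance of $f_s$ by their sampled/estimated counterparts alters the resulting ellipsoid by at most a $(1+o(1))$ factor in every direction, so that this perturbation is absorbed by the slack in the covering bounds of Lemma~\ref{lem:iner-to-m}. This reduces to a spectral perturbation argument combined with standard sample-complexity bounds for estimating log-concave densities --- technical but essentially routine given the $2^{O(n)}$ slack in the target inequalities.
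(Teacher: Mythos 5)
Your proposal follows the same skeleton as the paper's proof: sample $s$ from (approximately) uniform on $n(\conv\set{K-b,b-K})^*$, invoke Lemma~\ref{lem:exp-slice} with Markov's inequality to conclude $L_{f_s}=O(1)$ with probability $1-O(1/n)$, transfer that bound across the $1/n$ total-variation gap between the sampler's output and the true uniform distribution, estimate the covariance of $f_s$, and conclude via Lemma~\ref{lem:iner-to-m}. That is exactly the paper's argument, so on the level of strategy there is no divergence.

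There is, however, one step where your proposal as written would go wrong, and it is worth naming because it is easy to miss. You claim that the output $E(A^{-1})$ --- which, since $A$ is a $(1\pm o(1))$ spectral approximation to $\cov(f_s)$, is within $(1\pm o(1))$ of $E_{f_s}$ itself --- ``by Lemma~\ref{lem:iner-to-m} satisfies $N(E,K)\le (25e)^n$ and $N(K,E)\le (13e)^n$,'' with the perturbation absorbed into the constants. But Lemma~\ref{lem:iner-to-m} applies to a body $T$ sandwiched between $\frac{\sqrt n}{\delta}E_{f_s}$ and $\delta\sqrt n E_{f_s}$ for $\delta$ near $1$, i.e.\ to something that is a $\sqrt n$ scaling of the inertial ellipsoid, not the inertial ellipsoid itself. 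Since $E_{f_s}$ has Euclidean ``diameter'' roughly $\sqrt n$ smaller than $K$, using $T\approx E_{f_s}$ directly would blow up $N(K,T)$ by a factor of order $n^{n/2}$, which no amount of ``$(1+o(1))$ slack'' or ``large enough $n$'' can absorb. The paper's proof of Theorem~\ref{thm:m-gen} explicitly returns $\sqrt n E(A)$ and applies the lemma with $\delta=e^{1/n}$; note the pseudocode in Algorithm~\ref{alg:M-Gen} elides this scaling, so you appear to have inherited that omission. Your proposal should replace ``output $E(A^{-1})$'' with the $\sqrt n$-scaled ellipsoid and rerun the Lemma~\ref{lem:iner-to-m} bookkeeping; the remaining ingredients you would still need to make explicit --- the verification that $\sup_{x\in K}f_s(x)/f_s(b(K))\le e^{n+1}$ (from $s\in n(\conv\set{K,-K})^*$ and $b(K)\in\frac{1}{n+1}E_K\subseteq\frac1n K$) and the choice of $\eta$ so that $(1+\eta)^3=25/24$ --- are routine, as you anticipate.
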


\subsection{Building a Covering}
\label{sec:building-covering}

The next theorem yields an algorithm to approximately decide (up to
single exponential factors) whether a given convex body $K$ can be
covered by a specified number of translates of an ellipsoid $E$. The
algorithm is constructive and proceeds by constructing a simple
parallelepiped tiling of $K$, where the parallelepiped in question is
a maximum volume inscribed parallelepiped of $E$.

\begin{algorithm}
  \caption{Build-Cover: Deterministic construction of an ellipsoid
    covering of a convex body.}
  \label{alg:build-cover}
  \begin{algorithmic}[1]
    \REQUIRE A weak membership oracle $O_K$ for an $(0,r,R)$-centered
    convex body $K$, an ellipsoid $E=E(A)$, and some $H \geq 1$.

    \ENSURE Either a covering of $K$ by $(\sqrt{8 \pi e}H)^{n}$
    translates of $E$, or a declaration that $K$ cannot be covered by
    $H^{n}$ copies of $E$.

    \STATE Let $C_E$ be any maximum-volume inscribed parallelepiped of
    $E$ (e.g., a maximum-volume inscribed cuboid with the same axes as
    the ellipsoid).

    \STATE Attempt to cover $K$ using translates of $C_E$ with respect
    to the natural parallelepiped tiling, via a breadth-first search
    over the tiling lattice, starting from the origin.

    \STATE If the attempted covering grows larger than $(\sqrt{8 \pi
      e}H)^{n}$, abort.  Otherwise, output the covering.
  \end{algorithmic}
\end{algorithm}

\begin{theorem}
  \label{thm:build-cover}
  Algorithm~\ref{alg:build-cover} (Build-Cover) is correct, and runs
  in time $\left(\sqrt{8\pi e} H\right)^n \cdot
  \poly(n,\enc{A}, \log(\tfrac{R}{r}))$.
\end{theorem}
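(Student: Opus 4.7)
The plan is to split the analysis into three pieces: a volume comparison between $E$ and its inscribed parallelepiped $C_E$; a packing argument that converts a hypothetical $H^n$-cover of $K$ by translates of $E$ into a bound on how many tiles can meet $K$; and a connectivity argument showing that the BFS reaches every such tile.

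First I would reduce the volume computation to the rotationally symmetric case by applying the linear map $A^{1/2}$, so that $E$ becomes the Euclidean ball $B_2^n$ and the maximum-volume axis-aligned inscribed cuboid becomes $[-1/\sqrt{n}, 1/\sqrt{n}]^n$, of volume $(2/\sqrt{n})^n$. A one-line Stirling estimate applied to $\vol(B_2^n) = \pi^{n/2}/\Gamma(n/2+1)$ then gives $\vol(E)/\vol(C_E) \leq (\sqrt{\pi e/2})^n$. This is the only piece of convex geometry that Step~2 needs.

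Next I would carry out the packing count. Since $C_E$ can be chosen $0$-symmetric, a tile $C_E+\ell$ meeting a convex body $B$ must have $\ell \in B + C_E$, so the tile lies inside $B + 2C_E \subseteq B + 2E$. Essential disjointness of the tiles then yields
\[
\#\{\,\ell : (C_E+\ell)\cap B \neq \emptyset\,\} \;\leq\; \vol(B+2E)/\vol(C_E).
\]
For $B = E + x_i$ this is at most $3^n \vol(E)/\vol(C_E) \leq (3\sqrt{\pi e/2})^n$ tiles per translate. If $K$ were covered by $H^n$ translates of $E$, subadditivity would force the number of tiles meeting $K$ to be at most $(\sqrt{9\pi e/2}\cdot H)^n \leq (\sqrt{8\pi e}\cdot H)^n$ (using $9/2 \leq 8$), matching exactly the abort threshold of the algorithm. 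Contrapositively, an abort certifies that no $H^n$-cover exists.

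For correctness I would then verify that BFS actually enumerates every tile meeting $K$. Here I rely on $0 \in K$ (since $K$ is $(0,r,R)$-centered) and on convexity of $K$: any line segment between two points of $K$ lies in $K$, so the set of tiles meeting $K$ is a connected subgraph of the facet-adjacency graph of the tiling. The BFS rooted at a tile containing the origin therefore exhausts exactly the tiles meeting $K$, and since $C_E \subseteq E$ their centers serve as translates of $E$ that cover $K$. Each adjacency test is a weak convex feasibility question between the explicitly described parallelepiped $C_E+\ell$ and the oracle-given body $K$, solvable in $\poly(n,\enc{A},\log(R/r))$ time by standard \textsc{GLS} ellipsoid machinery. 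Together with the $(\sqrt{8\pi e}\,H)^n$ bound on BFS steps this yields the claimed runtime.

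The step I expect to be most delicate is the last one: rigorously handling weak-membership degeneracies — tiles that touch $K$ only at a boundary point, line segments that graze a lower-dimensional face of the tiling, etc. — and making the connectivity argument airtight. Standard cures such as mildly inflating $K$, slightly perturbing the tiling so that no relevant segment meets a tiling skeleton of codimension $\geq 2$, or working consistently with the inner/outer weak versions of both bodies, should close the gap without losing more than constant factors in the covering bound and without affecting any exponent in the statement.
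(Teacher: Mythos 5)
Your argument takes essentially the same route as the paper: build the maximum-volume inscribed parallelepiped $C_E$ of $E$, tile space with its translates, run a BFS on the facet-adjacency graph of the tiling using an (ellipsoid-method based) weak intersection test, and argue correctness via the volume ratio $\vol(E)/\vol(C_E) \leq (\sqrt{\pi e/2})^n$ plus a packing bound and the classical fact that tiles meeting a convex body form a facet-connected set. Your packing count is organized slightly differently — you apply the bound $\#\{\ell : (C_E+\ell)\cap(E+x_i)\neq\emptyset\} \leq \vol(3E)/\vol(C_E) \leq (3\sqrt{\pi e/2})^n$ per covering translate and then union-bound over the $H^n$ translates — whereas the paper bounds $\vol(K+(2+\tfrac1n)P)/\vol(P)$ in one shot and then compares against $N(K,E)\,(3+\tfrac1n)^n\vol(E)$. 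These are the same estimate; yours gives a marginally better constant ($\sqrt{9\pi e/2}$ vs.\ $\sqrt{8\pi e}$), which is comfortably absorbed by the stated threshold. The one place you leave genuine work undone is the weak-oracle slack: the paper handles it precisely by running the intersection test at precision $\tfrac1n$ and replacing the sets $H_0$, $H_{1/n}$ accordingly, which inflates $2$ to $2+\tfrac1n$ and $3$ to $3+\tfrac1n$ — a $(1+o(1))^n$ factor, safely within your constant slack. You correctly flag this as the delicate step and propose the standard inner/outer-body fix, but a complete proof would need to carry it through; the connectivity-of-tiles claim (which both you and the paper invoke as classical) similarly deserves a citation or short proof in a polished write-up.
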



\section{Lattice Algorithms}
\label{sec:lattice-algs}

In this section we prove our general enumeration theorem for convex
bodies (Theorem~\ref{thm:body-lat-enum-informal}, formalized in
Theorem~\ref{thm:body-lat-enum}) and give its application to the
Shortest and Closest Vector Problems, and Integer Programming.

\subsection{Lattice Background}

An $n$-dimensional lattice $L \subset \R^{n}$ is a discrete subgroup
under addition.  It can be written as
\begin{equation}
  L = \setfit{\sum_{i=1}^k z_i b_i : z_i \in \Z}
\end{equation}
for some (not necessarily unique) \emph{basis} $B=(b_1,\dots,b_k)$ of
$k \leq n$ linearly independent vectors in $\R^n$.  The determinant of
$L$ is defined as
\begin{equation}
  \det(L) = \sqrt{\det(B^t B)}.
\end{equation}
The
\emph{dual lattice} $L^*$ of $L$ is defined as
\begin{equation}
  L^* = \set{y \in \mathrm{span}(b_1,\dots,b_k) : \forall x \in L, \inner{x,y} \in \Z}.
\end{equation}

The \emph{minimum distance} of $L$ with respect to $K$ is
$\lambda_{1}(K, L) = \min_{y \in L \setminus \set{0}} \length{y}_{K}$.
The \emph{covering radius} of $L$ with respect to $K$ is $\mu(K,L) =
\inf \set{s \geq 0: L + sK = \R^n}$.  Note that from the definition,
we see that $\mu(K+t,L) = \mu(K,L)$ for $t \in \R^n$ and that
$\mu(-K,L) = \mu(K,L)$.  We also define $d_K(L,x) = \inf_{y \in L}
\length{y-x}_K$. We define the \emph{$i^{th}$ minimum} of $L$ with respect to the $\ell_2$ norm as
\[
\lambda_i(L) = \inf \set{r \geq 0: \dim(\mathrm{span}(rB_2^n \cap L)) \geq i}
\]
where $\mathrm{span}$ denotes the linear span.

The \emph{shortest vector problem} (SVP) with respect to $K$
is the following: given a basis of an $n$-dimensional lattice $L$,
compute an element of
\begin{equation}
  \SVP(K,L) = \argmin_{y \in L \setminus \set{0}}\; \length{y}_K.
\end{equation}
The \emph{closest vector problem} (CVP) with respect to $K$ is: given
a basis of an $n$-dimensional lattice $L$ and a point $x \in \R^n$, compute an
element of
\begin{equation}
  \CVP(K,L,x) = \argmin_{y \in L} \; \length{y-x}_K.
\end{equation}


To denote the sets of \emph{approximate} minimizers for SVP and CVP, we
define for any $\epsilon > 0$
\begin{align}
  \SVP_{\epsilon}(K,L) &= \set{z \in L \setminus \set{0} :
    \length{z}_{K} \leq (1+\epsilon) \cdot \min_{y \in L \setminus \set{0}}
    \length{y}_{K}} \\
  \CVP_{\epsilon}(K,L,x) & = \set{z \in L:
    \length{z-x}_K \leq (1+\eps) \cdot \min_{y \in L} \length{y-x}_K}.
\end{align}

\paragraph{Integer programming.}

A fundamental tool in integer programming is
the so-called ``flatness theorem,'' which says that for any convex
body $K \subseteq \R^n$ and $n$-dimensional lattice $L \subseteq
\R^n$,
\begin{equation}
\label{eq:flatness}
1 \leq \mu(K,L) \cdot \lambda_1((K-K)^*,L^*) \leq f(n),
\end{equation}
where $\mu(K,L) = \inf \set{s \geq 0: L + sK = \R^n}$ is the covering
radius of $L$, and
\[
\lambda_1((K-K)^*,L^*) = \inf_{y \in L^* \setminus \set{0}}
\big(\sup_{x \in K} \pr{x}{y} - \inf_{x \in K} \pr{x}{y}\bigr)
\]
is the lattice width of $K$.  The flatness theorem is most easily
interpreted as follows: either $K$ certainly contains a lattice point
in $L$, or there exist at most $\floor{f(n)}+1$ hyperplanes of the
form $H_{k} = \set{x \in \R^n: \pr{y}{x} = k}$, $y \in L^* \setminus
\set{0}$, $k \in \Z$ and $\inf_{x \in K} \pr{y}{x} \leq k \leq \sup_{x
  \in K} \pr{y}{x}$, such that any lattice point in $K$ must lie on
one of these hyperplanes.  Crucially, we note that computing
$\lambda_1((K-K)^*,L^*)$ for a general convex body $K$ is exactly a
shortest non-zero vector computation with respect to a general norm.

The asymptotic growth (and even the finiteness) of the function $f(n)$
in~\eqref{eq:flatness} has been the source of intense study over the
past century. Restricting to the important special case where
$K=B_2^n$, the optimal growth rate has been settled at $f(n) =
\Theta(n)$~\cite{banaszczyk93:_new}. When $K$ is centrally symmetric,
the best known bound is $f(n) = O(n \log n)$~\cite{Bana96}.  For the
general case, the current best bound is $f(n) =
O(n^{\nicefrac{4}{3}}\log^c n)$ \cite{Bana99,R00} for some fixed $c >
0$.  We let $f^*(n)$ denote best possible upper bound for the general
flatness theorem.

\subsection{Lattice Point Enumeration in Convex Bodies}

We now use enumeration via the M-ellipsoid covering to solve the
Shortest and Closest Vector Problems. To do this we will need the
recent algorithm of Micciancio and
Voulgaris~\cite{DBLP:conf/stoc/MicciancioV10} for the Closest Vector
Problem under the $\ell_2$ norm (and hence any ellipsoidal norm),
which we call the MV algorithm for short.  The following is an
immediate extension of their graph-traversal approach~\cite{VD10}.

\begin{proposition}[\cite{DBLP:conf/stoc/MicciancioV10}, Algorithm
  Ellipsoid-Enum]
  There is an algorithm Ellipsoid-Enum that, given any positive
  definite $A \in \Q^{n \times n}$, any basis $B$ of an
  $n$-dimensional lattice $L \subseteq \R^{n}$, and any $t \in \R^n$,
  computes the set $L \cap (E(A)+t)$ in deterministic time
  \begin{equation}
    2^{O(n)} \cdot (|L \cap (E(A)+t)|+1) \cdot \poly(\enc{A},\enc{B},\enc{t}).
  \end{equation}
\end{proposition}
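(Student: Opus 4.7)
The plan is to reduce enumeration of lattice points in an arbitrary ellipsoidal ball to a Voronoi-cell-based traversal of the lattice, directly extending the Micciancio--Voulgaris Closest Vector algorithm. The first step is normalization: rather than transforming by $A^{1/2}$ (which is typically irrational), I would run the MV algorithm directly with respect to the ellipsoidal norm $\length{x}_A = \sqrt{x^t A x}$ on $L$. Every ingredient of MV --- the Voronoi cell $V(L)$, distance comparisons, relevant-vector enumeration, and iterative closest-vector computation --- depends on $A$ only through the symmetric positive definite bilinear form $\pr{\cdot}{\cdot}_A$, so everything goes through verbatim with rational arithmetic in $\enc{A},\enc{B}$.

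Next, as preprocessing, I would invoke MV to compute in deterministic time $2^{O(n)} \cdot \poly(\enc{A},\enc{B},\enc{t})$ (i) the list $V^{*} \subseteq L$ of Voronoi-relevant vectors under $\length{\cdot}_A$, of size at most $2(2^n-1)$, and (ii) a closest vector $v_0 \in L$ to $t$ under $\length{\cdot}_A$. If $\length{v_0 - t}_A > 1$ then $L \cap (E(A)+t) = \emptyset$ and I return the empty set and halt. Otherwise, I perform breadth-first search starting from $v_0$ on the graph whose vertex set is $L \cap (t+E(A))$ and whose edges connect $v$ to $v+w$ for each $w \in V^{*}$; the BFS only enqueues a candidate $v' = v+w$ if $\length{v'-t}_A \leq 1$ and $v'$ has not yet been visited. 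The output is the set of visited vertices.

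The main correctness obligation is to show that this induced subgraph is connected, so that BFS enumerates all of $L \cap (t+E(A))$. The key lemma is monotone reachability to $v_0$: for any $v \in L$ with $v \neq v_0$, there exists $w \in V^{*}$ with $\length{v+w-t}_A < \length{v-t}_A$. This follows because $v \neq v_0$ forces $t - v$ to lie outside the Voronoi cell $V(L)$, so $t-v$ is strictly separated from $0$ by some facet of $V(L)$, and crossing that facet by adding the corresponding relevant vector $w$ strictly reduces the distance to $t$. Iterating this from an arbitrary $v \in L \cap (t+E(A))$ produces a sequence $v = v_k, v_{k-1}, \ldots, v_0$ of Voronoi-neighbors whose distances to $t$ strictly decrease; in particular every $v_i$ lies in $t+E(A)$, so the reversed path is exactly a valid BFS trace from $v_0$ to $v$.

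For the running time, BFS visits each of the $|L \cap (E(A)+t)|$ target lattice points exactly once, examining $|V^{*}| = 2^{O(n)}$ neighbors at each node with polynomial-time ellipsoidal arithmetic per examination. Adding the MV preprocessing cost accounts for the ``$+1$'' (the case where the ball is empty) and gives the claimed bound $2^{O(n)} \cdot (|L \cap (E(A)+t)|+1) \cdot \poly(\enc{A},\enc{B},\enc{t})$. The only real obstacle is the connectivity argument, and it reduces to the standard separating-facet characterization of the Voronoi cell; a secondary, purely technical point is maintaining polynomial bit-complexity throughout MV's iterations under the rational ellipsoidal norm, which is handled by the same size-control arguments already used in \cite{DBLP:conf/stoc/MicciancioV10}.
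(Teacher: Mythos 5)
Your plan matches the paper's own one-paragraph sketch essentially verbatim: run the Voronoi-cell machinery of Micciancio--Voulgaris with respect to the bilinear form $\pr{\cdot}{\cdot}_A$ rather than conjugating the lattice by $A^{1/2}$, compute the relevant vectors and a closest vector $v_0$ to $t$, and then do a breadth-first search over the Voronoi graph restricted to the ellipsoid, using connectivity of the induced subgraph. The paper does not actually prove the connectivity claim; it cites \cite{DBLP:conf/stoc/MicciancioV10} and the graph-traversal note \cite{VD10} for it.

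The one genuine issue is in your stated connectivity lemma. You claim that for any $v \in L$ with $v \neq v_0$ there is a relevant vector $w$ with $\length{v + w - t}_A < \length{v - t}_A$, arguing that $v \neq v_0$ forces $t - v$ outside the Voronoi cell. That implication is false: if $v$ is a \emph{different} closest vector to $t$ (a tie, e.g.\ $L = \Z^2$, $t = (\tfrac12,\tfrac12)$, $v_0 = (0,0)$, $v = (1,1)$), then $t - v$ lies in the closed Voronoi cell and no strictly decreasing step exists, so your strictly-monotone BFS trace from $v$ to $v_0$ need not exist. The lemma is true only under the hypothesis $\length{v - t}_A > \length{v_0 - t}_A$; the tie case (connectivity among the closest vectors themselves, and more generally connectivity of $L \cap (t + E(A))$ when several lattice points realize the minimum) needs a separate argument. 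One clean fix that stays in the ball: trace the closest lattice point $w(p)$ as $p$ moves along the segment from $t$ to $v$. For any such $p$, one has $\length{w(p) - t}_A \leq \length{w(p) - p}_A + \length{p - t}_A \leq \length{v - p}_A + \length{p - t}_A = \length{v - t}_A$, so the whole path stays in $t + E(A)$, and consecutive values of $w(p)$ differ by Voronoi-relevant vectors as $p$ crosses facets of the translated Voronoi tiling. This repairs the lemma; alternatively, one can simply cite the connectivity result from \cite{DBLP:conf/stoc/MicciancioV10,VD10} as the paper does, since it is external to this proposition.
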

Here the idea is that the points inside $(E(A)+t) \cap L$ form a
connected subgraph, where we consider two lattice points adjacent if
they differ by a Voronoi-relevant vector of $L$, where Voronoi
relevance is defined with respect to the inner product defined by $A$
(see \cite{DBLP:conf/stoc/MicciancioV10} for formal definitions).  An
initial point inside $(E(A) + t) \cap L$ can be computed (if it
exists) in a single call to the MV algorithm, and the rest can be
computed by a standard breadth-first search of the graph.

For a convex body $K \subseteq \R^n$ and a lattice $L \subseteq \R^n$ define
\begin{equation}
  G(K,L) = \max_{x \in \R^n} |(K+x) \cap L|,
\end{equation}
the maximum number of lattice points in $K$ under any translation.

We can now state our enumeration theorem, which formalizes Theorem~\ref{thm:body-lat-enum-informal} from the
introduction.

\begin{algorithm}
  \caption{Algorithm Lattice-Enum$(K, L, x, d, \eps)$}
  \label{alg:lattice-enum-algorithm}
  \begin{algorithmic}[1]

    \REQUIRE An $(0,r,R)$-centered convex body $K$ presented by a weak
    distance oracle $D_K$ for $\length{\cdot}_K$, a basis $B$ for a
    lattice $L$, an input point $x$, distance $d \geq 0$, and $0 <
    \eps < 1$.

    \ENSURE $S \subseteq L$ satisfying $(\ref{eq:ble-1})$.

    \STATE Let $(E,T) \leftarrow \text{M-Ellipsoid}(K)$ \COMMENT{This
      covering need only be computed once for repeated calls.}

    \STATE Let $S \leftarrow \emptyset$

    \FORALL{$s \in T$}
    
    \STATE Let $U_s \gets \text{Ellipsoid-Enum}(dE,~L, x+ds)$
    
    \STATE $S \leftarrow S \cup \set{y: y \in U_S, D_K(y-x,
      \nicefrac{\eps}{2}) \leq d + \nicefrac{\eps}{2}}$
    
    \ENDFOR

    \RETURN $S$
  \end{algorithmic}
\end{algorithm}

\begin{theorem}[Enumeration in convex bodies]
  \label{thm:body-lat-enum}
  Algorithm~\ref{alg:lattice-enum-algorithm} (Lattice-Enum) outputs a
  set $S \subseteq L$ such that
  \begin{equation}
    \set{y \in L: \length{y-x}_K \leq d} \subseteq S \subseteq \set{y \in
      L: \length{y-x}_K \leq d + \eps}
    \label{eq:ble-1}
  \end{equation}
  in expected time $G(dK,L) \cdot 2^{O(n)} \cdot
  \poly(\log(\nicefrac{R}{r}), \log(\nicefrac{1}{\epsilon}), \enc{B},
  \enc{x})$.
\end{theorem}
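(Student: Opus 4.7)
The plan is to verify both the set inclusions of (\ref{eq:ble-1}) and then to bound the running time using the covering/counting properties of the M-ellipsoid.

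For the first inclusion, I would argue as follows. By the guarantee of M-Ellipsoid (Theorem~\ref{thm:m-ellipsoid}), the set $T$ satisfies $K \subseteq T + E$, and hence after scaling $dK \subseteq dT + dE$. So for any $y \in L$ with $\length{y-x}_{K} \le d$, i.e., $y-x \in dK$, there exists some $s \in T$ with $y - x - ds \in dE$, equivalently $y \in (dE) + (x+ds)$. Thus $y \in U_{s}$ computed by Ellipsoid-Enum in the $s$-th iteration. It remains to check that $y$ survives the filter. Since the weak distance oracle $D_K$ returns a value within $\eps/2$ of $\length{y-x}_{K}$ (using standard conventions from Appendix~\ref{sec:additional-background}), we get $D_{K}(y-x, \eps/2) \le \length{y-x}_{K} + \eps/2 \le d + \eps/2$, so $y$ is added to $S$. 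For the reverse inclusion, if $y \in S$ then $D_K(y-x,\eps/2) \le d+\eps/2$ and the oracle guarantee gives $\length{y-x}_{K} \le d + \eps/2 + \eps/2 = d+\eps$.

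For the running time, the one-time call to M-Ellipsoid costs expected $2^{O(n)} \cdot \poly(n,\log(R/r))$ by Theorem~\ref{thm:m-ellipsoid}, and produces $T$ with $|T| \le 2^{O(n)}$. For each $s \in T$, the call to Ellipsoid-Enum on $dE$ translated by $x+ds$ runs in deterministic time
\begin{equation*}
2^{O(n)} \cdot \bigl(|L \cap (dE + x + ds)| + 1\bigr) \cdot \poly(\enc{dA}, \enc{B}, \enc{x+ds}),
\end{equation*}
and the number of lattice points returned is at most $G(dE, L)$ by definition. The filtering step costs $\poly(\log(1/\eps))$ per point via the weak distance oracle. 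Summing over the $|T| \le 2^{O(n)}$ values of $s$ gives a total enumeration cost of $2^{O(n)} \cdot G(dE, L) \cdot \poly(\ldots)$.

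To finish, I must relate $G(dE, L)$ back to $G(dK, L)$. This is exactly where the other direction of the M-ellipsoid property enters: by Theorem~\ref{thm:m-ellipsoid} (via duality of entropy), $N(E, K) \le 2^{O(n)}$, so $E$, and hence $dE$, can be covered by $2^{O(n)}$ translates of $dK$. Any translate of $dE$ is thereby covered by $2^{O(n)}$ translates of $dK$, each containing at most $G(dK, L)$ lattice points, which yields $G(dE, L) \le 2^{O(n)} \cdot G(dK, L)$ (this is precisely Equation~\eqref{eq:GKL-covering}). Combining everything gives the claimed expected runtime of $G(dK, L) \cdot 2^{O(n)} \cdot \poly(\log(R/r), \log(1/\eps), \enc{B}, \enc{x})$. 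The main technical nuisance is tracking bit-complexities: one must check that the encoding sizes of $dA$ and of the translates $x + ds$ (with $s \in T \subseteq \Q^{n}$) remain polynomial in the input size, so that the per-call polynomial factors of Ellipsoid-Enum collapse into the stated $\poly(\cdot)$ term; this follows from the fact that M-Ellipsoid's output has $\poly(n,\log(R/r))$ bit-length.
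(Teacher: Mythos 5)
Your proof matches the paper's argument step for step: the covering $T$ from M-Ellipsoid reduces the problem to repeated calls to Ellipsoid-Enum on translates of $dE$, the weak distance oracle with accuracy $\eps/2$ gives exactly the two-sided sandwich in~\eqref{eq:ble-1}, and the runtime is controlled by $|T| \le 2^{O(n)}$ together with $G(dE,L) \le N(E,K)\,G(dK,L) \le 2^{O(n)} G(dK,L)$. The only cosmetic difference is that you argue the upper inclusion directly while the paper argues the contrapositive, but the two are logically equivalent.
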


\begin{proof} \ \vspace{-12pt}
\paragraph{Correctness:} 
We first note that $K \subseteq \cup_{s \in T}~ s + E$ then $x+dK
\subseteq \cup_{s \in T}~ x + d(s + E)$. Hence given a covering for
$K$, we have a covering of $dK + t$. Now on input $(dE,~L,~x+ds)$ the
algorithm Ellipsoid-Enum returns the set $(x + ds) + dE \cap L = x +
d(s + E) \cap L$.

Now we first show that for all $y \in x + dK \cap L$, $y \in S$. By
the covering property, we know that for some $s \in T$, $y \in x +
(s+E) \cap L$. Finally, by the properties of the weak-semi norm oracle
since $y \in dK + x \Leftrightarrow \length{y-t}_K \leq d$, we have
that
\[
D_K(y-x,\nicefrac{\eps}{2}) \leq \length{y-x}_K + \nicefrac{\eps}{2}
\leq d + \nicefrac{\eps}{2} \text{,}
\]
and hence $y$ is correctly placed in $S$ as needed. Lastly, we must
show that if $y \notin (d+\eps)K + x \Leftrightarrow \length{y-t} >
d+\eps$, then $y \notin S$. Again, from the properties of the
weak distance oracle we see that
\[
D_K(y-x,\nicefrac{\eps}{2}) \geq \length{y-x}_K - \nicefrac{\eps}{2} >
d + \eps - \nicefrac{\eps}{2} = d + \nicefrac{\eps}{2}
\]
as needed. Lastly, by construction, the set $S$ only contains lattice
points, and so by the above arguments $U$ satisfies the required
properties.

\paragraph{Runtime:} By Theorem~\ref{thm:m-ellipsoid}, M-Ellipsoid computes
an M-ellipsoid in expected time $\polylog(\nicefrac{R}{r})C_1^n$. Let
$E$ denote an M-ellipsoid of $K$ and let $T \subseteq \R^n$ be as
above.  From Theorem \ref{thm:m-ellipsoid}, we know that $|T| \leq C_2^n$,
hence the algorithm makes at most $C_2^n$ calls to Ellipsoid-Enum. Now
to bound the complexity of enumerating $x+d(s+E) \cap L$ for each $s
\in T$, we need to bound $|x+d(s+E) \cap L| \leq G(dE,L)$. Now we note
that
\[
G(dE,L) \leq N(dE,dK)G(dK,L) = N(E,K)G(dK,L)\leq C_2^n G(dK,L)
\]
by Theorem \ref{thm:m-ellipsoid}. Hence for any $s \in T$, Ellipsoid-Enum
takes at most $C_3^n ~\poly(\enc{B}, \enc{x}) ~ (C_2^n G(dK,L)) \leq
\poly(\enc{B}, \enc{x}) ~ C_4^n ~ G(dK,L)$ time to compute $x + d(s+E)
\cap L$. Hence the total running time is bounded by
\begin{equation}
  \polylog(\tfrac{R}{r}) ~ C_1^n + C_2^n ~ \poly(\enc{B}) ~ C_4^n ~ G(dK,L) ~\leq~ 
  \poly(\log \tfrac{R}{r},\enc{B},\enc{x}) ~ C_5^n ~ G(dK,L)
\end{equation}
where $C_5 > 0$ is an absolute constant.
\end{proof}

We remark that the only randomness in the algorithm is to build the
M-ellipsoid; once this has been achieved the rest of the algorithm is
deterministic. Hence, in the cases where the M-ellipsoid is known
explicitly, as it is for the $\ell_p$ balls (where an appropriately
scaled Euclidean ball suffices), the algorithm can be in fact
made completely deterministic. The algorithms for the shortest vector
and closest vector problem described in the next sections will only
depend on the Lattice-Enum algorithm, and hence they will be
deterministic as long as Lattice-Enum is deterministic.

\subsection{Shortest Vector Problem}
\label{sec:short-vect-probl}

Our main goal will be to use the above enumeration algorithm to solve the Shortest Vector Problem. The following gives a
useful bound on $G(K,L)$ for a general convex body.

\begin{lemma}
  \label{lem:lambda1-bd}
Let $K \subseteq \R^n$ be a convex body satisfying $\vol(K \cap -K) \geq \gamma^{-n} \vol(K)$, $\gamma \geq
1$, and let $L$ be an $n$-dimensional lattice. Then for $d > 0$ we have that
  \begin{equation}
    G(dK,L) \leq \left(\gamma\left(1 + \frac{2d}{\lambda_1(K,L)}\right)\right)^n.
  \end{equation}
\end{lemma}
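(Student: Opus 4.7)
The plan is to run a standard packing/volume argument, with the twist that $K$ is only a semi-norm body (not necessarily symmetric), so we need to symmetrize before packing.

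Let $\lambda = \lambda_1(K,L)$ and let $K_0 = K \cap (-K)$, which is centrally symmetric and, by hypothesis, satisfies $\vol(K_0) \ge \gamma^{-n}\vol(K)$. Since $K_0 \subseteq K$, we have $\length{v}_{K_0} \ge \length{v}_K$ for all $v$, and therefore $\lambda_1(K_0,L) \ge \lambda$. This is the symmetrization I will pack with.

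Fix $x \in \R^n$ realizing the maximum in the definition of $G(dK,L)$, and let $y_1,\dots,y_N$ be the lattice points in $(dK+x)\cap L$. I will argue that the translates $y_i + \tfrac{\lambda}{2}K_0$ have pairwise disjoint interiors: if $z$ lay in the interior of both $y_i + \tfrac{\lambda}{2}K_0$ and $y_j + \tfrac{\lambda}{2}K_0$, then by symmetry of $K_0$ both $z-y_i$ and $y_j-z$ lie in $\mathrm{int}(\tfrac{\lambda}{2}K_0)$, so by convexity $y_j - y_i \in \mathrm{int}(\lambda K_0)$. That would force $\length{y_j-y_i}_{K_0} < \lambda \le \lambda_1(K_0,L)$, impossible for distinct lattice points.

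Next, I will contain the union: since each $y_i \in dK + x$ and $K_0 \subseteq K$, each translate satisfies $y_i + \tfrac{\lambda}{2}K_0 \subseteq x + dK + \tfrac{\lambda}{2}K$, and convexity of $K$ gives $aK + bK = (a+b)K$ for $a,b\ge 0$, so the union lies inside $x + (d + \tfrac{\lambda}{2})K$. Comparing volumes,
\begin{equation*}
N \cdot \left(\tfrac{\lambda}{2}\right)^n \vol(K_0) \;\le\; \left(d + \tfrac{\lambda}{2}\right)^n \vol(K),
\end{equation*}
and applying the hypothesis $\vol(K_0) \ge \gamma^{-n}\vol(K)$ yields $N \le \gamma^n(1 + 2d/\lambda)^n$, which is exactly the claimed bound.

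There is no serious obstacle here; the only subtlety is remembering to symmetrize via $K_0$ before packing (simply placing copies of $\tfrac{\lambda}{2}K$ around lattice points fails because the Minkowski difference $K - K$ need not equal $2K$ for non-symmetric $K$). The factor $\gamma^n$ in the conclusion is precisely the price of this symmetrization, which is why the hypothesis on $\vol(K \cap -K)$ shows up multiplicatively in the final bound.
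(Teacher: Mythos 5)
Your proof is correct and follows essentially the same packing argument as the paper: symmetrize via $K_0 = K \cap -K$, place disjoint copies of $\tfrac{\lambda}{2}K_0$ around the lattice points of $(dK+x)\cap L$, contain the union inside $x + (d+\tfrac{\lambda}{2})K$, and compare volumes. The only cosmetic difference is in the disjointness step, where the paper applies the triangle inequality for $\|\cdot\|_K$ (via $K_0 \subseteq K$) to conclude $\|y_j-y_i\|_K < \lambda_1(K,L)$, whereas you argue directly in $K_0$ using Minkowski-sum convexity and the observation $\lambda_1(K_0,L) \ge \lambda_1(K,L)$; these are equivalent.
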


We note $\gamma$ above is easily bounded in many natural situations. When $K$ is centrally symmetric we can set $\gamma
= 1$ since $K \cap -K = K$, and if $K$ is a general convex body with $b(K) = 0$ setting $\gamma=2$ is valid by
Theorem~\ref{thm:symmetrize-ext}. Hence the notion of ``well-centered'', i.e., $\gamma \leq 4$, is quite robust.

\begin{proof}[Proof of Lemma~\ref{lem:lambda1-bd}]
  Let $s = \nicefrac{1}{2} \lambda_1(K,L)$. For $x \in L$, we examine
  \[ x + \mathrm{int}(s(K \cap -K)) = \set{z \in \R^n: \length{z-x}_{K
      \cap -K} < s}.\] Now for $x, y \in L$, $x \neq y$, we claim that
  \begin{equation}
    x + \mathrm{int}(s(K \cap -K)) \cap y + \mathrm{int}(s(K \cap -K)) = \emptyset
    \label{eq:no-int-1}
  \end{equation}
  Assume not, then $\exists ~ z \in \R^n$ such that
  $\length{z-x}_{ K\cap-K },\length{z-y}_{ K \cap -K} < s$. Since $K \cap -K$ is symmetric, we note
  that $\length{y-z}_{K \cap -K}  = \length{z-y}_{K \cap -K} < s$. But now, since $K \cap -K \subseteq K$, we see that
  \begin{align*}
    \length{y-x}_K &= \length{y-z+z-x}_K \leq \length{y-z}_K + \length{z-x}_K \\
    &\leq \length{y-z}_{K \cap -K} + \length{z-x}_{K \cap -K} < s + s = 2s = \lambda_1(K,L)
  \end{align*}
  a clear contradiction since $y-x \neq 0$.

  Take $c \in \R^n$. To bound $G(dK,L)$ we must bound $|(c + dK) \cap L|$. For $x \in c + dK$, we note that $x + s(K \cap
  -K) \subseteq c + (d+s)K$. Therefore,
  \begin{equation}
    \vol((d+s) K) = \vol(c + (d+s)K) \geq \vol\left( ((c + dK) \cap L) + s(K \cap -K) \right) = |(c + dK) \cap L|
    \vol(s(K \cap -K))
  \end{equation}
  where the last equality follows from $(\ref{eq:no-int-1})$. Therefore, we have that
  \begin{equation}
    |(c + dK) \cap L| \leq \frac{\vol((d+s)K)}{\vol(s(K \cap -K))} = \left(\frac{d+s}{\gamma^{-1}s}\right)^n 
    = \left(\gamma\left(1 + \frac{2d}{\lambda_1(K,L)}\right)\right)^n
  \end{equation}
  as needed.
\end{proof}

We can now state the algorithm and main theorem of this section.

\begin{algorithm}
  \caption{Shortest-Vectors$(K, L,\eps)$}
  \label{alg:svp-algorithm}
  \begin{algorithmic}[1]

    \REQUIRE A $(0,r,R)$-centered convex body $K$ presented by a weak
    distance oracle $D_K$ for $\length{\cdot}_K$, a basis $B$ for a
    lattice $L$, and $0 < \eps < 1$.

    \ENSURE $S \subseteq L$ such that $\SVP(K,L) \subseteq S \subseteq
    \SVP_{\eps}(K,L)$

    \STATE Compute $z \in \SVP(B_2^n, L)$ using the MV algorithm. Set
    $t,d \leftarrow \nicefrac{\length{z}}{R}$.

    \REPEAT

    \STATE $U \leftarrow \text{Lattice-Enum}(K, L,0,d,t)
    \setminus \set{0}$

    \IF{$U = \emptyset$}

    \STATE $d \leftarrow 2 d$
    
    \ENDIF
    
    \UNTIL{$U \neq \emptyset $}

    \STATE $U \leftarrow \text{Lattice-Enum}(K, L ,0,d+t,t)
    \setminus \set{0}$

    \STATE $m \leftarrow \min \set{D_K(y, \frac{\eps}{4}t): y \in
      U}$
    
    \STATE $S \leftarrow \set{y: D_K(y, \frac{\eps}{4}t) \leq m +
      \frac{\eps}{2} ~t, y \in U}$

    \RETURN $S$
  \end{algorithmic}
\end{algorithm}

\begin{theorem}[Correctness of Shortest-Vectors]
  \label{thm:svp-alg}
  If $K$ is well-centered, i.e., $\vol(K \cap -K) \geq 4^{-n}
  \vol(K)$, then Algorithm~\ref{alg:svp-algorithm} (Shortest-Vectors)
  outputs a set $S \subseteq L$ satisfying
  \begin{equation}
    \SVP(K,L) \subseteq S \subseteq \SVP_{\epsilon}(K,L)
  \end{equation} 
  in expected time
  \begin{equation}
    2^{O(n)} \cdot \poly(\log(\tfrac{R}{r}), \log(\tfrac{1}{\eps}), \enc{B}).
  \end{equation}
\end{theorem}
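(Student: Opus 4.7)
The plan is to track how the successive calls to Lattice-Enum produce successively tighter estimates of $\lambda_{1}=\lambda_1(K,L)$ and a certifiably short vector, and then to control the filtering step that uses the weak distance oracle. I will handle correctness and runtime separately.

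First, the initial scaling. Since $K$ is $(0,r,R)$-centered, we have $r B_2^n \subseteq K \subseteq R B_2^n$, which gives the semi-norm sandwich $\length{y}/R \leq \length{y}_K \leq \length{y}/r$ for every $y \in \R^n$. Applying this to any shortest $\ell_2$ vector $z$ of $L$ (computed exactly by the MV algorithm) yields $\length{z}/R \leq \lambda_1 \leq \length{z}/r$, so the quantities $t = d = \length{z}/R$ initialized in the algorithm satisfy $t \leq \lambda_1$ and $d \leq \lambda_1$. Hence the initial call into Lattice-Enum is well scaled and the doubling loop only enlarges $d$.

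Next, the doubling loop. By Theorem~\ref{thm:body-lat-enum} each call returns some $U \subseteq L$ with $\{y \in L \setminus \set{0} : \length{y}_K \leq d\} \subseteq U \subseteq \{y \in L \setminus \set{0}: \length{y}_K \leq d+t\}$. Therefore $U = \emptyset$ certifies $\lambda_1 > d$, while $U \neq \emptyset$ certifies $\lambda_1 \leq d + t$. Since $\lambda_1 \leq \length{z}/r$ and $d$ is doubled each iteration, the loop terminates after at most $O(\log(R/r))$ iterations, and at termination we have both $\lambda_1 > d/2$ and $\lambda_1 \leq d + t$; in particular $d + t \leq 3\lambda_1$. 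The second call, at radius $d+t$ with tolerance $t$, thus produces a set $U$ that is guaranteed to contain every $y \in L$ with $\length{y}_K \leq d+t$ (in particular, every vector in $\SVP(K,L)$) and to be contained in $\{y \in L \setminus \set{0}: \length{y}_K \leq d + 2t\} \subseteq \{y \in L \setminus \set{0}: \length{y}_K \leq 3\lambda_1 + t\}$.

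Now for the filtering step. The weak distance oracle satisfies $\length{y}_K - \tfrac{\eps}{4} t \leq D_K(y, \tfrac{\eps}{4} t) \leq \length{y}_K + \tfrac{\eps}{4} t$ for every $y$. Since every $y \in U$ has $\length{y}_K \geq \lambda_1$ and the shortest-vector $y^{*}$ in $U$ has $\length{y^{*}}_K = \lambda_1$, the computed minimum $m$ lies in the interval $[\lambda_1 - \tfrac{\eps}{4} t, \; \lambda_1 + \tfrac{\eps}{4} t]$. Any $y^{*} \in \SVP(K,L)$ then satisfies $D_K(y^{*}, \tfrac{\eps}{4} t) \leq \lambda_1 + \tfrac{\eps}{4} t \leq m + \tfrac{\eps}{2} t$, so $y^{*} \in S$. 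Conversely, any $y \in S$ satisfies $\length{y}_K \leq D_K(y, \tfrac{\eps}{4} t) + \tfrac{\eps}{4} t \leq m + \tfrac{3\eps}{4} t \leq \lambda_1 + \eps t \leq (1+\eps)\lambda_1$ using $t \leq \lambda_1$, so $y \in \SVP_\eps(K,L)$. This establishes the containment chain $\SVP(K,L) \subseteq S \subseteq \SVP_{\eps}(K,L)$.

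Finally, the runtime bound. By Theorem~\ref{thm:body-lat-enum}, each Lattice-Enum call costs $G(d'K, L) \cdot 2^{O(n)} \cdot \poly(\log(R/r), \log(1/\eps), \enc{B})$ at radius $d'$. The expensive calls use $d' \leq d + t \leq 3\lambda_1$, so by the packing Lemma~\ref{lem:lambda1-bd} applied with the well-centered hypothesis $\gamma \leq 4$ we get $G(d'K, L) \leq (4(1 + 2 \cdot 3))^n = 28^n = 2^{O(n)}$. The doubling loop contributes only an $O(\log(R/r))$ multiplicative factor, and the final filtering does $|U| \leq 2^{O(n)}$ oracle queries, each of $\poly$-sized precision. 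Combining these bounds gives total expected runtime $2^{O(n)} \cdot \poly(\log(R/r), \log(1/\eps), \enc{B})$. The only randomness is inside the M-Ellipsoid call invoked by Lattice-Enum; the ``expected'' qualifier refers to that, and if an M-ellipsoid for $K$ is known in advance (e.g., for $\ell_p$ norms) the algorithm is fully deterministic. The most delicate part of the argument is the book-keeping around the interacting tolerances (the enum tolerance $t$ and the oracle tolerance $\tfrac{\eps}{4}t$), and in particular ensuring that the final filter radius $m + \tfrac{\eps}{2}t$ is simultaneously loose enough to retain every true shortest vector and tight enough to exclude anything outside $\SVP_\eps(K,L)$.
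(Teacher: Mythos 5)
Your proof is correct and follows essentially the same structure as the paper's own argument: use the $(0,r,R)$-centering to seed $t=d=\length{z'}/R\le\lambda_1(K,L)$, argue the doubling loop terminates with $d\le 2\lambda_1$ so the final radius $d+t\le 3\lambda_1$, bound $m$ within $\pm\tfrac{\eps}{4}t$ of $\lambda_1$, and filter; then invoke Lemma~\ref{lem:lambda1-bd} with $\gamma\le 4$ to get $G(3\lambda_1 K,L)\le 28^n$ and multiply by the $O(\log(R/r))$ iterations. The only cosmetic difference is that you prove $S\subseteq\SVP_\eps(K,L)$ directly rather than by contrapositive, and you omit the paper's brief aside verifying that the lattice points and tolerances fed to the distance oracle have polynomially bounded bit-length (which the paper needs to ensure each oracle call is genuinely $\poly$-time); that omission is minor but worth including for completeness.
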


\begin{proof} \ \vspace{-12pt}  
  \paragraph{Correctness:} First note that since $K$ is
  $(0,r,R)$-centered, we know that $\frac{\length{y}}{R} \leq
  \length{y}_K \leq \frac{\length{y}}{r}$ for all $y \in \R^n$. Now
  take $z \in \SVP(K, L)$ and $z' \in \SVP(B_2^n, L)$. Let $\omega =
  \length{z}_K$, and as in the algorithm let $t =
  \frac{\length{z'}}{R}$. Now we have that
  \begin{equation}
    t = \frac{\length{z'}}{R} \leq \frac{\length{z}}{R} \leq \length{z}_K \leq \length{z'}_K \leq \frac{\length{z'}}{r} = t \frac{R}{r}
  \end{equation}
  Therefore $t \leq \omega \leq t \frac{R}{r}$.

  Now for $z \in \SVP(K,L)$, we must show that $z \in S$. Let $d_f$
  denote the final value of $d$ after the while loop terminates. Since
  $U \neq \emptyset$ and $0 \notin U$ after the while loop terminates,
  and since the enumeration algorithm guarantees that $U \subseteq
  \set{y \in L: \length{y-x}_K \leq d_f+t}$, we have that $\omega \leq
  d_f+t$.  Now let $U_f = \text{Lattice-Enum}(K,L,0,d_f+t,t)
  \setminus \set{0}$, i.e. the final setting of the set $U$. By the
  properties of Lattice-Enum, we know that $\set{y \in L:
    \length{y-x}_K \leq d_f+t} \subseteq U_f$, and hence we have that
  $\SVP(K,L) \subseteq U_f$. From the computation of the number $m$,
  during the final stage of the algorithm, we now see that $\omega -
  \frac{\eps}{4}t \leq m \leq \omega + \frac{\eps}{4}t$. Therefore for
  $z \in \SVP(K,L)$, we have that
  \begin{equation}
    D_K(z, \frac{\eps}{4}t) \leq \omega + \frac{\eps}{4}t \leq m + \frac{\eps}{2}t
  \end{equation}
  and hence $z$ will correctly be placed in $S$ as needed.

  Now assume that $z \in L \setminus \set{0}$ and $z \notin
  \SVP(K,L)_\eps$. We must show that $z \notin S$. Since $\omega \geq
  t$ from above, we have that $\length{z}_K > (1+\eps)\omega \geq
  \omega + \eps t$. Therefore, we see that
  \begin{equation}
    D_K(z, \frac{\eps}{4}t) \geq \length{z}_K - \frac{\eps}{4}t > \omega + \frac{3\eps}{4}t \geq m + \frac{\eps}{2}t
  \end{equation}
  and hence $z$ will never be added to $S$ as needed.

  \paragraph{Runtime:} First we run MV to compute an element of
  $\SVP(B_2^n,L)$ which takes $\poly(\enc{B},\enc{x})2^{O(n)}$
  time. Next since $\omega \geq t$ ($\omega,t$ as above), we have that
  $\lambda_1(K,L) \geq t$. Now the enumeration algorithm is seeded
  with $d = t \leq \lambda_1(K,L)$. From here we see that the moment
  $d$ is pushed above $\lambda_1(K,L)$, the set $U$ returned by
  Lattice-Enum will be non-empty. Hence during the execution of the
  while loop, the value of $d$ is never more that $2
  \lambda_1(K,L)$. Furthermore, the last execution of the enumeration
  algorithm is run on $d + t \leq 3\lambda_1(K,L)$. Hence every run of
  the enumeration algorithm happens for distances less than
  $3\lambda_1(K,L)$. Therefore by Lemma $\ref{lem:lambda1-bd}$ and
  Theorem $\ref{thm:body-lat-enum-informal}$, we have that each run of
  the enumeration algorithm takes at most
  \begin{equation}
    \polylog(\tfrac{R}{r}, \tfrac{1}{t}) ~ \poly(\enc{B})~ C^n ~ G(3\lambda_1(K,L)K,L) 
    \leq \polylog(\tfrac{R}{r}, \tfrac{1}{t}) ~ \poly(\enc{B})~ C^n ~ (4 \cdot 7)^n
    \label{eq:svp-alg-1}
  \end{equation}
  Next, since $t \leq \omega \leq t \frac{R}{r}$, we see that we will
  execute the enumeration algorithm at most $\log_2 \frac{R}{r} + 1$
  times. Remembering that $t = \frac{\length{z'}}{R}$, we have that
  all the lattice points of $L$ generated by the algorithm lie inside
  a ball of radius at most $3~\frac{R}{r}~\length{z'} \leq
  3~\frac{R}{r}~\sqrt{n}\length{B}$ around $x$.  Hence, these lattices
  points as well as the number $t$ can be represented using at most
  $\poly(\enc{B}, \enc{x}, \ln(\frac{R}{r}))$ bits. Therefore, apart
  from in the enumeration algorithm, we only evaluate the weak norm
  oracle on inputs of size $\poly(\enc{B}, \ln(\frac{R}{r}), \enc{x},
  \ln \frac{1}{\eps})$ which is polynomial in the input.  Finally, we
  filter the list $U_f$ into $S$, which requires exactly $2|U_f|$
  evaluations of the norm-oracle, where the cardinality of $U_f$ is
  bounded by (\ref{eq:svp-alg-1}). Combining all of the above bounds,
  yields the desired result.
\end{proof}

\subsection{Closest Vector Problem}
\label{sec:clos-vect-probl}

Before presenting our CVP algorithm, we again need a simple
enumeration bound.

\begin{lemma}
  \label{lem:gkl-smooth}
  Let $K \subseteq \R^n$ be a convex body, and let $L \subseteq \R^n$
  denote an $n$-dimensional lattice.  Then for $t > 0$ we have
  \begin{equation}
    G(tK,L) \leq (4t+2)^n \cdot G(K,L)
  \end{equation}
\end{lemma}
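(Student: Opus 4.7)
The plan is a clean volume double-counting argument, after first reducing to the case $0 \in K$.

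First I would observe that $G(\cdot, L)$ is translation-invariant, so replacing $K$ by $K - y_0$ for any fixed $y_0 \in K$ preserves both $G(K, L)$ and $G(tK, L)$ (since $G(tK - ty_0, L) = G(tK, L)$). So I assume henceforth that $0 \in K$, which lets me use the Minkowski identity $tK + K = (t+1)K$ valid for convex $K$ containing the origin.

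Next, let $x^* \in \R^n$ realize the maximum in $G(tK, L)$, and set $S = (tK + x^*) \cap L$. To each $y \in S$ I would associate the translate $B_y = y + K$; since $y - x^* \in tK$ and $tK + K = (t+1)K$, each $B_y$ lies inside $(t+1)K + x^*$.

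Finally I would integrate the counting function $\sum_{y \in S} \mathbf{1}_{B_y}$ over $\R^n$. The integral equals $|S| \cdot \vol(K)$. At every point $z$, the value $\sum_{y \in S} \mathbf{1}_{B_y}(z)$ counts lattice points $y \in S$ with $y \in z - K$, so it is at most $|L \cap (z - K)| \leq G(-K, L) = G(K, L)$, where the final equality uses $-L = L$. Combined with the containment $\bigcup_y B_y \subseteq (t+1)K + x^*$, this yields
\[
|S| \cdot \vol(K) \leq G(K, L) \cdot (t+1)^n \vol(K),
\]
and therefore $G(tK, L) = |S| \leq (t+1)^n G(K, L) \leq (4t+2)^n G(K, L)$. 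The argument has no real obstacle; the only subtlety is noticing the symmetry $G(-K, L) = G(K, L)$, which is immediate from $-L = L$, and ensuring that the translation reduction to $0 \in K$ is harmless (which it is, precisely because both $G(tK,L)$ and $G(K,L)$ depend only on the shape of $K$, not its position).
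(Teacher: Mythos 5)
Your proof is correct, and it takes a genuinely different route from the paper's. The paper argues via covering: it bounds $N(tK,K)$, the number of translates of $K$ needed to cover $tK$, and then uses $G(tK,L)\leq N(tK,K)\cdot G(K,L)$. Since the standard packing bound for $N(\cdot,B)$ needs $B$ symmetric, the paper first replaces $K$ by $K\cap -K$ (after centering at the centroid) and invokes the Milman--Pajor inequality $\vol(K)\leq 2^n\vol(K\cap -K)$, which is where the extra $2^n$ factor in $(4t+2)^n=(2(2t+1))^n$ comes from. You instead run a direct volume double-counting (Fubini) argument: placing a copy of $K$ at each point of $S=(tK+x^*)\cap L$, bounding the multiplicity at any $z$ by $\abs{(z-K)\cap L}\leq G(-K,L)=G(K,L)$, and comparing total mass against $\vol((t+1)K)$. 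This cleanly avoids symmetrization and yields the sharper bound $(t+1)^n\,G(K,L)$, which of course dominates $(4t+2)^n\,G(K,L)$. One small remark: the reduction to $0\in K$ is actually unnecessary, since the identity $tK+K=(t+1)K$ holds for any convex $K$ and $t\geq 0$ (it only uses convexity, not $0\in K$); but keeping it is harmless. The identity $G(-K,L)=G(K,L)$, justified via $-L=L$, is the one subtle point and you handle it correctly.
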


\begin{proof}
  Since $G(tK,L)$ is invariant under shifts of $K$, we may assume that
  $b(K) = 0$. Since $b(K) = 0$, from~\cite{MP00} we know that
  $\vol(K) \leq 2^n \vol(K \cap -K)$ (Theorem
  \eqref{thm:symmetrize}). We remember that $N(tK, K \cap -K)$ denotes
  the minimum number of translates of $K \cap -K$ needed to cover
  $tK$. Since $K \cap -K$ is symmetric, by a standard packing argument
  we have that
  \begin{align}
    \begin{split}
      N(tK, K \cap -K) &\leq \frac{\vol(tK + \frac{1}{2}(K \cap
        -K))}{\vol(\frac{1}{2}(K \cap -K))}
      \leq \frac{\vol(tK + \frac{1}{2}K)}{\vol(\frac{1}{2}(K \cap -K))} \\
      &= \left(\frac{t+\frac{1}{2}}{\frac{1}{2}}\right)^n
      \frac{\vol(K)}{\vol(K \cap -K)} \leq (2t+1)^n 2^n = (4t+2)^n.
    \end{split}
  \end{align}
  Next since $K \cap -K \subseteq K$, we have that $N(tK, K) \leq
  N(tK, K \cap -K)$. Now let $\Lambda \subseteq \R^n$ denote a set
  satisfying $|\Lambda| = N(tK, K)$ and $tK \subseteq \bigcup_{x \in
    \Lambda} x + K$. Then for $c \in \R^n$ we have that
  \begin{align}
    \begin{split}
      |tK + c \cap L| &\leq |(\Lambda + c + K) \cap
      L|
      \leq \sum_{x \in \Lambda} |(x + c + K) \cap L| \\
      &\leq |\Lambda| \cdot G(K,L) = N(tK, K) \cdot G(K,L) \leq
      (4t+2)^n \cdot G(K,L)
    \end{split}
  \end{align}
  as needed.
\end{proof}

We can now state the algorithm and main theorem of this section.

\begin{algorithm}
  \caption{Closest-Vectors$(K, L, x, \epsilon)$}
  \label{alg:cvp-algorithm}
  \begin{algorithmic}[1]

    \REQUIRE An $(0,r,R)$-centered convex body $K$ with weak distance oracle $D_K$ for
    $\length{\cdot}_K$, a basis $B$ for a lattice $L$, an input
    point $x$, and $0 < \eps < 1$.

    \ENSURE $S \subseteq L$, $\CVP(K,L,t) \subseteq S \subseteq
    \CVP_{\eps}(K,L,t)$

    \IF{$x \in L$}

    \RETURN $\set{x}$

    \ENDIF

    \STATE Compute $z \in \CVP(B_2^n, L)$ using the MV algorithm. Set
    $t,d \leftarrow \nicefrac{\length{z}}{R}$

    \REPEAT

    \STATE $U \leftarrow \text{Lattice-Enum}(K,L,x,d,t)$

    \IF{$U = \emptyset$}

    \STATE $d \leftarrow 2 d$

    \ENDIF

    \UNTIL{$U \neq \emptyset$}

    \STATE $U \leftarrow \text{Lattice-Enum}(K,L,x,d+t,t)$

    \STATE $m \leftarrow \min \set{D_K(y-x, \frac{\eps}{4}t): y \in
      U}$

    \STATE $S \leftarrow \set{y: D_K(y-x, \frac{\eps}{4}t) \leq m +
      \frac{\eps}{2} ~t, y \in U}$

    \RETURN $S$
  \end{algorithmic}
\end{algorithm}

\begin{theorem}[Correctness of Closest-Vectors]
  \label{thm:cvp-alg}
  If $K$ is well-centered, i.e., $\vol(K \cap -K) \geq 4^{-n}
  \vol(K)$, then Algorithm~\ref{alg:cvp-algorithm} computes a set $S
  \subseteq L$ such that
  \begin{equation}
    \CVP(K,L,x) \subseteq S \subseteq \CVP_{\epsilon}(K,L,x)
  \end{equation}
  in expected time
  \begin{equation}
    2^{O(n)} \cdot G(d K,L) \cdot \poly(\log(\tfrac{1}{\eps}),\log(\tfrac{R}{r}), \enc{B},\enc{x}),
  \end{equation}
  where $d = d_K(L,x)$.
\end{theorem}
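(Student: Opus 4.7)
The proof will closely parallel that of Theorem~\ref{thm:svp-alg}, with Lattice-Enum as the workhorse and the slack parameter $t$ (derived from an $\ell_2$ CVP solution) controlling both the correctness granularity and the number of doubling iterations. After handling the trivial case $x \in L$, I would let $z \in \CVP(B_2^n, L, x)$ be the $\ell_2$-closest vector returned by MV, set $t = \length{z-x}/R$, and write $\omega = d_K(L, x)$ (which is the ``$d$'' of the theorem statement). The $(0,r,R)$-centered hypothesis gives $\length{y}/R \leq \length{y}_K \leq \length{y}/r$ for all $y$; applying this to $z - x$ and to an optimal $K$-closest offset yields $t \leq \omega \leq t\,R/r$.

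For correctness, the doubling loop terminates by the following standard argument: whenever $d \geq \omega$, the Lattice-Enum guarantee $\set{y : \length{y-x}_K \leq d} \subseteq U$ forces $U$ to contain a true closest vector, and conversely $U = \emptyset$ forces $d + t < \omega$. Hence the final value $d_f$ satisfies $d_f \leq 2\omega$, so the last enumeration call at parameter $d_f + t \leq 3\omega$ returns a set $U_f$ with $\CVP(K,L,x) \subseteq U_f \subseteq \set{y \in L : \length{y-x}_K \leq d_f + 2t}$. For the filtering step, every $y \in U_f$ satisfies $\length{y-x}_K \geq \omega$ by definition of $\omega$, while any $y^\ast \in \CVP(K,L,x)$ has $\length{y^\ast - x}_K = \omega$; combined with the $\eps t/4$-accuracy of the weak distance oracle, this pins down $\omega - \eps t/4 \leq m \leq \omega + \eps t/4$. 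Standard bookkeeping then shows $\CVP(K,L,x) \subseteq S$, and any $y \in S$ satisfies $\length{y-x}_K \leq m + 3\eps t/4 \leq \omega + \eps t \leq (1+\eps)\omega$ using $t \leq \omega$, placing it in $\CVP_\eps(K,L,x)$.

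For the runtime, the MV call costs $2^{O(n)} \poly(\enc{B}, \enc{x})$; since $t \leq \omega \leq t\,R/r$, the loop executes at most $O(\log(R/r))$ iterations, and every Lattice-Enum call uses a distance parameter at most $3\omega$. Here the well-centered assumption enters via Lemma~\ref{lem:gkl-smooth} (which invokes Theorem~\ref{thm:symmetrize}), giving $G(3\omega K, L) \leq 14^n \cdot G(\omega K, L) = 2^{O(n)} \cdot G(dK, L)$. Plugging this into Theorem~\ref{thm:body-lat-enum} yields the claimed bound. The main subtlety is the bookkeeping of the two independent slacks $\eps$ and $t$: because $t$ can be as large as $\omega$, one must check that the oracle granularity $\eps t/4$ is fine enough to separate true optima from $(1+\eps)$-violators, and that the loop termination argument remains valid even when Lattice-Enum spuriously includes ``false positives'' with $\length{y-x}_K \in (d, d+t]$ (which it may, but such $y$ still satisfy $\length{y-x}_K \geq \omega$, so they do not corrupt the computed minimum $m$).
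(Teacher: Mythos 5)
Your proposal is correct and follows essentially the same structure as the paper's proof: the same doubling loop bounded by the $\ell_2$ distance estimate $t$, the same $t \leq \omega \leq tR/r$ sandwich, the same invocation of Lemma~\ref{lem:gkl-smooth} to turn $G(3\omega K, L)$ into $14^n G(\omega K,L)$, and the same $\eps t/4$ oracle-accuracy bookkeeping for the final filter. One small slip: from the Lattice-Enum guarantee $\set{y : \length{y-x}_K \leq d} \subseteq U \subseteq \set{y : \length{y-x}_K \leq d+t}$, the condition $U = \emptyset$ only forces $d < \omega$, not $d + t < \omega$ as you wrote; the bound $d_f \leq 2\omega$ still follows from the weaker implication, so your downstream argument is unaffected. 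Also, the well-centered hypothesis is not actually what powers Lemma~\ref{lem:gkl-smooth}---that lemma re-centers $K$ internally via shift-invariance of $G$ and then applies Theorem~\ref{thm:symmetrize}, so it holds for any convex body; the well-centered hypothesis appears in the theorem statement mainly for consistency with the SVP theorem and for the downstream application (the remark after the theorem, bounding $G(dK,L)$ via Lemma~\ref{lem:lambda1-bd}).
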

The proof is essentially identical to the one for SVP.

\begin{proof}\ \vspace{-12pt}
  \paragraph{Correctness:} If $x \in L$, clearly there is nothing to do, so
assume $x \notin L$. First note that since $K$ is $(0,r,R)$-centered, we know
that $\frac{\length{y}}{R} \leq \length{y}_K \leq \frac{\length{y}}{r}$ for all
$y \in \R^n$. Now take $z \in \CVP(K, L, x)$ and $z' \in \CVP(B_2^n,L,x)$. Let
$\omega = \length{z-x}_K$, and as in the algorithm let $t =
\frac{\length{z'-x}}{R}$. Now we have that
  \begin{equation}
    t = \frac{\length{z'-x}}{R} \leq \frac{\length{z-x}}{R} \leq \length{z-x}_K \leq \length{z'-x}_K \leq \frac{\length{z'-x}}{r} = t \frac{R}{r}
  \end{equation}
  Therefore $t \leq \omega \leq t \frac{R}{r}$. Now for $z \in
  \CVP(K,L,x)$, we must show that $z \in S$. Let $d_f$ denote the
  final value of $d$ after the while loop terminates. Since $U \neq
  \emptyset$ after the while loop terminates, and since the
  enumeration algorithm guarantees that $U \subseteq \set{y \in L:
    \length{y-x}_K \leq d_f+t}$, we have that $\omega \leq d_f+t$.
  Now let $U_f = \mathrm{Enumerate}(K,L,x,d_f+t,t)$, i.e. the final
  setting of the set $U$.  By the properties Lattice-Enum, we know
  that $\set{y \in L: \length{y-x}_K \leq d_f+t} \subseteq U_f$, and
  hence we have that $\CVP(K,L,x) \subseteq U_f$. From the computation
  of the number $m$, during the final stage of the algorithm, we now
  see that $\omega - \frac{\eps}{4}t \leq m \leq \omega +
  \frac{\eps}{4}t$. Therefore for $z \in \CVP(K,L,x)$, we have that
  \begin{equation}
    D_K(z-x, \frac{\eps}{4}t) \leq \omega + \frac{\eps}{4}t \leq m + \frac{\eps}{2}t
  \end{equation}
  and hence $z$ will correctly be placed in $S$ as needed.

  Now assume that $z \in L$ and $z \notin \CVP(K,L,x)_\eps$. We must
  show that $z \notin S$. Since $\omega \geq t$ from above, we have
  that $\length{z-x}_K > (1+\eps)\omega \geq \omega + \eps
  t$. Therefore, we see that
  \begin{equation}
    D_K(z-x, \frac{\eps}{4}t) \geq \length{z-x}_K - \frac{\eps}{4}t > \omega + \frac{3\eps}{4}t \geq m + \frac{\eps}{2}t
  \end{equation}
  and hence $z$ will never be added to $S$ as needed.

  \paragraph{Runtime:} We first check if $x \in L$, this take
  $\poly(\enc{B},\enc{x})$ time. Next, we run the MV algorithm to
  compute an element of $\CVP(B_2^n,L,x)$ which takes
  $\poly(\enc{B},\enc{x})2^{O(n)}$ time. Next, note that since $\omega
  \geq t$ ($\omega,t$ as above), we have that $d_K(L,x) \geq t$. Now
  the enumeration algorithm is seeded with $d = t \leq d_K(L,x)$. Now
  we note that the moment $d$ is pushed above $d_K(L,x)$, the set $U$
  returned by the enumeration algorithm will be non-empty. Hence
  during the execution of the while loop, the value of $d$ is never
  more that $2 d_K(L,x)$. Furthermore, the last execution of the
  enumeration algorithm is run on $d + t \leq 3d_K(L,x)$.  Hence every
  run of the enumeration algorithm happens for distances less than
  $3d_K(L,x)$. Therefore by Lemma $\ref{lem:gkl-smooth}$ and Theorem
  $\ref{thm:body-lat-enum}$, we have that each run of the enumeration
  algorithm takes at most
  \begin{multline}
    \polylog(\nicefrac{R}{r}, \nicefrac{1}{t}) ~ \poly(\enc{B})~ C^n ~ G(3d_K(L,x)K,L) \\
    \leq \polylog(\nicefrac{R}{r}, \nicefrac{1}{t}) ~ \poly(\enc{B})~
    C^n ~ 14^n G(d_K(L,x)K,L)
    \label{eq:cvp-alg-1}
  \end{multline}
  Next since $t \leq \omega \leq t \frac{R}{r}$, we see that we will
  execute the enumeration algorithm at most $\ln_2 \frac{R}{r} + 1$
  times. Now remembering that $t = \frac{\length{z'-x}}{R}$, we see
  that all the lattice points of $L$ generated by the algorithm lie
  inside a ball of radius at most $3~\frac{R}{r}~\length{z'-x} \leq
  3~\frac{R}{r}~\sqrt{n}\length{B}$ around $x$. Hence, these lattices
  points as well as the number $t$ can be represented using at most
  $\poly(\enc{B}, \enc{x}, \ln(\frac{R}{r}))$ bits. Therefore, apart
  from in the enumeration algorithm, we only evaluate the weak norm
  oracle on inputs of size $\poly(\enc{B}, \ln(\frac{R}{r}), \enc{x},
  \ln \frac{1}{\eps})$ which is polynomial in the input. Finally, we
  filter the list $U_f$ into $S$, which requires exactly $2|U_f|$
  evaluations of the norm-oracle, where the cardinality of $U_f$ is
  bounded by (\ref{eq:cvp-alg-1}). Combining all of the above bounds,
  yields the desired result.
\end{proof}

Though the runtime of the Closest-Vectors algorithm cannot be bounded
bounded in general due to the $G(dK, L)$ term, its running time can be
controlled in interesting special cases. For example, if $K$ is
well-centered and $d_K(L,x) \leq \alpha \lambda_1(K,L)$, i.e. the
target point is relatively close to the lattice, then by
Lemma~\ref{lem:lambda1-bd} the main complexity term of the
Closest-Vectors algorithm on $K,L,x$ becomes
\begin{equation}
  G(d_K(L,x) K, L) \leq \left(4\left(1 + \frac{2 d_K(L,x)}{\lambda_1(K,L)}\right)\right)^n \leq \left(4 + 8\alpha\right)^n
\end{equation}
which is of order $2^{O(n)}$ when $\alpha = O(1)$. With this bound, we
recover (up to large $C^n$ factors) the running time of the AKS sieve
for exact CVP when the target point is close.

\subsection{Integer Programming}\label{sec:ip}

In this section, we present an algorithm for integer programming feasibility based on a general norm SVP solver.
Relying on the best known bounds for the flatness theorem (see Equation~\ref{eq:flatness}), we show that our algorithm
achieves a modest improvement in complexity of IP. For a brief history, the first fixed dimension polynomial time
algorithm for integer linear programming is due to Lenstra in ~\cite{lenstra83:_integ_progr_with_fixed_number_of_variab}
which achieved an essential complexity is $2^{O(n^3)}$. This was dramatically improved by Kannan in
\cite{kannan87:_minkow_convex_body_theor_and_integ_progr} reducing the complexity to $O(n^{2.5})^n$. The next
improvement is due K{\"o}ppe and Hildebrand \cite{arxiv/HildebrandK10} reducing the complexity to $O(n^2)^n$ while
generalizing to feasible regions defined by quasi-convex polynomials. Here we present an algorithm which runs in
$O(n^{\nicefrac{4}{3}}\log^{O(1)} n)^n$, for feasible regions equipped with a strong separation oracle (see Definition
\ref{def:strong-sep}). 

Let $f^*(n)$ denote the optimal function for the flatness theorem. Our main result here is as follows:

\begin{theorem}[Integer Programming] 
\label{thm:ip}
Let $K \subseteq R B_2^n$ be a convex body given by a strong separation oracle $SEP_K$. Let $L \subseteq \R^n$ be a
$n$-dimensional lattice given by a basis $B \in \Q^{n \times n}$. Then there exists an algorithm which either decides
that $K \cap L = \emptyset$, or returns a point $x \in K \cap L$ in expected time
\[
O(f^*(n))^n \poly(\enc{R}, \enc{a_0}, \enc{B})
\]
\end{theorem}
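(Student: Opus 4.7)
The plan is to follow the Lenstra-Kannan recursive framework, with the key improvement that the small-lattice-width direction is obtained by running our general-norm SVP algorithm (Theorem~\ref{thm:svp-alg}) directly on the symmetrized body $(K-K)^*$ and the dual lattice $L^*$, rather than on an ellipsoidal outer approximation of $K$. This is what allows us to pay only $O(f^*(n))$ (instead of $O(n^2)$ or $O(n^{2.5})$) for each level of the recursion.

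First, I would preprocess the input. Using the strong separation oracle $SEP_K$ together with the ellipsoid method, compute a Löwner--John-type ellipsoidal approximation $E$ satisfying $E \subseteq K-t \subseteq n^{3/2} E$ for some $t \in \R^n$, and apply a linear transformation $T$ so that $T(E)$ becomes the unit ball; simultaneously transform $L$ by the same map and compute a reduced basis of the transformed lattice. This both bounds bit-lengths across the recursion and puts $K$ into position so that the downstream oracles for $(K-K)^*$ and $L^*$ behave well. In the degenerate case where the transformed ellipsoid is exceedingly thin relative to the lattice determinant (i.e.\ $\vol(E)$ is much smaller than $\det(L)$), one can reduce directly to a lower-dimensional problem on a sublattice, exactly as in Lenstra and Kannan.

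The heart of the algorithm is the width step. The body $(K-K)^*$ is centrally symmetric (hence well-centered with $\gamma = 1$), and a weak distance oracle for $\length{\cdot}_{(K-K)^*}$ can be constructed in polynomial time from $SEP_K$ via standard convex-geometry machinery, as $\length{y}_{(K-K)^*} = \sup_{x \in K} \pr{y}{x} - \inf_{x \in K} \pr{y}{x}$. By Theorem~\ref{thm:m-ellipsoid} an M-ellipsoid of $(K-K)^*$ is computable in expected $2^{O(n)}$ time, and then Theorem~\ref{thm:svp-alg} yields $y \in \SVP((K-K)^*, L^*)$ in expected $2^{O(n)}$ time. By the flatness theorem~\eqref{eq:flatness}, if $K \cap L = \emptyset$ then $\mu(K,L) \geq 1$, so
\begin{equation*}
  \length{y}_{(K-K)^*} \;\leq\; \mu(K,L) \cdot \lambda_1((K-K)^*, L^*) \;\leq\; f^*(n).
\end{equation*}
Hence every point of $K \cap L$ lies on one of the at most $f^*(n)+1$ integer hyperplanes $H_k = \set{x : \pr{y}{x} = k}$ with $k \in \Z \cap [\inf_{x \in K}\pr{y}{x},\, \sup_{x \in K}\pr{y}{x}]$. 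For each such $k$ I would recurse on the $(n-1)$-dimensional subproblem: $K \cap H_k$ (whose strong separation oracle is immediate from $SEP_K$), and the $(n-1)$-dimensional affine lattice $L \cap H_k$, whose basis is computable in polynomial time via an extended-gcd computation to solve $\pr{y}{x} = k$ on $L$. The base case $n = 1$ is a trivial interval-versus-integer-interval check. The runtime recurrence
\begin{equation*}
  T(n) \;\leq\; (f^*(n)+1) \cdot T(n-1) + 2^{O(n)} \poly(\enc{R}, \enc{B})
\end{equation*}
unrolls, using monotonicity of $f^*$, to $T(n) = O(f^*(n))^n \poly(\enc{R}, \enc{B})$, as required.

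The main obstacles are (i) bookkeeping the bit-complexity through the recursion so that the polynomial overhead does not blow up --- this requires applying the rounding/reduction step at every recursion level so that the ellipsoidal approximation of the current slice stays well-scaled, as in Kannan's original analysis --- and (ii) correctly handling the ``flat'' degenerate case where the preprocessing ellipsoid has volume much smaller than the lattice determinant, since there the flatness theorem gives no useful bound and we must instead project and recurse. Both issues are classical for Lenstra-type algorithms; the novelty here is purely the use of Theorem~\ref{thm:svp-alg} for the width oracle, which lets the branching factor match the true combinatorial bound $f^*(n)$ rather than a polynomial-factor relaxation.
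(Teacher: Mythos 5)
Your overall architecture matches the paper's --- preprocessing/rounding, running the general-norm SVP solver on $(K-K)^*$ against $L^*$ to find a thin direction, branching on hyperplanes, and a recurrence $T(n) \leq (f^*(n)+O(1)) T(n-1) + 2^{O(n)}\poly$. The preprocessing details differ cosmetically (the paper uses an explicit basis-refinement step via successive minima of $L$ under $\ell_2$, plus GLS-Round with a volume threshold tied to $\det(L)$), but that is bookkeeping.

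There is, however, a genuine gap in your width step, and it is exactly the subtle point the paper spends a full claim on. You argue: ``if $K \cap L = \emptyset$ then $\mu(K,L) \geq 1$, so $\length{y}_{(K-K)^*} \leq f^*(n)$; hence every point of $K \cap L$ lies on one of at most $f^*(n)+1$ hyperplanes.'' This is a non sequitur. The bound on the lattice width is derived under the hypothesis $K \cap L = \emptyset$, in which case the conclusion about points of $K \cap L$ is vacuous; and when $K \cap L \neq \emptyset$ (the case you actually need), the flatness theorem gives $\lambda_1((K-K)^*,L^*) \leq f^*(n)/\mu(K,L)$, which is unbounded as $\mu(K,L) \to 0$. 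Concretely, if $K$ is a large ball containing many lattice points, the lattice width of $K$ along the SVP direction $y$ is $\Omega(\text{diam}(K))$, and naively recursing on every integer hyperplane crossing $K$ blows up the branching factor. Note that this is \emph{not} the ``thin ellipsoid'' degenerate case you flag in your last paragraph --- that handles small $K$, not large $K$.

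The paper's fix is to always branch on a \emph{fixed-size} set of hyperplanes, $A = [l-1, \min\{u+1,\, l + f^*(n)+1\}] \cap \Z$, and then prove a claim that this suffices: if $K \cap L \neq \emptyset$ then some lattice point of $K$ has $\pr{y}{x} \in A$. The nontrivial direction of the claim is when the width of $K$ exceeds $f^*(n)+1$. There one shrinks $K$ homothetically about the minimizing face to a body $\tilde{K} \subseteq K$ of width exactly $f^*(n)+1$ along $y$, observes (using that $y$ near-minimizes the width norm) that $\lambda_1((\tilde{K}-\tilde{K})^*, L^*) \geq f^*(n)$, and then applies flatness \emph{to $\tilde{K}$} to conclude $\mu(\tilde{K},L) \leq 1$, hence $\tilde{K} \cap L \neq \emptyset$. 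This yields a lattice point of $K$ confined to the $f^*(n)+2$ hyperplanes through $\tilde{K}$. Without this shrinking argument you cannot bound the branching factor, so the recurrence you write down is not justified.
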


Unfortunately, the algorithm described above is not agnostic to the
value of $f^*$, its exact value (or any known upper bound) is needed
in the code of to guarantee the algorithm's correctness. Hence using
the best known bounds on $f^*(n)$ (see \cite{Bana99,R00}), we get an
algorithm of essential complexity $(c_1 n^{\nicefrac{4}{3}} \log^{c_2}
n)^n$ for absolute constants $c_1,c_2$.

We give an outline of the algorithm. The algorithm works as almost all previous IP algorithms do, i.e. by finding a
``thinnest'' width direction of $K$ with respect to $L$. More precisely, we adopt a recursive solution strategy, where
given $K$ and $L$ as above, we seek to find a small collection of parallel hyperplanes $H_k$, $k \in A$, such that if $K
\cap L \neq \emptyset$ then for some $k \in A$ we have that $K \cap L \cap H_k \neq \emptyset$. At this point, we simply
solve the integer program with respect to $K \cap H_k$, $L \cap H_k$ recursively for each $k \in A$, and decide that $K
\cap L$ is empty if all the subproblems return empty and return any found lattice point otherwise.  As we will explain
below, finding the above set of hyperplanes reduces to solving a shortest vector problem with respect to a general norm,
in particular the ``width'' norm of $K$, i.e. $\|x\|_{(K-K)^*} = \sup_{y \in K} \pr{y}{x} - \inf_{y \in K} \pr{y}{x}$.
In previous IP algorithms, the alluded to SVP problem is solved only approximately via a reduction to $\ell_2$ (i.e. via
an ellipsoidal approximation of the norm). The main source of improvement for our algorithm comes from the fact the we
solve the associated SVP exactly using a general norm SVP solver.

\begin{proof}[Proof of Theorem \ref{thm:ip}]
\hspace{1em} 

\paragraph{IP ALGORITHM:}

\paragraph{\emph{Basis Refinement:}}

As a first step, we will reduce to working with a lattice admitting a basis of length at most $2\sqrt{n} R$. This will
allow us to control the encoding length of the basis after each recursive invocation of the IP algorithm. To begin, we
use the MV algorithm for CVP to compute a closest vector $p \in L$ to $a_0$ in the $\ell_2$ norm. If $\|p-a_0\|_2 > R$
we declare that $K \cap L = \emptyset$ (since $K \subseteq a_0 + RB_2^n$). Otherwise, we again use the MV algorithm to
compute linearly independent lattice vectors $v_1,\dots,v_n$ achieving the successive minima of $L$, i.e. where
$\|v_i\|_2 = \lambda_i(L)$. Both invocations of the MV algorithm here take at most $2^{O(n)}\poly(\enc{B})$ time.
Letting $v_0 = 0$, compute the largest index $k$, $0 \leq k \leq n$, such that $\|v_k\| \leq 2R$. Now let $L' = L \cap
\mathrm{span}(v_0, v_1,\dots,v_k)$. 

\paragraph{Claim:} $L \cap a_0 + RB_2^n \subseteq p + L'$.
\begin{proof}
Take $y \in L \cap a_0 + RB_2^n$. Since $p \in a_0 + RB_2^n$, we have that $\|y-p\|_2 \leq 2R$. Assume that $y-p \notin
\mathrm{span}(v_0,v_1,\dots,v_k)$. Since $y-p \in L$ and $y-p$ is linearly independent from $v_0,v_1,\dots,v_k$, we get
that $\lambda_{k+1}(L) \leq 2R$. But by our choice of $k$, we know that $\lambda_{k+1}(L) > 2R$, a clear contradiction.
Therefore $y-p \in L \cap \mathrm{span}(v_0,v_1,\dots,v_k) \cap L = L'$, as needed.
\end{proof}

Since $K \subseteq a_0 + RB_2^n$, from the above claim we get that it suffices to check whether $K-p \cap L' =
\emptyset$ to solve IP feasibility with respect to $K$ and $L$. Now using standard techniques (Chris: reference needed),
we may compute a basis $B'$ for $L'$ using $v_0,v_1,\dots,v_k$ sastifying $\|B'\|_2 \leq \sqrt{k}\|v_k\|_2 \leq
2\sqrt{k}R$ in polynomial time. Let $W = \mathrm{span}(L')$ denote the linear span of $L'$, $a_0'$ denote the orthogonal
projection of $a_0-p$ onto $W$, and let $R' = \sqrt{R^2-\|a_0-a_0'\|^2}$. It is easy to check that $K-p \cap W$ is
$(a_0',R')$-circumscribed in $W$. Given that may restrict our attention to points in $K-p \cap L'$, for the rest of the
algorithm we replace $L$ by $L'$, $K$ by $K-p \cap W$ (for which a strong separation oracle is readily available via
Lemma \ref{lem:sep-slice}), and $(a_0,R)$ by $(a_0',R')$.

\paragraph{\emph{Localizing $K$:}} For the next step, we compute a strong enough ellipsoidal approximation of $K$ to begin
inferring about how $K$ interacts with $L$. To do this, we use algorithm GLS-Round (Theorem \ref{thm:gls-round}),
running against $K$ with parameter $\eps = \left(\frac{1}{4n}\right)^n\det(L)$ to deterministically compute an ellipsoid
$E+t$ such that either (1) $\vol(E) \leq \eps$ (i.e. $E$ is tiny compared to the `sparsity' of $L$), or (2) $E$
sandwiches $K$ well, i.e. $t + \frac{1}{\sqrt{n}(n+1)} E \subseteq K \subseteq t + E$. This step can be done in
$\poly(n, \log \nicefrac{R}{\eps}) = \poly(n, \enc{R}, \enc{\det(L)})$ time.

\paragraph{\emph{Branching on a ``thinnest'' width direction of $K$:}}  Here we wish to find a dual vector $y \in L^*$,
such that there exists a small number of hyperplanes of the form $H_k = \set{x: \pr{x}{y} = k}$, $k \in A \subseteq \Z$,
with the property that if $K$ contains a point of $L$ then there exists a lattice point in $H_k \cap K \cap L \neq
\emptyset$ for some $k \in A$. At this point, as explained previously, we recurse on $K \cap H_k$, $L \cap H_k$, for
all $k \in A$.  To implement such a recursive call for a specific $H_k$, $k \in A$, we compute a basis for $L \cap
H_0$ and a point $p \in H_k \cap L$, and call the IP procedure on $K-p \cap H_0$ and $L \cap H_0$. All the preprocessing
here can be done in polynomial time via standard methods, where as above we note that a strong separation oracle for
$K-p \cap H_0$ is readily computable via Lemma \ref{lem:sep-slice}.

Now to find such a $y$ and set $A$, we proceed as follows. If we are in case (1) above, we use the MV algorithm to
compute a vector $y \in \SVP(E^*,L^*)$, which can be done in $2^{O(n)} \poly(\enc{B})$ time. Noting that $(E-E)^* =
\nicefrac{1}{2}E^*$, we see that
\[
\vol((E-E)^*) = \left(\frac{1}{2}\right)^n \vol(E^*) = \left(\frac{1}{2}\right)^n \vol(B_2^n)^2 \frac{1}{\vol(E)} 
> \left(\frac{1}{2n}\right)^n \frac{1}{\vol(E)}
\]
Given that $\vol(E)^{\nicefrac{1}{n}} \leq \eps = \frac{1}{4n} \det(L)^{\nicefrac{1}{n}}$, from the above we see
that 
\[
\vol((E-E)^*)^{\nicefrac{1}{n}} > \frac{1}{2n} \frac{1}{\vol(E)^{\nicefrac{1}{n}}} \geq 2
\frac{1}{\det(L)^{\nicefrac{1}{n}}} = 2 \det(L^*)^{\nicefrac{1}{n}}
\]
Since $(E-E)^* = \nicefrac{1}{2}E^*$ is centrally symmetric, by Minkowski's first theorem (Theorem
\ref{thm:minkowski-first}) we have that $2\|y\|_{E^*} = \|y\|_{(E-E)^*} = \lambda_1((E-E)^*, L^*) < 1$. We remember that
$\|y\|_{(E-E)^*} = \sup_{x \in E} \pr{y}{x} - \inf_{x \in E} \pr{y}{x}$ is the width of $E$ with respect to $y$. Since
$y \in L^*$ we note that for any $x \in E+t \cap L$, we must have that $\pr{x}{y} \in (\pr{y}{t} + [\inf_{x \in E}
\pr{y}{x}, \sup_{x \in E} \pr{y}{x}] )\cap \Z$. Since $E$ has width $< 1$ with respect to $y$, it is easy to see that if
$E+t \cap L$ is non-empty then all the lattice points in $E+t \cap L$ must lie on the hyperplane $H = \set{x \in \R^n:
\pr{x}{y} = \round{t}}$. Since $K \subseteq E+t$, it is also clearly the case that $K \cap L \subseteq H \cap L$. To
finish with this case, we now recursively solve the Integer Program with respect $K \cap H$ and $L \cap H$, returning
empty iff $K \cap L \cap H = \emptyset$. 

If we are in case $(2)$, we know $K$ is well-sandwiched by $E$, i.e. $t + \frac{1}{\sqrt{n}(n+1)}E \subseteq K \subseteq
t + E$. To find a thin direction for $K$, we shall compute $y \in \SVP((K-K)^*,L^*,1)$. To do this, we must build a
weak distance oracle for $(K-K)^*$. Given that $K$ is well sandwiched by $E$, using the Ellipsoid Method (theorem
\ref{thm:convex-opt}), for any $y \in \Q^n$ and $\eps > 0$, we may compute $l,u \in \Q$ satisfying
\[
l - \nicefrac{\eps}{2} \leq \inf_{x \in K} \pr{y}{x} \leq l \quad u \leq \sup_{x \in K} \pr{y}{x} \leq u +
\nicefrac{\eps}{2}
\]
in polynomial time. We note now that 
\[
|\|y\|_{(K-K)^*} - (u-l)| = |\sup_{x \in K} \pr{y}{x} - \inf_{x \in K} \pr{y}{x} - (u-l)| \leq \eps
\]
as needed. Next, the SVP algorithm needs sandwiching guarantees on $(K-K)^*$. Given our guarantees on $K$, we see that
$\frac{1}{2}E^* = (E-E)^* \subseteq K-K \subseteq \frac{1}{2}(n+1)\sqrt{n}E^*$. Technically, the algorithm
Shortest-Vectors requires the sandwiching ratio with respect to euclidean balls, but this type of sandwiching is
equivalent to ellipsoidal sandwiching after linear transformation. Having constructed a weak distance oracle for
$(K-K)^*$ and computed the sandwiching guarantees, we may now call Shortest-Vectors($(K-K)^*, L^*, 1$) (Theorem
\ref{thm:svp-alg}) and retrieve $y \in L^*$ from the output. Since the sandwiching guarantees are polynomial in $n$ and
the required accuracy is $O(1)$, this call be executed in expected time $2^{O(n)} \poly(\enc{B}, \enc{E})$ time. Using
the Ellipsoid Method (theorem \ref{thm:convex-opt}) as above, we compute bounds $u, l \in \Q$ satisfying $u \leq \sup_{x
\in K} \pr{y}{x} \leq u + 1$ and $l - 1 \leq \inf_{x \in K} \pr{y}{x} \leq l$ in polynomial time. Now compute $A = [l -
1, \min \set{u + 1, l + f^*(n) + 1}] \cap \Z$. We now show that it suffices to restrict our attention to the hyperplanes
$H_k = \set{x \in \R^n: \pr{x}{y} = k}$ for $k \in A$. 

\paragraph{Claim:} If $K \cap L \neq \emptyset$, then there exists $x \in K \cap L$ such that $\pr{y}{x} \in A$.
\begin{proof}

First, if $l + f^*(n) + 1 \geq \sup_{x \in K} \pr{y}{x}$, then by our guarantees on $u$ and $l$ we have that $A
\supseteq [\inf_{x \in K} \pr{y}{x}, \sup_{x \in K} \pr{y}{x}] \cap \Z$. Since $\pr{y}{x} \in \Z$ for any $x \in L$, we
clearly have that $x \in K \cap L \Rightarrow \pr{x}{y} \in A$. Next, if $l + f^*(n) + 1 \leq \sup_{x \in K} \pr{y}{x}$,
then we have that
\[
f^*(n) \leq \sup_{x \in } \pr{y}{x} - l - 1 \leq \|y\|_{(K-K)^*} - 1 \leq \lambda_1((K-K)^*,L^*)
\]
by our assumption that $y \in \SVP((K-K)^*,L^*,1)$. Take $x_0 \in \argmin_{x \in K} \pr{y}{x}$ and examine the convex body 
\[
\tilde{K} = \left(1-\frac{f^*(n)+1}{\|y\|_{(K-K)^*}}\right)x_0 + \left(\frac{f^*(n)+1}{\|y\|_{(K-K)^*}}\right)K \text{.}
\]
Since $x_0 \in K$ and $f^*(n)+1 \leq \|y\|_{(K-K)^*}$ we get by convexity that $\tilde{K} \subseteq K$. Furthermore,
we can see that 
\[
\tilde{K} \subseteq \set{z \in \R^n: \inf_{x \in K} \pr{y}{x} \leq \pr{z}{y} \leq \inf_{x \in K} \pr{y}{x} + f^*(n) + 1}
\text{.}
\]
Therefore any $x \in \tilde{K} \cap L$ must satisfy $\pr{x}{y} \in A$. Hence if $\tilde{K} \cap L \neq \emptyset$, since
$\tilde{K} \subseteq K$ there exists $x \in K \cap L$ such that $\pr{y}{x} \in A$. We now show that $\tilde{K} \cap L
\neq \emptyset$ to complete the claim.

By homogeneity, we see that
\begin{align*}
\lambda_1((\tilde{K}-\tilde{K})^*,L^*) &= \lambda_1\left(\left(\frac{f^*(n)+1}{\|y\|_{(K-K)^*}}~(K-K)\right)^*,L^*\right) 
               = (f^*(n)+1) ~ \frac{\lambda_1((K-K)^*,L^*)}{\|y\|_{(K-K)^*}} \\
	       &\geq (f^*(n)+1) \frac{\|y\|_{(K-K)^*}-1}{\|y\|_{(K-K)^*}} \geq (f^*(n)+1) \frac{f^*(n)}{f^*(n)+1} = f^*(n)
\end{align*}
Applying the flatness theorem to $\tilde{K}$, we now get that $\mu(\tilde{K},L) \leq 1$ and hence that $\tilde{K} \cap L
\neq \emptyset$ as needed.
\end{proof}

Given the claim, we may complete the algorithm by recursively solving the integer programs with respect to $K \cap H_k$ and
$L \cap H_k$, for all $k \in A$. We return EMPTY if all calls return EMPTY, and return any found lattice point
otherwise.

\paragraph{RUNTIME:}
The correctness of the algorithm has already been discussed above, so it only remains to check that the runtime of the
algorithm is bounded by $O(f^*(n))^n \poly(\enc{a_0}, \enc{R}, \enc{B})$ on expectation (we note that the only source of
randomness in the algorithm comes from the calls to the Shortest-Vectors algorithm). The algorithm above is recursive,
where at each node of the recursion we perform the $3$ named procedures above and then break the problem into at most
$\ceil{f^*(n)} + 2$ subproblems which we solve recursively (the calls to IP on $K \cap H_k$, $L \cap H_k$ as above). Now
if we can show that the processing at each recursive node takes at most expected $2^{O(n)} \poly(\enc{a_0}, \enc{R},
\enc{B})$ time - where $a_0,R,B$ are the \emph{original} parameters provided to the top level call of the IP algorithm -
then by solving a standard recurrence relation we get that the whole running time is indeed $O(f^*(n))^n
\poly(\enc{a_0},\enc{R},\enc{B})$ on expectation as needed. 

Let us examine a specific recursion node with associated convex body $\bar{K}$, $(\bar{a}_0,\bar{R})$-circumscribed in
$\R^{\bar{n}}$, and $\bar{n}$-dimensional lattice $\bar{L}$ with basis $\bar{B}$. Now it is straightforward to see that
at this recursion node, the amount of computation is certainly bounded by $2^{O(\bar{n})}
\poly(\enc{\bar{a}_0},\enc{\bar{R}},\enc{\bar{B}})$ on expectation, since the above procedures only make calls to
subroutines with either polynomial runtimes (such as the GLS-Round algorithm, the Ellipsoid Method, and standard linear
algebraic procedures) or single exponential runtimes (such as the MV algorithm and the Shortest-Vectors algorithm). The
main issue is therefore whether the lattice basis and affine subspace passed to the next level recursion nodes have bit
size bounded by a fixed polynomial (i.e. whose degree does not depend on $n$) in the size of the original parameters.
For clarity, we only sketch the argument here. The main reason this is true is because of the Basis Refine step. Most
crucially, after the refine step, we end up with a lattice basis whose length is bounded by $2\sqrt{\bar{n}}\bar{R} \leq
2\sqrt{n}R$. Since $\bar{L}$ is a sub-lattice of our original lattice $L$, it is not hard to verify that any vector of
$\bar{L}$ (and in fact of $L$) of length less than $2 \sqrt{n} R$ has bit size bounded by $\poly(\enc{R},\enc{B})$ (for
a fixed polynomial). Hence the Basis Refine step ``smooths'' any incoming basis and subspace to ones whose bit
description is well bounded by the original parameters. Since the bit description of the lattice basis and subspace
passed to the next child node is only a fixed polynomial larger than that of the ``smoothed'' basis after our refine
step, the claim follows. The runtime is therefore bounded by $O(f^*(n))^n \poly(\enc{a_0},\enc{R},\enc{B})$ on
expectation as desired.
\end{proof}


\section{Acknowledgments}

We gratefully thank Bo'az Klartag, Gideon Schechtman, Daniele
Micciancio, Oded Regev, and Panagiotis Voulgaris for fruitful
discussions and critical ideas.  In particular, Klartag suggested to
us that the techniques of~\cite{K06} could be used for an algorithmic
construction of an M-ellipsoid, and Schechtman suggested the use of
parallelepiped tilings to construct an explicit covering.


\bibliographystyle{alphaabbrvprelim}
\bibliography{lattices,acg,cg}

\appendix

\section{M-Ellipsoid Proofs}
\label{sec:m-ellipsoid-proofs}

Here we prove correctness of the all the main M-ellipsoid algorithms
from Section~\ref{sec:m-ellipsoid-covering}.  We rely heavily on
several geometric estimates, which are listed and proved in
Section~\ref{sec:geometric-estimates} below, and on standard
algorithms from convex optimization and convex geometry, which are
described in Section~\ref{sec:standard-algorithms}.

\begin{proof}[Proof of Theorem~\ref{thm:m-ellipsoid} (Correctness of
  M-Ellipsoid)]
  Here we give more detail as to the implementation of each of the
  steps of Algorithm~\ref{alg:M-Ellipsoid}:
  \begin{itemize}
  \item Step $1$: Make a direct call to algorithm Estimate-Centroid
    (Lemma \ref{lem:estimate-centroid}) on $K$.
  \item Step $2$: If Estimate-Centroid returns an estimate $b$ of
    $b(K)$, we have the guarantee that
    \begin{equation}
      b + \frac{r}{2(n+1)\sqrt{n}}B_2^n \subseteq K \subseteq b + 2R
    \end{equation}
    Since the guarantees about $b$ in $K$ are polynomial in the input,
    we can build a weak membership oracle $O_{K-b}$ for $K-b$, where
    $K-b$ is $(0, \frac{r}{2(n+1)\sqrt{n}},2R)$-centered, in
    polynomial time from $O_K$. Now we run the algorithm of
    Theorem~\ref{thm:m-gen} on the oracle $O_{K-b}$ and retrieve the
    tentative $M$-ellipsoid $E(A)$ of $K$.
  \item Step $3$: Here we make a direct call to the algorithm
    Build-Cover on ($K$, $E(A)$) where we ask whether $N(K,E(A)) >
    (13e)^n$.
  \item Step $4$: First, we implement a weak membership oracle
    $O_{(K-K)^{*}}$ for $(K-K)^*$ from $O_K$ using the ellipsoid
    algorithm, where we can guarantee that $(K-K)^*$ is
    $(0,r\frac{1}{2R}, \frac{1}{2r})$-centered. To do this, we note
    that
    \[
    x \in (K-K)^* \Leftrightarrow \sup_{y \in K} \pr{y}{x} - \inf_{y
      \in K} \pr{y}{x} \leq 1
    \]
    Hence we can build a weak membership oracle for $(K-K)^*$ by
    approximately maximizing and minimizing with respect to $x$ over
    $K$. This can readily be done via the ellipsoid algorithm (see
    Theorem~\ref{thm:convex-opt}). The guarantees we get on $(K-K)^*$
    are seen as follows:
    \[
    a_0 + rB_2^n \subseteq K \subseteq RB_2^n \Rightarrow 2rB_2^n
    \subseteq K-K \subseteq 2RB_2^n \Rightarrow \frac{1}{2R}B_2^n
    \subseteq (K-K)^* \subseteq \frac{1}{2r}B_2^n
    \]
    Next, we note that $E(A)^* = E(A^{-1})$, and hence can be computed
    in polynomial time. Next, we call the algorithm Build-Cover on
    ($(K-K)^*$,$E(A)^*$) where we ask whether $N((K-K)^*,E(A)^*) > (25
    e \cdot 13)^n$.
  \end{itemize}

  \paragraph{Correctness:} We must show that if the algorithm
  succeeds, returning the ellipsoid $E(A)$, that $E(A)$ indeed
  satisfies
  \begin{equation}
    N(K,E) \leq \left(\sqrt{8\pi e} \cdot 13e \right)^n \quad 
    N(E,K) \leq \left(\sqrt{8\pi e} \cdot 25e \cdot 13 \cdot 289 \right)^n
  \end{equation}
  These guarantees depend only on the correctness of the algorithm
  Build-Cover. In, step $3$, if the test passes, we are guaranteed to
  get a covering $T$ of $K$ by $E$ where $|T| \leq
  \left(\sqrt{8\pi e} \cdot 13 e\right)^n$. Hence the first
  requirement is met. In step $4$, if the test passes, we are
  guaranteed that $N((K-K)^*,E^*) \leq \left(4 \sqrt{\frac{\pi e}{2}}
    \cdot 25 e \cdot 13 \right)^n$. Now by
  Theorem~\ref{thm:dual-entr}, since $E^*$ is centrally symmetric, for
  $n$ large enough, we have that
  \begin{equation}
    N(E,K) \leq 289^n N((K-K)^*,E^*) \leq \left(\sqrt{8 \pi e} \cdot 25e \cdot 13 \cdot 289 \right)^n
  \end{equation}
  as needed.

  \paragraph{Runtime:} We note that of each the steps $1-4$ already
  have a running time bounded by the desired runtime.  Hence, it
  suffices to show that the main loop is executed on expectation only
  $O(1)$ times. To do this, we first condition on the event that in
  step $1$, the returned estimate $b$ satisfies that $b-b(K) \in
  \frac{1}{n+1}E_K$. This occurs with probability at least
  $1-\frac{1}{n}$. Next, in step $2$, given that $K-b$ satisfies the
  conditions of Theorem~\ref{thm:m-gen}, i.e. that $b(K-b) = b(K)-b
  \in \frac{1}{n+1}E_K$, we may condition on the event that the
  returned ellipsoid $E(A)$ satisfies
  \begin{equation}
    N(K,E(A)) \leq (13e)^n \quad N(E(A),K) \leq (25e)^n \text{.}
  \end{equation}
  Since this event occurs with probability $1-\frac{3}{n}$, our total
  success probability is $1-\frac{4}{n}$.  Now in step $3$, given that
  $N(K,E(A)) \leq (13e)^n$, the test is guaranteed to pass. Since $E$
  is centrally symmetric, for $n$ large enough, we have that
  \begin{equation}
    N((K-K)^*,E^*) \leq (12(1+o(1)))^n N(E,K) \leq (25e \cdot 13)^n \text{.}
  \end{equation}
  Therefore, the test in step $4$ is also guaranteed to
  succeed. Finally, we see that the probability that each execution of
  the loop terminates successfully is at least $1-\frac{4}{n}$,
  therefore the expected number of runs of the loop is $O(1)$ as
  needed.
\end{proof}


\begin{proof}[Proof of Theorem~\ref{thm:m-gen} (Correctness of M-Gen)]
  The proof has two parts, first building the right oracle, then using
  it to sample and estimate the inertial ellipsoid.

  \paragraph{Building a membership oracle for the polar:}
  We first show that a polynomial time weak membership oracle for $S =
  n\left(\conv \set{K,-K}\right)^*$ can be built from $O_K$. We note
  that
  \begin{equation}
    v \in n\left(\conv \set{K,-K}\right)^* \Leftrightarrow \max ~ \set{\sup_{x \in K} \pr{v}{x}, \sup_{x \in K}
      \pr{-v}{x}} \leq n 
  \end{equation}
  Given the guarantees on $O_K$, we have that
  \begin{equation}
    \frac{n}{R}B_2^n \subseteq n\left(\conv \set{K,-K}\right)^* \subseteq \frac{n}{r}B_2^n
  \end{equation}
  Constructing a weak membership oracle for $S$ therefore requires
  only the ability to perform $2$ different approximate optimizations
  over $K$. This can achieved using the standard optimization
  techniques described in Theorem~\ref{thm:convex-opt}. Hence, a
  polynomial time weak membership oracle for $S$ can be built as
  claimed.

  \paragraph{Building the M-ellipsoid:} Le $\pi_S$ denote the uniform
  distribution on $S$. Equipped with a weak membership oracle for $S$,
  we may use the sampling algorithm of
  Theorem~\ref{thm:sampling-tech}, to sample a point $Y \in S$ with
  distribution $\sigma$ satisfying $\TVD(\sigma,\pi_S) \leq
  \frac{1}{n}$ in time $\poly(n) \polylog(\nicefrac{R}{r}, n)$.  Set
  $s = Y$, where $Y$ is the computed sample. We shall use $s$ to
  specify a reweighting of the uniform distribution on $K$.  Let
  $f_s(x) = e^{\pr{s}{x}}$ for $x \in K$ and $0$ otherwise. Using the
  algorithm described by Corollary~\ref{lem:estimate-covariance}, we
  may compute a matrix $A \in \R^{n \times n}$ satisfying
  \begin{equation}
    e^{-\nicefrac{1}{n}}E_{f_s} \subseteq E(A) \subseteq e^{\nicefrac{1}{n}} E_{f_s}
    \label{eq:cov-approx}
  \end{equation}
  with probability $1-\frac{1}{n}$ in time
  $\poly(n)\polylog(\nicefrac{R}{r})$. We return the ellipsoid
  $\sqrt{n}E(A)$ as our candidate $M$-ellipsoid for $K$.

  \paragraph{Analysis:} We now show that for $n$ large enough, the
  ellipsoid returned by this algorithm satisfies with high probability
  the covering conditions
  \begin{equation}
    N(K,\sqrt{n}E(A)) \leq (13e)^n \quad N(\sqrt{n}E(A),K)) \leq (25e)^n
  \end{equation}
  First, we condition on the event \eqref{eq:cov-approx}, i.e. that we
  get a good estimate of $E_{f_s}$. Hence at this point, our success
  probability is at least $1-\frac{1}{n}$.

  Let $\eta > 0$ be a constant to be decided later. Let $X$ be
  uniformly distributed on $S$, and let $Y$ denote the approximately
  uniform sample the above algorithm computes on $S$, remembering that
  $S = n\left(\conv\set{K,-K}\right)^*$.  Given the guarantee that
  $b(K) \in \frac{1}{n+1}E_K$, from Lemma~\ref{lem:exp-slice} setting
  $\eps = 1$, for $n$ large enough we have that
  \begin{equation}
    \E[L_{f_X}^{2n}] \leq \left((1+o(1)) ~ \sqrt{\frac{2}{\pi e}} ~ \frac{e^{\eps}}{\sqrt{\eps}} \right)^{2n}
    \leq \left((1+\eta) ~ \sqrt{\frac{2e}{\pi}} \right)^{2n}
  \end{equation}
  Using Markov's inequality, we see that
  \begin{equation}
    \Pr\left[L_{f_X} > (1+\eta)^2 \sqrt{\frac{2e}{\pi}}\right] \leq \frac{\E[L_{f_X}^{2n}]}{\left((1+\eta)^2
        \sqrt{\frac{2e}{\pi}}\right)^{2n}} \leq \frac{1}{(1+\eta)^{2n}}.
  \end{equation}

  Now since $\TVD(X,Y) \leq \frac{1}{n}$, we see that
  \begin{equation}
    \Pr\left[L_{f_Y} > (1+\eta)^2 \sqrt{\frac{2e}{\pi}}\right] \leq \frac{1}{(1+\eta)^{2n}} + \frac{1}{n} \leq \frac{2}{n}
    \label{eq:bad-slice}
  \end{equation}
  for $n$ large enough ($\eta$ will be chosen to be constant). Hence
  after additionally conditioning on the complement of event
  $\ref{eq:bad-slice}$, our success probabiblity is at least
  $1-\frac{3}{n}$. At this point, letting $s = Y$, we see that $s$
  specifies a density $f_s$ on $K$ satisfying
  \begin{equation}
    L_{f_s} \leq (1+\eta)^2 \sqrt{\frac{2e}{\pi}}.
  \end{equation}
  Furthermore since $s \in n\left(\conv \set{K,-K}\right)^*$, $b(K)
  \in \frac{1}{n+1}E_K$ and $E_K \subseteq K$, we have that
  \begin{equation}
    \frac{\sup_{x \in K} f_s(x)}{f_s(b(K))} = \sup_{x \in K} e^{\pr{s}{x-b(K)}} = \sup_{x \in K} e^{\pr{s}{x} +
      \pr{-s}{b(K)}} \leq e^{n+1}.
  \end{equation}
  Hence by Lemma~\ref{lem:iner-to-m}, letting $\sqrt{n}E(A) = T$, and
  $\delta = e^{\nicefrac{1}{n}}$, we get that
  \begin{equation}
    N(K, \sqrt{n}E(A)) \leq (12\delta)^n ~ \frac{4}{3} ~ \frac{\sup_{x \in K} f_s(x)}{f_s(b(K))} \leq
    12^n e ~ \frac{4}{3} ~ e^{n+1} \leq (12e(1+\eta))^n
  \end{equation}
  and
  \begin{align}
    \begin{split}
      N(\sqrt{n}E(A), K) &\leq (12\delta^2)^n ~ \vol(\sqrt{n}B_2^n) ~ \frac{4}{3} ~ L_{f_s}^n \\
      &\leq 12^n e^2 ~ (\sqrt{2\pi e}(1+o(1)))^n ~ \frac{4}{3} ~
      \left((1+\eta)^3\sqrt{2}\right)^n \leq (24e(1+\eta)^3)^n
    \end{split}
  \end{align}
  for $n$ large enough. Choosing $\eta > 0$ such that $(1+\eta)^3 =
  \nicefrac{25}{24}$ yields the result.
\end{proof}


\begin{proof}[Proof of Theorem~\ref{thm:build-cover} (Correctness of Build-Cover)]
  The goal here is to either compute a covering of $K$ by $E$, or
  conclude that $N(K,E)$ is large. To make this task easier, we will
  replace $E$ by a parallelepiped $P$ inscribed in $E$, and use a
  tiling procedure (since $P$ can be used to tile space) to cover
  $K$. We will show any cover produced in this way is not much larger
  than $N(K,E)$, and hence will help provide a lower bound on
  $N(K,E)$. Furthermore since $P \subseteq E$, any cover of $K$ by $P$
  immediately translates into a cover of $K$ by $E$.

  \paragraph{Building $P$:}
  To compute $P$ we will need to perform some standard matrix
  algebra. First we compute the Cholesky Factorization of $A$, i.e. we
  compute $V \in \R^{n \times n}$ such that $A = V^t V$. Next we
  compute $B = V^{-1}$, the inverse of $V$, and label the columns of
  $B$ as $B = (b_1,\dots,b_n)$. Both of the computations here can be
  done in time $\poly(\enc{A})$ via standard methods. Now we note that
  \begin{equation}
    \pr{b_i}{b_j}_A = b_i^tAb_j = (B^tAB)_{ij} = (V^{-t}V^tVV^{-1})_{ij} = (\Id_n)_{ij}.
  \end{equation}
  Hence the vectors $(b_1,\dots,b_n)$ form an orthonormal basis of
  with respect to the dot product $\pr{\cdot}{\cdot}_A$.  Therefore
  the ellipsoid $E(A)$ may be expressed as
  \begin{equation}
    E(A) = \set{x \in \R^n: x^tAx \leq 1} = \set{x \in \R^n: \sum_{i=1}^n \pr{b_i}{x}_A^2 \leq 1}.
  \end{equation}
  Now define $P$ as
  \begin{equation}
    P = \left\{x \in \R^n: |\pr{b_i}{x}_A| \leq \frac{1}{\sqrt{n}}\right\} 
    = \left\{\sum_{i=1}^n a_ib_i: |a_i| \leq \frac{1}{\sqrt{n}}~,~ 1 \leq i \leq m \right\}
  \end{equation}
  where the second equality follows since the $b_i$s are orthonormal
  under $\pr{\cdot}{\cdot}_A$. Now for $x \in \R^n$, we see that
  \begin{equation}
    \max_{1 \leq i \leq n} |\pr{b_i}{x}_A| 
    \leq \left(\sum_{i=1}^n \pr{b_i}{x}_A^2\right)^{1/2} \leq \sqrt{n} \max_{i \leq i
      \leq n} |\pr{b_i}{x}_A|  \quad \Rightarrow \quad P \subseteq E(A) \subseteq \sqrt{n}P.
  \end{equation}
  Now a standard computation yields that
  \begin{align}
    \vol(P) = \left(\frac{2}{\sqrt{n}}\right)^n
    \det(A)^{-1/2} \quad \vol(E(A)) =
    \left(\frac{\sqrt{2\pi e}(1+o(1))}{\sqrt{n}}\right)^n
    \det(A)^{-1/2}
    \label{eq:vol-est}
  \end{align}
  where we remember here that $\det(B) = \det(V^{-1}) = \det(V)^{-1} =
  \det(A)^{-1/2}$.  Therefore we have that $\vol(E(A))
  \leq \left(\sqrt{\frac{\pi e}{2}}(1+o(1))\right)^n \vol(P)$.

\paragraph{Tiling $K$ with $P$:}
Define the lattice
\begin{equation}
  L = \left\{ \sum_{i=1}^n \frac{2}{\sqrt{n}} z_i b_i: z_i \in \Z, ~ 1 \leq i \leq m\right\} \text{,}
\end{equation}
so $L$ is the lattice spanned by the vectors
$\frac{2}{\sqrt{n}}(b_1,\dots,b_n)$. From here it is straightforward
to verify that $P$ tiles space with respect to $L$, so $L + P = \R^n$
and for $x,y \in L$, $x \neq y$, $x + \mathrm{int}(P) \cap y +
\mathrm{int}(P) = \emptyset$, i.e. the interiors are disjoint. In
fact, one can see that $P$ is simply a shift of the fundamental
parallelepiped of $L$ with respect to the basis
$\frac{2}{\sqrt{n}}(b_1,\dots,b_n)$.

We now wish to tile $K$ with copies of $P$. To do this we examine the
set $H = \set{x \in L: x + P \cap K \neq \emptyset}$. Since $P + L =
\R^n$, it is easy to see that
\begin{equation}
  K \subseteq H + P. 
\end{equation}
Hence, we shall want to decide for $x \in L$, whether $x + P \cap K
\neq \emptyset$. Since we only have a weak membership oracle for $K$,
we will only be able to decide whether $x + P$ approximately
intersects $K$. To formalize this, we build an weak intersection
oracle $\mathrm{INT}$ which queried on $x \in \R^n$, $\eps > 0$
satisfies
\begin{equation}
  \mathrm{INT}(x,\eps) = \begin{cases} 0:& \quad x + P \cap K = \emptyset \\ 1:& \quad x + (1+\eps)P \cap K \neq
    \emptyset \end{cases}.
\end{equation}
Using this oracle we will be able to overestimate $T$, and compute a
set $S \subseteq L$ such that
\begin{equation}
  H \subseteq S \subseteq \set{x: x + (1+\eps)P \cap K \neq \emptyset}
\end{equation}
which will suffice for our purposes. Now to build $\mathrm{INT}$, we
first remark that for $x \in \R^n$, $t \geq 0$
\begin{equation}
  x + tP \cap K \neq \emptyset \Leftrightarrow \inf_{y \in K} \|y-x\|_P \leq t \Leftrightarrow \inf_{y \in K} \sqrt{n}
  \max_{1 \leq i \leq n} |\pr{b_i}{y-x}_A| \leq t.
\end{equation}
Hence deciding the minimum scaling $t$ of $P$ for which $x + tP \cap K
\neq \emptyset$ is equivalent to solving a simple convex program. The
above convex program is exactly in the form described in
Theorem~\ref{thm:convex-opt}, hence for $\eps > 0$, and $x \in \Q^n$,
we may compute a number $\omega \geq 0$ such that
\begin{equation}
  |\omega - \inf_{y \in K} \|y-x\|_K| \leq \eps
  \label{eq:approx-int}
\end{equation}
in time
$\poly(n,\enc{x},\enc{A})\polylog(\nicefrac{R}{r},\nicefrac{1}{\eps})$. We
now build $\mathrm{INT}$. On query $x \in \Q^n$, $\eps > 0$, we do the
following:
\begin{enumerate}
\item Compute $\omega \geq 0$ satisfying $|\omega - \inf_{y \in K}
  \|y-x\|_K| \leq \frac{\eps}{2}$.
\item If $\omega \leq 1+\frac{\eps}{2}$ return $1$, otherwise return
  $0$.
\end{enumerate}
From $(\ref{eq:approx-int})$ the above procedure clearly runs in
polytime. To prove correctness, we must show that
$\mathrm{INT}(x,\eps) = 1$ if $x + P \cap K \neq \emptyset$ and
$\mathrm{INT}(x,\eps) = 0$ if $x + (1+\eps)P \cap K = \emptyset$. If
$x + P \cap K \neq \emptyset$, we note that $\inf_{y \in K} \|y-x\|_K
\leq 1$, hence by the guarantee on $\omega$ we have that
\begin{equation}
  \omega \leq \inf_{y \in K} \|y-x\|_K + \frac{\eps}{2} \leq 1 + \frac{\eps}{2}, 
\end{equation}
and so we correctly classify $x$. If $x + (1+\eps)P \cap K =
\emptyset$, then $\inf_{y \in K} \|y-x\|_K > 1+\eps$ and so
\begin{equation}
  \omega \geq \inf_{y \in K} \|y-x\|_K - \frac{\eps}{2} > 1 + \frac{\eps}{2}
\end{equation}
as needed.

We now compute a tiling of $K$. The idea here is simple. We define a
graph $G$ on the lattice $L$, where for $x,y \in L$, $x \sim y$ iff
$x-y \in \frac{2}{\sqrt{n}}\set{\pm b_1,\dots, \pm b_n}$. We identify
each lattice point $x \in L$ with the tile $x + P$. Starting from the
tile centered at $0$, we begin a breadth first search on $G$ of the
tiles intersecting $K$. In this way, we will compute the connected
component containing $0$ in $G$ of tiles intersecting $K$.  Lastly, if
the number of intersecting $K$ tiles exceeds $\left(4\sqrt{\frac{\pi
      e}{2}}H\right)^n$, we abort and return that $N(K,E) \geq
H^n$. The algorithm is given in Algorithm~\ref{alg:tiling}.

\begin{algorithm}
  \caption{Computing a tiling.}
  \label{alg:tiling}
  \begin{algorithmic}[1]
    \STATE $M \leftarrow \set{0}, N \leftarrow \set{0}, T \leftarrow
    \emptyset$.  \WHILE{$N \neq \emptyset$} \STATE choose $x \in N$
    \STATE $N \leftarrow N \setminus \set{c}$ \IF{$\mathrm{INT}(x,
      \nicefrac{1}{n}) = 1$} \STATE $T \leftarrow T \cup \set{x}$
    \IF{$|T| > \left(4\sqrt{\frac{\pi e}{2}}H\right)^n$} \RETURN FAIL
    \ENDIF
    \FORALL{$\delta \in \frac{2}{\sqrt{n}}\set{\pm b_1,\dots, \pm
        b_n}$} \IF{$x + \delta \notin M$} \STATE $N \leftarrow N \cup
    \set{x}$, $M \leftarrow M \cup \set{x}$
    \ENDIF
    \ENDFOR
    \ENDIF
    \ENDWHILE
    \RETURN $T$
  \end{algorithmic}
\end{algorithm}

\paragraph{Correctness:} To argue correctness of the above algorithm,
we must guarantee that the algorithm either computes a valid covering
of $K$ or that it proves that $N(K,E) > H^n$. For $\eps \geq 0$, let
\begin{equation}
  H_{\eps} = \set{x \in L: x + (1+\eps)P \cap K \neq \emptyset} \quad \text{ and } \quad 
  H'_{\eps} = \set{x \in L: \mathrm{INT}(x,\eps) = 1}
\end{equation}
From the description above, we see that the algorithm performs a
breath first search on $G$ starting from $0$ of the tiles in
$H'_{\nicefrac{1}{n}}$. From the properties of the weak intersection
oracle $\mathrm{INT}$, we know that $H_0 \subseteq
H'_{\nicefrac{1}{n}} \subseteq H_{\nicefrac{1}{n}}$.

The goal of the algorithm is to discover a super-set of $H_0$. Since
$H_0 \subseteq H'_{\nicefrac{1}{n}}$, the algorithm will correctly add
elements of $H_0$ to the cover $T$ if it finds them. Since we perform
a breadth first search from $0$, to guarantee we find all of $H_0$ we
need only ensure that $H_0$ forms a connected subgraph of $G$. As
noted before, the set of tiles indexed by $H_0$ are just lattice
shifts of the fundamental parallelepiped of $L$ with respect to the
basis $\frac{2}{\sqrt{n}}(b_1,\dots,b_n)$. In this setting, the
connectivity of $H_0$ with respect the edges defined by the basis
(i.e. the set of tiles touching any convex set), is a classical
fact. Therefore, the algorithm will indeed discover all of $H_0$,
provided that the partial cover $T$ remains no larger than
$\left(4\sqrt{\frac{\pi e}{2}}H\right)^n$.

Now we must justify that if the algorithm aborts, i.e. if $|T| >
\left(4\sqrt{\frac{\pi e}{2}}H\right)^n$, that indeed $N(K,E) >
H^n$. Now at every timestep we have that $T \subseteq
H'_{\nicefrac{1}{n}} \subseteq H_{\nicefrac{1}{n}}$.  Therefore, to
show correctness, it suffices to show that $|H_{\nicefrac{1}{n}}| \leq
\left(4\sqrt{\frac{\pi e}{2}}\right)^n N(K,E)$. Now for $x \in
H_{\nicefrac{1}{n}}$, we have that
\begin{equation}
  x + (1+\nicefrac{1}{n}) P \cap K \neq \emptyset \Rightarrow x \in K + (1+\nicefrac{1}{n})P 
  \Rightarrow x + P \in K + (2+\nicefrac{1}{n})P
\end{equation}
Furthermore, since for $x,y \in H_{\nicefrac{1}{n}}$, $x \neq y$, $x +
\mathrm{int}(P) \cap y + \mathrm{int}(P) = \emptyset$, we have that
\begin{equation}
  \vol(K + (2+\nicefrac{1}{n})P) \geq \vol(\cup_{x \in H_{\nicefrac{1}{n}}} x + P) = |H_{\nicefrac{1}{n}}|\vol(P)
\end{equation}
Using that $P \subseteq E$, and $\vol(E) \leq \left(\sqrt{\frac{\pi
      e}{2}}(1+o(1))\right)^n \vol(P)$ we get
\begin{align}
  \begin{split}
    |H_{\nicefrac{1}{n}}| &\leq \frac{\vol(K +
      (2+\nicefrac{1}{n})P)}{\vol(P)}
    \leq \left(\sqrt{\frac{\pi e}{2}}(1+o(1))\right)^n \frac{\vol(K + (2+\nicefrac{1}{n})E)}{\vol(E)} \\
    &\leq \left(\sqrt{\frac{\pi
          e}{2}}(1+o(1))(3+\nicefrac{1}{n})\right)^n N(K,E) \leq
    \left(4\sqrt{\frac{\pi e}{2}}\right)^n N(K,E)
  \end{split}
\end{align}
for $n$ large enough. Hence the algorithm correctly decides whether
$N(K,E) > H^n$.

\paragraph{Runtime:} The running time of the algorithm is proportional
to the number of tiles visited and the number of edges crossed during
the search phase. Since all the tiles visited in the algorithm are
adjacent to the tiles in the set $T$, and the number of edges is $2n$,
the total number of tiles visited is at most $2n |T| \leq 2n
\left(4\sqrt{\frac{\pi e}{2}}H\right)^n$. Furthermore, the edges
traversed correspond to all the outgoing edges from $T$, and hence is
bounded by the same number. Now at every visited tile, we make a call
to $\mathrm{INT}(x,\nicefrac{1}{n})$ for some $x \in L$, which takes
$\poly(n,\enc{A}) \polylog(\nicefrac{R}{r})$ time.  Hence the total
running time is
\begin{equation}
  \poly(n,\enc{A})\polylog(\nicefrac{R}{r})\left(4\sqrt{\frac{\pi e}{2}}H\right)^n
\end{equation}
as needed.
\end{proof}

\subsection{Geometric Estimates}
\label{sec:geometric-estimates}

Here we list and prove  the necessary geometric inequalities that
we used in the proofs above.  We begin with a slight extension of
Theorem~\ref{thm:symmetrize}.

\begin{theorem} 
\label{thm:symmetrize-ext}
Let $K$ be a convex body such that $b(K) \in t E_K$, for some $t \in [0,1)$. Then
\begin{equation}
\vol(K \cap -K) \geq \left(\frac{1-t}{2}\right)^n  \vol(K)
\end{equation}
\end{theorem}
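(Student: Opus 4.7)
The plan is to reduce the claim to the symmetric Milman--Pajor bound (Theorem~\ref{thm:symmetrize}) by constructing a shrunken copy of $K$, sitting inside $K$, whose centroid is at the origin.

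Set $b = b(K)$ and define $K' := (1-t)(K - b)$, so that $b(K') = 0$ and $\vol(K') = (1-t)^n \vol(K)$. If I can show $K' \subseteq K$, then $K' \cap (-K') \subseteq K \cap (-K)$, and Theorem~\ref{thm:symmetrize} applied to $K'$ (whose centroid is $0$) would give
\[
  \vol(K \cap -K) \;\geq\; \vol(K' \cap -K') \;\geq\; 2^{-n}\, \vol(K') \;=\; \bigl(\tfrac{1-t}{2}\bigr)^n \vol(K),
\]
which is precisely the desired bound. The whole proof therefore reduces to verifying $K' \subseteq K$.

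For this inclusion, observe that any point of $K'$ has the form $(1-t)(y - b)$ with $y \in K$, and can be rewritten as the convex combination $(1-t)\, y \,+\, t \cdot \bigl(-\tfrac{1-t}{t}\, b\bigr)$. By convexity of $K$, it then follows that $K' \subseteq K$ as long as the single point $q := -\tfrac{1-t}{t}\, b$ lies in $K$. The entire argument therefore hinges on showing $q \in K$.

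For this last step I would invoke the classical fact that the inertial ellipsoid placed at the centroid is contained in the body, i.e., $b + E_K \subseteq K$, equivalently (by central symmetry of $E_K$) $E_K \subseteq (K - b) \cap (b - K)$. This inclusion is a consequence of a sharp one-dimensional moment inequality for $(n-1)$-concave centered densities: for each unit direction $u$, the standard deviation $\sqrt{u^\top \cov(K)\, u}$ of the centered marginal $\langle u, X - b \rangle$ (with $X$ uniform on $K$) is bounded by the smaller of the two half-extents of $K - b$ in the direction $u$, with the wedge density $\propto (y - a)^{n-1}$ being the extremum. Granting this inclusion, the hypothesis $b \in t E_K$ places $b/t \in E_K \subseteq b - K$, and unwinding yields $q = b - b/t \in K$, as required. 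The main technical subtlety I anticipate is the sharp dimensional constant in the moment inequality: it must come out to be exactly $1$ (rather than some larger dimensional factor) for the bound $((1-t)/2)^n$ to appear cleanly, which is why the wedge-extremal analysis is needed and cannot be replaced by a crude Popoviciu-type bound.
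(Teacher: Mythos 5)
Your proof is correct, and it takes a genuinely different route from the paper's. The paper applies Milman--Pajor (Theorem~\ref{thm:symmetrize}) to the translated slice $K \cap (-K + 2b(K))$ and then invokes Brunn--Minkowski on the convex combination $t\bigl(K \cap (-K + 2q)\bigr) + (1-t)\bigl(K \cap (-K + 2b(K))\bigr) \subseteq K \cap (-K)$, where $q = -\tfrac{1-t}{t}\,b(K)$, to transfer the bound to $K \cap (-K)$. You instead exhibit an explicit shrunk copy $K' = (1-t)(K - b(K))$ with centroid at the origin, verify $K' \subseteq K$, and apply Milman--Pajor once to $K'$, whereupon the factor $(1-t)^n$ appears directly as the volume ratio $\vol(K')/\vol(K)$; no Brunn--Minkowski step is needed, which makes your argument the more elementary of the two. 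It is striking that both proofs reduce to verifying membership of the very same point $q$ in $K$, and both ultimately rest only on Milman--Pajor plus the inertial-ellipsoid containment $b(K) + E_K \subseteq K$.

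Two small remarks. First, you need not re-derive $b(K) + E_K \subseteq K$ via the wedge-extremal moment analysis: this is exactly Theorem~\ref{thm:sandwich} of the paper, which gives the sharper $\sqrt{\tfrac{n+2}{n}}\,E_K \subseteq K - b(K)$, so your worry that the one-dimensional constant ``must come out to be exactly $1$'' is misplaced --- the sharp constant is $\sqrt{(n+2)/n} > 1$ and there is slack. Second, your argument divides by $t$, so the case $t = 0$ needs to be noted separately; but there $b(K) = 0$ and the claim is literally Theorem~\ref{thm:symmetrize}, so nothing is lost.
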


\begin{proof}
From Theorem~\ref{thm:symmetrize} we have that
\begin{equation}
   \frac{1}{2^n} \vol(K) \leq \vol(K-b(K) \cap -K+b(K)) = \vol(K \cap -K + 2b(K))
\end{equation}
Next, we note that for $x \in \R^n$
\begin{equation}
K \cap -K + 2x \neq \emptyset \Leftrightarrow 2x \in K+K \Leftrightarrow x \in K
\label{eq:symm-support}
\end{equation}

Since $b(K) \in tE_K$ and $b(K) + E_K \subseteq K$, we see that
$(1-t)E_K \subseteq K$. Hence we can write 
\begin{equation}
0 = t(-2nb(K)) + (1-t) 2b(K) \text{,}
\end{equation}
where $-nb(K) \in -(1-t)E_K = (1-t)E_K \subseteq K$. Now we see that
\begin{equation}
t\left(K \cap (-K + -2nb(K))\right) + 
(1-t)\left(K \cap (-K + 2b(K))\right) \subseteq K \cap -K
\end{equation}
where both sets on the left hand side are non-empty by \eqref{eq:symm-support}. Therefore by the Brunn-Minkowski
inequality, we have that 
\begin{align}
\begin{split}
\vol(K \cap -K)^{\frac{1}{n}} &
	\geq t\vol\left(K \cap (-K + -n2b(K))\right)^{\frac{1}{n}} + 
             (1-t)\vol\left(K \cap (-K + 2b(K))\right)^{\frac{1}{n}} \\
	&\geq (1-t)\vol\left(K \cap (-K + 2b(K))\right)^{\frac{1}{n}} 
         \geq \frac{1-t}{2} \vol(K)^{\frac{1}{n}}
\end{split}
\end{align}
Therefore we get that
\[
\vol(K \cap -K) \geq \left(\frac{1-t}{2}\right)^n \vol(K)
\]
as needed.
\end{proof}

The next lemma is a slight specialization of~\cite[Theorem
5]{MP00}. We require this inequality for the M-ellipsoid certification
procedure.
\begin{theorem}[Duality of Entropy]
\label{thm:dual-entr}
Let $K,T \subseteq \R^n$ be convex bodies where $T$ is centrally symmetric. Then
\begin{equation}
  N(T,K) \leq \left((1+o(1))288\right)^n \cdot N((K-K)^*,T^*)
\end{equation}
and
\begin{equation}
	N((K-K)^*,T^*) \leq \left(12(1+o(1))\right)^n \cdot N(T,K).
\end{equation}
\end{theorem}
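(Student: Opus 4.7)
My plan is to reduce both inequalities to the symmetric case via the difference body $K-K$ and then invoke a (partial) form of Milman's duality of metric entropy for symmetric convex bodies. The point of passing to $K-K$ is that $(K-K)^{*}$ is centrally symmetric, so once we are working with two symmetric bodies we can hope to swap the primal and polar pictures. The two ``translation costs'' we will need to control are (i) the cost of replacing a covering of $T$ by $K$ with a covering by $(K-K)/2$ (or vice versa), which is what the Rogers--Shephard inequality $\vol(K-K)\leq\binom{2n}{n}\vol(K)\leq 4^{n}\vol(K)$ and its centroid-based refinement (Theorem~\ref{thm:symmetrize}) control, and (ii) the cost of actually swapping primal and polar, which is the duality of entropy.

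For the upper bound $N((K-K)^{*},T^{*})\leq(12(1+o(1)))^{n}\,N(T,K)$, I would start from a cover $T\subseteq K+\Lambda$ with $|\Lambda|=N(T,K)$. Using the symmetry of $T$ and a standard Minkowski-difference manipulation one sees that $2T\subseteq(K-K)+(\Lambda-\Lambda)$, so $T\subseteq\tfrac12(K-K)+\tfrac12(\Lambda-\Lambda)$; combining with a volumetric selection argument (keeping only $\lambda$'s whose shifts of $\tfrac12(K-K)$ are distinct) gives $N(T,\tfrac12(K-K))\leq c^{n}|\Lambda|$ rather than the naive $|\Lambda|^{2}$. Since now both $T$ and $\tfrac12(K-K)$ are symmetric, I would apply the symmetric-case duality of entropy (which at the level of $2^{O(n)}$ factors can be obtained from the Bourgain--Milman inequality together with classical volumetric covering/packing bounds) to conclude $N(((K-K)/2)^{*},T^{*})\leq C^{n}\,N(T,\tfrac12(K-K))$. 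Rescaling polars and tracking constants yields the claimed $12^{n}$ factor.

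For the lower bound $N(T,K)\leq(288(1+o(1)))^{n}\,N((K-K)^{*},T^{*})$, I would run the same argument in reverse: applying symmetric duality of entropy to $T$ and $\tfrac12(K-K)$ gives $N(T,\tfrac12(K-K))\leq C^{n}\,N((K-K)^{*},T^{*})$, so it remains to cover $T$ by copies of $K$ given a cover by copies of $\tfrac12(K-K)$. This is the step where I would use Theorem~\ref{thm:symmetrize} (or equivalently Theorem~\ref{thm:symmetrize-ext}): for $K_{0}=K-b(K)$ one has $\vol(K_{0}\cap(-K_{0}))\geq 2^{-n}\vol(K)$, and since $K_{0}\cap(-K_{0})\subseteq\tfrac12(K-K)$, a packing argument shows $N(\tfrac12(K-K),K)\leq(\vol(\tfrac32(K-K))/\vol(K_{0}\cap(-K_{0})))\leq 24^{n}$. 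Multiplying the $24^{n}$ by the $12^{n}$ from duality of entropy yields the announced $288^{n}=(24\cdot12)^{n}$.

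The main obstacle is the symmetric-case duality of entropy itself: making $N(A,B)\leq C^{n}N(B^{*},A^{*})$ precise with an explicit, dimension-independent base $C$ for arbitrary symmetric $A,B$ was a deep open problem (later fully resolved by Artstein--Milman--Szarek), and we need a version that yields the constant $12$ in our setting. I would sidestep the full strength by using only volumetric consequences: combining Bourgain--Milman $\vol(A)\vol(A^{*})\geq c^{n}/n!$ with the elementary bound $N(A,B)\leq\vol(A+\tfrac12 B)/\vol(\tfrac12 B)$ and its polar counterpart is sufficient to extract a constant base, and the slack in the ``$(1+o(1))$'' factor absorbs the asymptotic losses coming from Stirling's formula and the Bourgain--Milman constant.
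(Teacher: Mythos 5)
Your high-level plan matches the paper's: pass to the difference body $K-K$ so both bodies are symmetric, pay the ``translation cost'' of replacing $K$ by (a scaling of) $K-K$ using Rogers--Shephard and the Milman--Pajor bound $\vol(K)\le 2^{n}\vol(K_{0}\cap -K_{0})$, and obtain the symmetric duality step from Bourgain--Milman plus elementary volumetric covering estimates rather than the full Artstein--Milman--Szarek theorem. The paper's version of that last step is slightly crisper than your sketch: after shifting so $b(K)=0$, it bounds $N(T,K-K)\le 3^{n}\vol(T)/\vol((K-K)\cap T)$ and then passes to polars via the exact identity $\bigl((K-K)\cap T\bigr)^{*}=\conv\set{(K-K)^{*},T^{*}}$, applying Blaschke--Santal\'o to the numerator and Bourgain--Milman to the denominator, and finishing with $\conv\set{(K-K)^{*},T^{*}}\subseteq (K-K)^{*}+T^{*}$.

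The one place where your proposal has a genuine soft spot is the reduction $N(T,\tfrac12(K-K))\lesssim c^{n}N(T,K)$ for the bound $N((K-K)^{*},T^{*})\le(12(1+o(1)))^{n}N(T,K)$. Your derivation $T\subseteq\tfrac12(K-K)+\tfrac12(\Lambda-\Lambda)$ followed by a ``volumetric selection argument (keeping only $\lambda$'s whose shifts are distinct)'' does not, as stated, bring $\abs{\tfrac12(\Lambda-\Lambda)}$ down from $\abs{\Lambda}^{2}$ to $c^{n}\abs{\Lambda}$: the points of $\tfrac12(\Lambda-\Lambda)$ can all give distinct translates, and no packing bound applies to that set directly. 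The standard and much shorter route is: for each $\lambda\in\Lambda$ with $(\lambda+K)\cap T\ne\emptyset$, pick $t_{\lambda}$ in the intersection, so $(\lambda+K)\cap T\subseteq t_{\lambda}+(K-K)$, giving $N(T,K-K)\le N(T,K)$ with no constant at all (and no symmetry of $T$ needed). The paper uses the even more direct observation that, after centering, $0\in K$ so $K\subseteq K-K$ and hence $N(T,K-K)\le N(T,K)$ is immediate. This also avoids working with $\tfrac12(K-K)$ and its attendant scaling bookkeeping. Once you replace the vague selection step by this, your argument lines up with the paper's and the constants $24^{n}$ and $12^{n}(1+o(1))$ fall out as you computed.
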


\begin{proof}
Since the above quantities are invariant under shifts of $K$, we may shift $K$ so that $b(K)=0$. Applying Theorem
\ref{thm:symmetrize}, we see that that $\vol(K-K) \leq 4^n \vol(K) \leq 8^n \vol(K \cap -K)$, where we note that since
$0 \in K$ we have that $K \cap -K \subseteq K \subseteq K-K$.  Next applying the covering estimates from Lemma~\ref{lem:cov-est}, we get that 
\[
N(K-K,K) \leq N(K-K,K \cap -K) \leq 3^n \frac{\vol(K-K)}{\vol(K \cap -K)} \leq 24^n.
\]
From here, we see that
\begin{equation}
N(T,K) \leq N(T,K-K)N(K-K,K) \leq 24^n N(T,K-K).
\end{equation}
Next since both $T$ and $K-K$ are centrally symmetric, we apply Lemma~\ref{lem:cov-est} to get that
\[
N(T,(K-K)) \leq 3^n \frac{\vol(T)}{\vol((K-K) \cap T)}.
\]
Now we note that $((K-K) \cap T)^* = \conv \set{(K-K)^*, T^*}$. Hence applying the Blashke-Santal{\'o} inequality to
$\vol(T)$ and the Bourgain-Milman inequality to $\vol((K-K) \cap T)$ we get that
\[
3^n \frac{\vol(T)}{\vol((K-K) \cap T)} \leq (6(1+o(1)))^n \frac{\vol(\conv \set{(K-K)^*, T^*})}{\vol(T^*)}
\]
Since $0$ is both in $(K-K)^*$ and $T^*$, we see that $\conv \set{(K-K)^*, T^*)} \subseteq (K-K)^* + T^*$ and hence 
\[
(6(1+o(1)))^n \frac{\vol(\conv \set{(K-K)^*, T^*})}{\vol(T^*)} \leq (6(1+o(1)))^n \frac{\vol((K-K)^* + T^*)}{\vol(T^*)}.
\]
Lastly, applying Lemma \ref{lem:cov-est} to the last estimate, we get that
\[
(6(1+o(1)))^n \frac{\vol((K-K)^* + T^*)}{\vol(T^*)} \leq (12(1+o(1)))^n  N((K-K)^*, T^*).
\]
Combining the above estimates yields the first desired inequality.

Now switching the roles $(K-K)$ and $T$ with $(K-K)^*$ and $T^*$, we have that
\[
N((K-K)^*,T^*) \leq (12(1+o(1))^n N(T,K-K) \leq (12(1+o(1))^n N(T,K),
\]
yielding the second inequality.
\end{proof}

We now make precise the relationship between the isotropic constant of the exponential reweightings defined by Klartag
\cite{K06} and the M-ellipsoid.

\begin{lemma}
  \label{lem:iner-to-m}
  Let $K \subseteq \R^n$ be a convex body. Take $s \in \R^n$ and let
  $f_s(x) = e^{\pr{s}{x}}$ for $x \in K$ and $0$ otherwise. Let $T
  \subseteq \R^n$ be a convex body such that for some $\delta \geq 1$
  we have that
  \begin{equation}
    \frac{\sqrt{n}}{\delta} E_{f_s} \subseteq T \subseteq \delta \sqrt{n} E_{f_s}
  \end{equation}
  where $E_{f_s}$ is the inertial ellipsoid of $f_s$. Then we have
  that
  \begin{equation}\label{m-and-iso}
    N(K, T) \leq (12\delta)^n ~ \frac{4}{3} ~ \frac{\sup_{x \in K} f_s(x)}{f_s(b(K))} 
    \quad \text{ and } \quad 
    N(T, K) \leq (12\delta^2)^n ~ \vol(\sqrt{n}B_2^n) ~ \frac{4}{3} ~ L_{f_{s}}^n
  \end{equation}
  where $b(K)$ is the centroid of $K$, and $L_{f_s}$ is the
  isotropic constant of $f_s$.
\end{lemma}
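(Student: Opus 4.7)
The plan is to bound each covering number in \eqref{m-and-iso} by first using the sandwich hypothesis on $T$ to reduce to covering numbers between $K$ and a constant multiple of $\sqrt{n}E_{f_s}$, and then to prove these core estimates via a volumetric comparison guided by the log-concavity of $f_s$. Specifically, the sandwich $\frac{\sqrt{n}}{\delta}E_{f_s}\subseteq T\subseteq \delta\sqrt{n}E_{f_s}$ gives $N(K,T)\le N(K,\frac{\sqrt{n}}{\delta}E_{f_s})\le (2\delta+1)^n\, N(K,\sqrt{n}E_{f_s})$ by the standard composition $N(K,\frac{1}{\delta}E)\le N(K,E)\cdot N(E,\frac{1}{\delta}E)$ (and similarly $N(T,K)\le \delta^n\, N(\sqrt{n}E_{f_s},K)$), so the work is in bounding $N(K,\sqrt{n}E_{f_s})$ and $N(\sqrt{n}E_{f_s},K)$.

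For $N(K,\sqrt{n}E_{f_s})$, the idea is a packing-versus-mass argument. Since $\int\|x-b(f_s)\|_{\cov(f_s)^{-1}}^2\,f_s\,dx=n$, Markov's inequality tells us that the ellipsoid $c\sqrt{n}E_{f_s}$ for an appropriate constant $c$ (chosen so the tail has mass at most $1/4$) captures at least $3/4$ of the $f_s$-mass; this is where the factor $\tfrac{4}{3}$ comes from. A maximal packing of disjoint translates of a small ellipsoid $\rho E_{f_s}$ inside $K$ yields a covering by $O(1)^n$ such translates (standard symmetric packing), and the total $f_s$-mass in the packing is at most $\sup_K f_s$ times the packed volume. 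Comparing this to $\int_K f_s \ge f_s(b(K))\cdot$ (volume of a neighborhood of $b(K)$ dictated by log-concavity) is where the ratio $\sup_K f_s / f_s(b(K))$ enters, yielding the claimed $(12)^n\cdot\frac{4}{3}\cdot \sup_K f_s/f_s(b(K))$ bound.

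For $N(\sqrt{n}E_{f_s},K)$, the plan is a pure volume calculation. From $\vol(E_{f_s})=\vol(B_2^n)\det(\cov(f_s))^{1/2}$ and the definition $L_{f_s}^n = (\sup f_s)\det(\cov(f_s))^{1/2}/\int f_s$, one gets $\vol(\sqrt{n}E_{f_s})=\vol(\sqrt{n}B_2^n)\cdot L_{f_s}^n\cdot (\int f_s/\sup f_s)$. Since $\int_K f_s\le \sup_K f_s\cdot \vol(K)$, we also have $\vol(K)\ge \int f_s/\sup f_s$. These ratios combine to give $\vol(\sqrt{n}E_{f_s})/\vol(K)\le \vol(\sqrt{n}B_2^n)\cdot L_{f_s}^n$. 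A standard covering estimate $N(A,K)\le 4^n\vol(A+K/2)/\vol(K)$ (after symmetrizing $K$ via $K\cap -K$ using Theorem~\ref{thm:symmetrize-ext}) then yields the second bound of \eqref{m-and-iso}, with a further $3^n$-type factor from Brunn–Minkowski to pass from $\vol(A+K/2)$ to $\vol(A)$ (since $A$ dominates $K$ in volume).

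The main obstacle will be carrying out the first core estimate cleanly: arguing that packed translates of $\rho E_{f_s}$ inside a sub-level set of $f_s$ near $b(K)$ have $f_s$-mass that is a definite fraction of $f_s(b(K))\cdot\vol(\rho E_{f_s})$ requires knowing that $f_s$ is not too much smaller than $f_s(b(K))$ on a neighborhood of $b(K)$ comparable in size to the packing elements — this is where the hypothesis that the reweighting parameter $s$ lives in $n(\conv\{K-b(K),b(K)-K\})^*$ would normally be invoked (to keep $f_s$ nearly flat near $b(K)$ on the scale of $E_{f_s}$). The bookkeeping of the absolute constants so that everything collapses to the clean $12^n$ factors and precisely $\frac{4}{3}$, rather than slightly larger constants, is also delicate and will likely require fixing the Markov threshold and the packing-radius constant in tandem.
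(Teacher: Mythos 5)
There is a genuine gap in your second half. To bound $N(T,K)$ you reduce to the volume ratio $\vol(\sqrt{n}E_{f_s})/\vol(K)$ and try to finish via $N(A,K)\le 4^n\vol(A+K/2)/\vol(K)$ plus a step ``passing from $\vol(A+K/2)$ to $\vol(A)$ by Brunn--Minkowski.'' Brunn--Minkowski only gives a \emph{lower} bound on Minkowski sums, so it cannot convert $\vol(A+K/2)$ to $O(1)^n\vol(A)$: if $A$ and $K$ are two long thin boxes pointed in orthogonal directions, $\vol(A)=\vol(K)=\eps$ but $\vol(A+K/2)$ is $\Theta(1)$. More fundamentally, a bound on $\vol(T)/\vol(K)$ alone says nothing about $N(T,K)$. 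What the paper actually controls is the \emph{intersection} volume: it shows $\vol(T\cap K)\ge 2^{-O(n)}\vol(T)$ and then invokes Lemma~\ref{lem:cov-est} in the form $N(T,K)\le 6^n\,\vol(T)/\vol(T\cap(K+c))$. That intersection estimate is the heart of the argument and your route does not recover it.

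Your first half also diverges from the paper in a way that makes your self-identified ``main obstacle'' a red herring. You worry that to show $f_s$ is close to $f_s(b(K))$ on a neighborhood of $b(K)$, one would need the hypothesis $s\in n\bigl(\conv\set{K-b(K),b(K)-K}\bigr)^*$; but Lemma~\ref{lem:iner-to-m} makes no such assumption and allows $\sup_K f_s/f_s(b(K))$ to be arbitrarily large. The paper's proof never needs $f_s$ to be nearly flat near $b(K)$. Instead, it sets $b(f_s)=0$, notes $\E\|X\|_{E_{f_s}}^2=n$ so Markov gives $\pi_{f_s}(2\sqrt{n}E_{f_s})\ge 3/4$, and applies Jensen's inequality to the convex function $e^{\inner{s,\cdot}}$ to get $\int_K f_s\ge f_s(b(K))\vol(K)$ directly. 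Chaining these three facts yields
\[
\vol\bigl(2\sqrt{n}E_{f_s}\cap K\bigr)\ \ge\ \tfrac{3}{4}\,\frac{f_s(b(K))}{\sup_K f_s}\,\vol(K),
\]
after which both covering bounds follow from the intersection-volume form of Lemma~\ref{lem:cov-est}. So the Jensen step, not a local flatness estimate, is what introduces the ratio $\sup_K f_s/f_s(b(K))$, and your packing-based sketch is solving a harder problem than necessary while still leaving the $N(T,K)$ side broken.
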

\begin{proof}
  Since the above estimates are all invariant under shifts of $K$, we
  may assume that $b(f_s) = 0$ (centroid of $f_s$).  We note that
  $b(f_s) \in K$ always and hence $0 \in K$. Let $X$ be distributed as
  $\pi_{f_s}$, where $\pi_{f_s}$ is the probability measure induced by
  $f_s$. So we have that $\E[X] = b(f_s) = 0$ and $\E[XX^t] =
  \cov(f_s)$.

  \paragraph{} Remember that $E_{f_s} = \set{x: x^t\cov(f_s)^{-1}x
    \leq 1}$, therefore $\|x\|_{E_{f_s}} = \sqrt{x^t \cov(f_s)^{-1}
    x}$. Now note that
  \begin{align}
    \E[\|X\|_{E_{f_s}}^2] &= \E[X^t\cov(f_s)^{-1}X] =
    \E[\mathrm{trace}[\cov(f_s)^{-1}XX^t] ]
    = \mathrm{trace}[\cov(f_s)^{-1}\E[XX^t]] \\
    &= \mathrm{trace}[\cov(f_s)^{-1} \cov(f_s)] =
    \mathrm{trace}[\Id_n] = n.
  \end{align}
  Now by Markov's inequality, we have that
  \begin{equation}
    \pi_{f_s}(2\sqrt{n}E_{f_s}) = 1-\Pr[\|X\|_{E_{f_s}} > 2\sqrt{n}]
    \geq 1-\frac{\E[\|X\|_{E_{f_s}}^2]}{4n} = 1 - \frac{n}{4n} = \frac{3}{4}.
    \label{eq:ell-tail}
  \end{equation}
  By Jensen's inequality, we see that
  \begin{equation}
    \int_K f_s(x) dx = \int_K e^{\pr{s}{x}}dx = \vol(K) \int_K e^{\pr{s}{x}} \frac{dx}{\vol(K)} \geq \vol(K)
    e^{\pr{s}{b(K)}} = \vol(K) f_s(b(K)),
    \label{eq:rint-lb}
  \end{equation}
  where $b(K)$ is the centroid of $K$.

  Using \eqref{eq:rint-lb} and \eqref{eq:ell-tail} we see that
  \begin{equation}
    \vol(2\sqrt{n}E_{f_s} \cap K) \geq \frac{\int_{2\sqrt{n}E_{f_s}}
      f_s(x)dx}{\sup_{x \in K} f_s(x)} \geq \frac{3}{4} ~ \frac{\int_K f_s(x) dx}{\sup_{x \in K} f_s(x)} \geq \frac{3}{4}
    ~ \frac{f_s(b(K))}{\sup_{x \in K} f(x)} \vol(K) .
    \label{eq:ell-int-lb}
  \end{equation}

  Using that $\frac{\sqrt{n}}{\delta}E_{f_s} \subseteq T$, $0 \in K$,
  $\delta \geq 1$, and by \eqref{eq:ell-int-lb} we get that
  \begin{align}
    \begin{split}
      \vol(T \cap K) &\geq \vol\left(\frac{\sqrt{n}}{\delta} E_{f_s}
        \cap K\right) = \left(\frac{1}{\delta}\right)^n
      \vol(\sqrt{n}E_{f_s} \cap \delta K)
      \geq \left(\frac{1}{\delta}\right)^n \vol\left(\sqrt{n}E_{f_s} \cap \frac{1}{2}K\right)  \\
      &= \left(\frac{1}{2\delta}\right)^n \vol(2\sqrt{n}E_{f_s} \cap
      K) \geq \left(\frac{1}{2\delta}\right)^n ~ \frac{3}{4} ~
      \frac{f_s(b(K))}{\sup_{x \in K} f(x)} ~ \vol(K).
    \end{split}
    \label{eq:itm-1}
  \end{align}

  Using the definition of $L_{f_s}$, \eqref{eq:ell-tail},
  $\sqrt{n}E_{f_s} \subseteq \delta T$ and that $0 \in K$, we get that
  \begin{align}
    \begin{split}
      \det(\cov(f_s))^{\frac{1}{2}} &= L_K^n ~ \frac{\int_K f_s(x)
        dx}{\sup_{x \in K} f_s(x)} \leq L_K^n ~ \frac{4}{3} ~
      \frac{\int_{2\sqrt{n}E_{f_s}} f_s(x) dx}{\sup_{x \in K} f_s(x)}
      \leq L_K^n ~ \frac{4}{3} ~ \vol(2\sqrt{n}E_{f_s} \cap K) \\
      &\leq L_K^n ~ \frac{4}{3} ~ \vol(2 \delta T \cap K) \leq
      (2\delta L_K)^n ~ \frac{4}{3} ~ \vol(T \cap K).
    \end{split}
    \label{eq:itm-2a}
  \end{align}

  Using that $T \subseteq \delta \sqrt{n}E_{f_s}$ and the ellipsoid
  volume formula \eqref{eq:ell-vol-form}, we have that
  \begin{equation}
    \vol(T) \leq \vol(\delta \sqrt{n}E_{f_s}) = \delta^n \vol(\sqrt{n}B_2^n) \det(\cov(f_s))^{\nicefrac{1}{2}}.
    \label{eq:itm-2b}
  \end{equation}
  Combining equations \eqref{eq:itm-2a},\eqref{eq:itm-2b} we get
  that
  \begin{equation}
    \vol(T) \leq (2\delta^2 L_K)^n ~ \vol(\sqrt{n}B_2^n) ~ \frac{4}{3} ~ \vol(T \cap K).
    \label{eq:itm-2}
  \end{equation}

  Now applying Lemma \ref{lem:cov-est} to the inequalities \eqref{eq:itm-1},\eqref{eq:itm-2} the theorem
  follows.
\end{proof}

From Lemma \ref{lem:iner-to-m}, we see that if the slicing conjecture is true, then for any convex body, its inertial
ellipsoid appropriately scaled is an $M$-ellipsoid. To bypass this, Klartag shows that for any convex body $K$, there
exists a ``mild'' exponential reweighting $f_s$ of the uniform density on $K$ with bounded isotropic constant. As one
can see from Lemma \ref{lem:iner-to-m}, the severity of the reweighting controls $N(K,\sqrt{n}E_{f_s})$ whereas the
isotropic constant of $f_s$ controls $N(\sqrt{n}E_{f_s},K)$. 

The main tool to establish the existence of ``good'' exponential
reweightings for $K$ is the following lemma, which one can extract
from the proof of Theorem \ref{thm:eps-slicing} in \cite{K06}. We will
use it here for $\eps = 1$, in which case the expectation below is of
order $2^{O(n)}$. The argument is essentially identical to that
of~\cite{K06}; we include it for completeness.

\begin{lemma}[\cite{K06}]
  \label{lem:exp-slice}
  Let $K \subseteq \R^n$ be a convex body such that $b(K) \in
  \frac{1}{n+1}E_K$. For $s \in \R^n$, let $f_s:K \rightarrow \R^+$
  denote the function $f_s(x) = e^{\pr{s}{x}}$, $x \in K$. Let $X$ be
  distributed as $\eps n \left(\conv \set{K, -K}\right)^*)$ for some
  real $\epsilon > 0$. Then we have
  \begin{equation}
    \E[L_{f_X}^{2n}] \leq \left((1+o(1)) ~ \sqrt{\frac{2}{\pi e}} ~
      \frac{e^{\eps}}{\sqrt{\eps}} \right)^{2n}
  \end{equation}
\end{lemma}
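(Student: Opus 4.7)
My plan is to start from the identity $L_{f_s}^{2n}=\det\cov(f_s)\cdot (\sup_K f_s)^2/Z(s)^2$, where $Z(s)=\int_K e^{\pr{s}{x}}\,dx$, and to express
\[
\E_X[L_{f_X}^{2n}] \;=\; \frac{1}{\vol(D)}\int_D \det\cov(f_s)\,\frac{e^{2h_K(s)}}{Z(s)^2}\,ds,
\]
with $D=\eps n (\conv\set{K,-K})^*$. The strategy is to bound the ratio $e^{h_K(s)}/Z(s)$ pointwise on $D$, to bound the average of $\det\cov(f_s)$ by a change-of-variables argument, and then to close with a Mahler-type volume estimate.

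For the pointwise bound, membership $s\in D$ forces $h_K(s), h_K(-s)\leq \eps n$ (using that $\conv\set{K,-K}$ is symmetric), so $\sup_K f_s\leq e^{\eps n}$. Jensen's inequality gives $Z(s)\geq \vol(K)\, e^{\pr{s}{b(K)}}$, so I need $|\pr{s}{b(K)}|$ to be small. I will invoke Cauchy--Schwarz in the $\cov(K)$-inner product: with $A=\cov(K)$, one has $|\pr{s}{b(K)}|\leq \|s\|_A\cdot\|b(K)\|_{A^{-1}}$. The centroid hypothesis $b(K)\in E_K/(n+1)$ is exactly $\|b(K)\|_{A^{-1}}\leq 1/(n+1)$, and $\|s\|_A^2=\Var_u[\pr{s}{u}]\leq (\eps n)^2$ since $|\pr{s}{u}|\leq \eps n$ on $K$. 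Therefore $|\pr{s}{b(K)}|\leq \eps$ and $e^{2h_K(s)}/Z(s)^2\leq e^{2\eps(n+1)}/\vol(K)^2$.

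For the $\det\cov(f_s)$ average, I will use that $\Phi(s):=\nabla\log Z(s)=b(f_s)$ is a diffeomorphism from $\R^n$ onto $\mathrm{int}(K)$ with Jacobian $\nabla^2\log Z(s)=\cov(f_s)$ (strict convexity of $\log Z$ gives injectivity; the image exhausts $\mathrm{int}(K)$ by limits along rays). By the area formula, $\int_D\det\cov(f_s)\,ds=\vol(\Phi(D))\leq \vol(K)$. Combining with the pointwise estimate yields
\[
\E_X[L_{f_X}^{2n}] \;\leq\; \frac{e^{2\eps(n+1)}}{\vol(K)\,\vol(D)}.
\]

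Finally I will bound $\vol(K)\,\vol(D)=(\eps n)^n\,\vol(K)\,\vol(L^*)$ from below, where $L=\conv\set{K,-K}$. The Bourgain--Milman inequality for the symmetric body $L$ gives $(\vol(L)\vol(L^*))^{1/n}\geq (1-o(1))\cdot \vol(B_2^n)^{2/n}=(1-o(1))\cdot 2\pi e/n$ via Stirling. Rogers--Shephard applied to $K-K\supseteq L$ (valid because $0\in K$, which is harmlessly arranged since $b(K)$ lies near the origin by hypothesis) gives $\vol(L)\leq 4^n\vol(K)$. Combining, $(\vol(K)\vol(L^*))^{1/n}\geq (1-o(1))\pi e/(2n)$, so $(\vol(K)\vol(D))^{1/n}\geq (1-o(1))\pi e\eps/2$. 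Substituting into the previous display yields exactly the stated bound, with $(1+o(1))^{2n}$ absorbing the Stirling asymptotics. The main obstacle I expect is pinning down the sharp leading constant $\sqrt{2/(\pi e)}$: this is essentially a Mahler-volume question on $L$, so the argument is delicate about the precise form of the reverse Blaschke--Santal\'o inequality used, and following~\cite{K06} I expect to have to set up the change of variables so that the required volumetric input matches what is provably available in this setting rather than in its conjectural optimal form.
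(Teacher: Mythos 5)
Your proposal follows the same overall route as the paper: decompose $L_{f_s}^{2n}$ into $\det\cov(f_s)$ times the ratio $(\sup_K f_s / Z(s))^2$, bound the ratio pointwise over $D$ by Jensen plus the hypothesis on $b(K)$, integrate $\det\cov(f_s)$ via the change of variables $s\mapsto b(f_s)$ (which is exactly Lemma~3.2 of~\cite{K06}, quoted directly in the paper), and finish with a volume-product estimate on $L = \conv\set{K,-K}$. Your Cauchy--Schwarz route to $\absfit{\pr{s}{b(K)}}\leq\eps$ is a correct, slightly more elaborate alternative to the paper's direct containment $b(K)\in\frac{1}{n+1}E_K\subseteq\frac{1}{n}K$.

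There is, however, a genuine error in your volume-product step. You assert that Bourgain--Milman gives $\parensfit{\vol(L)\vol(L^*)}^{1/n}\geq(1-o(1))\vol(B_2^n)^{2/n}=(1-o(1))\,2\pi e/n$. That quantity is the Blaschke--Santal\'o \emph{upper} bound; the Bourgain--Milman \emph{lower} bound is strictly smaller by an exponential factor. In the form the paper uses (Theorem~\ref{thm:vol-prod}), the symmetric Bourgain--Milman bound is $\vol(L)\vol(L^*)\geq\parensfit{(1+o(1))\pi e/n}^n$, i.e.\ a $2^{-n}$ loss relative to what you wrote. Your argument still produces the advertised constant only because you simultaneously use the weaker Rogers--Shephard estimate $\vol(L)\leq\vol(K-K)\leq 4^n\vol(K)$; the two factor-of-$2^n$ discrepancies cancel by coincidence. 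If you substitute the correct Bourgain--Milman constant into your chain, your bound deteriorates by a factor $2^n$ (i.e.\ $\sqrt{2}$ inside the $(\cdot)^{2n}$), and the stated inequality fails. The paper avoids this by using the sharper Rogers--Shephard inequality for the \emph{convex hull}, $\vol(\conv\set{K,-K})\leq 2^n\vol(K)$ (due to~\cite{RS58}, valid since $0\in K$), in place of the $4^n$ bound for $K-K$. Your closing worry about the ``precise form of the reverse Blaschke--Santal\'o inequality'' is half-right in spirit but misdiagnoses the issue: the resolution is not a stronger volume-product bound, but the stronger convex-hull Rogers--Shephard bound.
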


\begin{proof}
  For $s \in \R^n$ define $f_s:K \rightarrow \R_+$ by $f_s(x) =
  e^{\pr{s}{x}}$ for $x \in K$. In Lemma 3.2 of~\cite{K06} is it shown
  that
  \begin{equation}
    \int_{\R^n} \det(\cov(f_s)) ds = \vol(K)
    \label{lem:transport}
  \end{equation}

  By Theorem \ref{thm:sandwich}, we have that $E_K + b(K) \subseteq
  K$. Since $b(K) \in \frac{1}{n+1}E_K$ by assumption, we see that
  $\frac{n}{n+1}E_K \subseteq E_K + b(K) \subseteq K$. Hence $0 \in
  K$. From~\cite{RS58}, we know that for any convex body $K$ such that
  $0 \in K$, we have that $\vol(\conv \set{K,-K}) \leq 2^n \vol(K)$.

  Let $L = \conv \set{K,-K}$. Note that
  \begin{equation}
    L^* = \left(\conv \set{K,-K}\right)^* = \set{y: |\pr{x}{y}| \leq 1, ~ \forall x \in K}
  \end{equation}
  Since $L$ is centrally symmetric by the Bourgain-Milman inequality
  (Theorem~\ref{thm:vol-prod}), we have that
  \begin{equation}
    \vol(L^*)\vol(L) \geq \left((1+o(1))\frac{\pi e}{n}\right)^n
  \end{equation}
  Hence we get that
  \begin{equation}
    \vol(L^*) \geq \left(\frac{(1+o(1))\pi e}{n\vol(L)^{\frac{1}{n}}}\right)^n
    \geq \left(\frac{(1+o(1))\pi e}{2n\vol(K)^{\frac{1}{n}}}\right)^n
    \label{eq:slice-1}
  \end{equation}
  Take $s \in \eps n L^*$. We examine the properties of $f_s:K
  \rightarrow \R_+$. Since $s \in \eps n L^*$, we see that
  \begin{equation}
    \sup_{x \in K} f_s(x) = e^{\sup_{x \in K} \pr{s}{x}} \leq e^{\eps n}
  \end{equation}
  Since $b(K) \subseteq \frac{1}{n+1}E_K \subseteq \frac{1}{n}K$ and
  $s \in \eps n \left(\conv \set{K,-K}\right)^*$, we see that
  $|\pr{s}{b(K)}| \leq \eps$. Now by Jensen's inequality, we have that
  \begin{align}
    \begin{split}
      \int_K e^{\pr{s}{x}} dx &= \vol(K) \left(\int_K e^{\pr{s}{x}}
        \frac{dx}{\vol(K)}\right)
      \geq \vol(K) e^{\int_K \pr{s}{x} \frac{dx}{\vol(K)}} \\
      &= \vol(K) e^{\pr{s}{b(K)}} \geq \vol(K) e^{-\eps}
    \end{split}
  \end{align}
  Now we see that
  \begin{equation}
    L_{f_s}^{2n} = \left( \sup_{x \in K} \frac{f_s(x)}{\int_K f_s(x) dx} \right)^2 \det(\cov(f_s))
    \leq \left( \frac{e^{\eps n}}{\vol(K) e^{-\eps}} \right)^2 \det(\cov(f_s))
    = \frac{e^{2(n+1) \eps}}{\vol(K)^2} \det(\cov(f_s))
    \label{eq:slice-2}
  \end{equation}
  Applying inequality (\ref{eq:slice-2}), Lemma 3.2 of~\cite{K06}, and
  equation (\ref{eq:slice-1}), we get that
  \begin{align}
    \begin{split}
      \frac{1}{\vol(\eps n L^*)} \int_{\eps n L^*} L_{f_s}^{2n} ds
      &\leq \frac{e^{2(n+1) \eps}}{\vol(\eps n L^*)\vol(K)^2} \int_{\eps n L^*} {\vol(K)^2}\det(\cov(f_s)) ds \\
      &\leq \frac{e^{2(n+1) \eps}}{\vol(\eps n L^*)\vol(K)^2} \vol(K)
      \leq \left(\frac{(1+o(1))e^{2\eps}}{\eps n \vol(L^*)^{\frac{1}{n}} \vol(K)^{\frac{1}{n}}}\right)^n \\
      &\leq \left(\frac{(1+o(1))2 e^{2\eps} }{\pi e \eps}\right)^n
      = \left((1+o(1)) ~ \sqrt{\frac{2}{\pi e}} ~ \frac{e^{\eps}}{\sqrt{\eps}} \right)^{2n}\\
    \end{split}
  \end{align}
  The above quantity is exactly $\E[L_{f_{X}}]$ since $X$ is uniform over $\eps n L^*$. The statement thus follows.
\end{proof}


\section{Additional Background}
\label{sec:additional-background}

For two probability distributions $\sigma_1,\sigma_2$ over a domain
$\mathcal{X}$, their \emph{total
  variation} (or \emph{ statistical}) \emph{distance} is
\begin{equation}
  \TVD(\sigma_1,\sigma_2) = \sup_{A \subseteq \mathcal{X}} |\sigma_1(A)-\sigma_2(A)|.
\end{equation}

\subsection{Logconcave functions}

We will need to work with the generalization of convex bodies to
logconcave functions.  A function $f:\R^n \rightarrow \R_+$ is
logconcave if for all $x,y \in \R^n$, and $0 \leq \alpha \leq 1$, we
have that
\begin{equation}
  f(\alpha x + (1-\alpha)y) \geq f(x)^\alpha f(y)^{1-\alpha}
\end{equation}
The canonical examples of logconcave functions are the indicator
functions of convex bodies as well as the Gaussian distributions. We
will now generalize the concepts defined before for convex bodies to
logconcave functions.

For a logconcave function $f$ on $\R^n$ such that $0 < \int_{\R^n}
f(x)\, dx < \infty$, we define the associated probability measure
(distribution) $\pi_f$, where for measurable $A \subseteq \R^n$, we
have
\begin{equation}
  \pi_f(A) = \frac{\int_A f(x)\, dx}{\int_{\R^n} f(x)\, dx}.
\end{equation}

We define the {\em centroid} (or barycenter) and {\em covariance}
matrix of $f$ as
\begin{align*}
  b(f) &= \frac{\int_{\R^n} x f(x)dx}{\int_{\R^n} f(x)\, dx} &
  \cov(f)_{ij} &= \frac{\int_{\R^n} (x_i-b(f)_i)(x_j-b(f)_j)
    f(x)\, dx}{\int_{\R^n} f(x)\, dx} ~~ 1 \leq i,j \leq n
\end{align*}

The matrix $\cov(f)$ is positive semi-definite and symmetric.  We say
that $f$ is isotropic, or in isotropic position, if $b(f) = 0$ and
$\cov(f)$ is the identity matrix.  Define the {\em inertial ellipsoid}
of $f$ as
\begin{equation*}
  E_f = E(\cov(f)^{-1}) = \set{x: x^t \cov(f)^{-1} x \leq 1}
  \label{def:iner-ell}
\end{equation*}
The {\em isotropic constant} of $f$ is defined as
\begin{equation*}
  L_f = \left(\sup_{x \in \R^n} \frac{f(x)}{\int_{\R^n} f(x)dx}
  \right)^{\nicefrac{1}{n}} \cdot \det(\cov(f))^{\nicefrac{1}{2n}}.
  \label{def:is-const}
\end{equation*}
A natural extension of the slicing conjecture
(Conjecture~\ref{conj:slicing}) is that $L_f$ is bounded by a
universal constant.  This generalized slicing conjecture was shown by
Ball~\cite{Ball88} to be equivalent to the slicing conjecture for
convex bodies, up to a constant factor in the precise bound.


For a convex body $K$, let $\pi_K$ denote the uniform measure
(distribution) over $K$. Let $f_K$ denote the associated density, i.e.,
\[
f_K(x) = \frac{1}{\vol(K)} I[x \in K],
\]
We note that the definitions coincide exactly if we replace $K$ by $f_K$,
i.e., $\cov(K) = \cov(f_K)$, $b(K) = b(f_K)$, $L_K = L_{f_K}$, etc. We
extend all the notions defined above for log-concave functions to
convex bodies in the same way, e.g. we let $E_K = E_{f_K}$.
We say that $K$ is in isotropic position
if $b(K) = 0$ and $\cov(K)$ is the identity (a
different normalization is sometimes used in asymptotic convex geometry,
namely, $b(K)=0$, $\vol(K)=1$, and $\cov(K)$ is constant diagonal).

\subsection{Computational model}
For a rational matrix $A$, we define $\enc{A}$ as the length of the binary
encoding of $A$. The lattice algorithms presented will have complexity
depending on the dimension $n$ of the lattice and the bit length of the
description of the input basis.

Since we work with general (semi-)norms, we shall need an appropriate way to represent them. We now define the three
different types of oracles that we will need. For convenience, our semi-norms will always be indexed by a convex body
$K$. With some slight modifications, we will adopt the terminology from \cite{GLS}.

Let $K \subseteq \R^n$ be a convex body. For $\eps \geq 0$, we define
\begin{equation}
  K^{\eps} = K + \eps B_2^n \quad \text{ and } \quad K^{-\eps} = \set{x \in K: x + \eps B_2^n \subseteq K}
\end{equation}

We say that $K$ is \emph{$(a_0,R)$-circumscribed} if $K \subseteq a_0 + RB_2^n$ for some $a_0 \in \Q^n$ and $R \in \Q$.
We say that $K$ is \emph{$(a_0,r,R)$-centered} if $a_0 + rB_2^n \subseteq K \subseteq a_0 + RB_2^n$ for $a_0 \in \Q^n$, $r,R
\in \Q$.  We will always assume that the above parameters are given explicitly as part of the input to our problems, and
hence our algorithms will be allowed to depend polynomially in $\enc{a_0},\enc{r},\enc{R}$.

\begin{definition}
\label{def:weak-memb}
A \emph{weak membership oracle} $O_K$ for $K$ is function which takes as input a point $x \in \Q^n$ and real $\eps > 0$,
and returns
\begin{equation}
  O_K(x,\eps) = \begin{cases} 1 &: x \in K^{\eps} \\ 0 &: x \notin K^{-\eps} \end{cases}
\end{equation}
where any answer is acceptable if $x \in K^{\eps} \setminus K^{-\eps}$.
\end{definition}

\begin{definition}
\label{def:strong-sep}
A \emph{strong separation oracle} $\mathrm{SEP}_K$ for $K$ on input $y
\in \Q^n$ either returns YES if $y \in K$, or some $c \in \Q^n$ such
that $\pr{c}{x} < \pr{c}{y}$, $\forall x \in K$.
\end{definition}

When working with the above oracle, we assume that there is a polynomial $\phi$, such that on input $y$ as above, the
output of $\mathrm{SEP}_K$ has size bounded by $\phi(\enc{y})$. The runtimes of algorithms using $\mathrm{SEP}_K$ will
therefore depend on $\phi$.

Let $K$ be a convex body containing the origin.

\begin{definition}
\label{def:weak-dist}
A \emph{weak distance oracle} $D_K$ for $K$ is a function that takes as
input a point $x \in \Q^n$ and $\eps > 0$, and returns a rational number satisfying
\begin{equation}
  \absfit{D_K(x,\eps)-\length{x}_K} \leq \eps.
\end{equation}
\end{definition}

As above, we assume the existence of a polynomial $\phi$, such that the size of the output of $D_K$ on $(x,\eps)$ is
bounded by $\phi(\enc{x},\enc{\eps})$. For a $(0,r,R)$-centered body $K$, $\forall x \in \R^n$, we crucially have that
\[
\frac{1}{R} \length{x} \leq \length{x}_K \leq \frac{1}{r} \length{x} \text{.}
\]


\subsection{Standard Algorithms}
\label{sec:standard-algorithms}

Here we list some of the algorithmic tools we will require.

The following theorem is essentially the classical equivalence between weak
membership and weak optimization~\cite{YN76,GLS}.

\begin{theorem}[Convex Optimization via Ellipsoid Method]
\label{thm:convex-opt}
Let $K \subseteq \R^n$ an $(a_0,r,R)$-centered convex body given by a weak membership oracle $O_K$. Let $A \in
\Q^{m \times n}$, $c \in \Q^m$. Define $f:\R^n \rightarrow \R$ as
\begin{equation}
  f(x) = \max_{1 \leq i \leq m} \pr{A_i}{x}+c_i
\end{equation}
where $A_i$ is the $i^{th}$ row of $A$. Then for $\eps > 0$, a number $\omega \in \Q$ satisfying
\begin{equation}
  |\omega - \inf_{x \in K} f(x)| \leq \eps
\end{equation}
can be computed using $O_K$ in time
\begin{equation}
  \poly(n,\enc{A},\enc{a_0},\enc{c}) \polylog(\nicefrac{R}{r},\nicefrac{1}{\eps})
\end{equation}
\end{theorem}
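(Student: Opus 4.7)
The plan is to reduce the statement to the classical weak optimization result of Gr\"otschel-Lov\'asz-Schrijver by exploiting the explicit description of $f$. The first step is to upgrade the weak membership oracle $O_K$ into a weak separation oracle for $K$: since $K$ is $(a_0,r,R)$-centered, the standard GLS reduction (weak membership $\Rightarrow$ weak separation) runs in $\poly(n,\enc{a_0},\log(R/r))$ time per call, with accuracy parameter passed through polynomially. This reduction is the workhorse behind the whole proof and is the one place where we really use the centeredness parameters.

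The second step is to handle the piecewise-linear objective $f(x) = \max_i \inner{A_i,x}+c_i$. Because $f$ is given explicitly by the matrix $A$ and vector $c$, we can evaluate $f(x)$ exactly in rational arithmetic, and at any point $x$ we can read off a subgradient $A_{i^*}$ for any index $i^*$ achieving the maximum. This means that for any threshold $\tau \in \Q$ the sublevel set $K_\tau = K \cap \set{x : f(x) \leq \tau}$ admits a weak separation oracle in polynomial time: to (approximately) separate a query point $y$ from $K_\tau$, first call the weak separation oracle for $K$; if that reports $y$ is (approximately) in $K$, compute $f(y)$ and, if $f(y) > \tau$, return the hyperplane $\inner{A_{i^*}}{\cdot - y} \leq 0$ which strictly separates $y$ from $K_\tau$ whenever $\inner{A_{i^*}}{y}+c_{i^*} > \tau$.

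The third step is to combine the two by a bisection on $\tau$ wrapped around the GLS weak feasibility/optimization routine for convex bodies given by a weak separation oracle. One can equivalently cast this as minimizing the linear functional $(x,t) \mapsto t$ over the convex body $\hat{K} = \set{(x,t) : x \in K,\; t \geq f(x),\; L \leq t \leq U}$ in $\R^{n+1}$, where $U = \max_i (\length{A_i} R + |\inner{A_i}{a_0}|+|c_i|)$ and $L = -U$ give a trivial a priori range for $\inf_K f$ of bit-length $\poly(\enc{A},\enc{a_0},\enc{c},\log R)$; a weak separation oracle for $\hat K$ is assembled from the two pieces above. The GLS weak optimization theorem then produces the required rational $\omega$ with $|\omega - \inf_{x \in K} f(x)| \leq \eps$ in time $\poly(n,\enc{A},\enc{a_0},\enc{c},\log(R/r),\log(1/\eps))$.

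The main obstacle is accounting: one must verify that the intermediate accuracy at which we invoke the membership oracle and separation routines, the bit-length of the encoded subgradients $A_{i^*}$, and the range $[L,U]$ of the objective all remain polynomial in the claimed parameters, so that the overall runtime matches the bound $\poly(n,\enc{A},\enc{a_0},\enc{c})\polylog(R/r,1/\eps)$. All of this is standard bookkeeping in the GLS framework; the only ingredient beyond pure citation is the observation that an explicit piecewise-linear $f$ yields the separation oracle for the epigraph essentially for free.
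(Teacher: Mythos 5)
Your proposal is correct and follows the standard textbook route. In fact, the paper does not actually prove this theorem at all—it cites it as "essentially the classical equivalence between weak membership and weak optimization" from Yudin--Nemirovskii and Gr\"otschel--Lov\'asz--Schrijver; your reconstruction (weak membership $\Rightarrow$ weak separation for the centered body $K$, explicit subgradients for the piecewise-linear $f$, a weak separation oracle for the epigraph $\hat K \subseteq \R^{n+1}$ with bit-length bounds on $[L,U]$ and the inner radius, and then GLS weak optimization of the linear functional $t$) is exactly the argument the citation is shorthand for.
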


We will also need an algorithm from \cite{GLS}, which allows one to deterministically compute an ellipsoid with
relatively good ``sandwiching'' guarantees for a convex body $K$. We present a small modification of the result in GLS:

\begin{theorem}[Algorithm GLS-Round]
  \label{thm:gls-round}
  Let $K \subseteq R^n$ be an $(a_0,R)$-circumscribed convex body
  given by a strong-separation oracle $\mathrm{SEP}_K$. Then for any
  $\eps > 0$, in $\poly(\log \nicefrac{R}{\eps}, \enc{a_0} n)$ time
  one can compute $A \succ 0$, $A \in \Q^{n \times n}$ and $t \in
  \R^n$, such that the ellipsoid $E = E(A)$ satisfies $K \subseteq E +
  t$, and one of the following: (a)~$\vol(E) \leq \eps$, or
  (b)~$\frac{1}{(n+1)n^{\nicefrac{1}{2}}} E + t \subseteq K$.
\end{theorem}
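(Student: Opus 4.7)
The plan is to implement the classical shallow-cut ellipsoid method of GLS, maintaining at each iteration an ellipsoid $E_i = E(A_i) + t_i$ with $K \subseteq E_i$, and shrinking it until either its volume drops below $\eps$ or it certifies the requested inner containment. We initialize $A_0 = R^{-2} I_n$ and $t_0 = a_0$, so that $E_0 = a_0 + R B_2^n \supseteq K$ by the circumscribing hypothesis. The stopping criterion (a) is easy: track $\vol(E_i) = \vol(B_2^n) \det(A_i^{-1})^{1/2}$ with rational arithmetic, and terminate as soon as it falls below $\eps$.

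The heart of each iteration is the inner-containment test. Given $E_i$, choose a set of $O(n)$ probe points $v_1, \ldots, v_m$ on the boundary of $\delta E(A_i) + t_i$ for a suitable $\delta = \Theta(1/\sqrt{n})$: concretely, let $v_j = t_i + \delta A_i^{-1/2} u_j$, where $u_1, \ldots, u_{n+1}$ are the vertices of a regular simplex inscribed in $\mathbb{S}^{n-1}$ (equivalently, one can use the $2n$ signed axis points). Query $\mathrm{SEP}_K$ at each $v_j$. If every $v_j$ lies in $K$, then by convexity $\mathrm{conv}\{v_j\} \subseteq K$; applying the Löwner–John-type estimate that a regular simplex of circumradius $\delta$ in $\mathbb{R}^n$ contains a ball of radius $\delta/n$ about its centroid, the preimage $\delta \cdot \tfrac{1}{n} E(A_i) + t_i$ lies in $K$, and choosing $\delta$ so that $\delta/n = 1/((n+1)\sqrt{n})$ (i.e.\ $\delta = \sqrt{n}/(n+1)$) yields exactly case~(b). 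Otherwise some $v_j \notin K$ and $\mathrm{SEP}_K$ returns $c \in \Q^n$ with $\langle c, x \rangle < \langle c, v_j \rangle$ for all $x \in K$; this hyperplane is a shallow cut of $E_i$ at depth $\|c\|_{A_i^{-1}}^{-1} |\langle c, t_i - v_j \rangle| \leq \delta \|c\|_{A_i^{-1}}$, i.e.\ $O(1/\sqrt{n})$ in normalized coordinates, which is exactly the regime in which the shallow-cut ellipsoid update applies.

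Apply the standard shallow-cut update (GLS, Chapter 3) to produce $(A_{i+1}, t_{i+1})$ such that $E_{i+1} \supseteq E_i \cap \{x : \langle c, x\rangle \leq \langle c, v_j\rangle\} \supseteq K$, with $\vol(E_{i+1})/\vol(E_i) \leq e^{-\Omega(1/n)}$. Since $\vol(E_0) \leq (2R)^n$ and our target is $\eps$, at most $N = O(n^2 \log(R/\eps))$ iterations suffice before we land in case (a). Each iteration performs $O(n)$ oracle calls and $\mathrm{poly}(n)$ arithmetic operations, so the total number of operations is $\mathrm{poly}(n, \log(R/\eps), \enc{a_0})$.

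The most delicate step — and what I expect to be the main obstacle — is the bit-length control. The exact shallow-cut update yields irrational entries (square roots), so one must replace it by a rational approximation, and show that rounding the entries of $A_{i+1}$ and $t_{i+1}$ to $\mathrm{poly}(n, \log(R/\eps), \enc{a_0})$ bits still preserves both the containment $K \subseteq E_{i+1}$ (by slightly enlarging the ellipsoid to absorb the rounding error) and the geometric volume-decrease factor (weakened from $e^{-\Omega(1/n)}$ to, say, $e^{-\Omega(1/n^2)}$, which still gives a polynomial iteration bound). This is exactly the rational shallow-cut analysis carried out in GLS; we invoke it essentially verbatim. The separation oracle's outputs have bit length bounded by the fixed polynomial $\phi$ assumed in our oracle model, so all cut vectors $c$ have controlled size, and hence after at most $N$ rounds every quantity produced by the algorithm has bit-length polynomial in the input parameters, as required.
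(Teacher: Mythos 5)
There is a genuine gap in the inner-containment test, and it is exactly where the two roles of the probe radius get conflated. For the shallow-cut ellipsoid update to make any progress, the retained half-space must cut $E_i$ at normalized depth strictly less than $\nicefrac{1}{n}$ beyond the center: the L\"owner--John ellipsoid of the cap $E \cap \set{x : \pr{c}{x-t} \leq \beta \length{c}_{A^{-1}}}$ is smaller than $E$ only when $\beta < \nicefrac{1}{n}$ (this is precisely why GLS use $\beta = \nicefrac{1}{n+1}$, and why their per-iteration volume decrease is only $e^{-\Theta(1/n^3)}$, not the central-cut rate $e^{-\Omega(1/n)}$ you quote). Your probe points sit at normalized radius $\delta = \sqrt{n}/(n+1) = \Theta(1/\sqrt{n})$, so when a probe $v_j$ falls outside $K$ the only guarantee is $K \subseteq \set{x : \pr{c}{x} \leq \pr{c}{v_j}}$, a half-space whose bounding hyperplane may lie at depth $\approx 1/\sqrt{n} \gg 1/n$ past the center. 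For such a cut the minimum-volume enclosing ellipsoid of the cap is $E_i$ itself, the update makes no progress, and the iteration bound collapses. The correct balance is the reverse of yours: probe at radius $\nicefrac{1}{n+1}$ using the $2n$ signed axis points, so that a violated probe yields a legitimate shallow cut with $\beta \leq \nicefrac{1}{n+1}$, and when all probes lie in $K$ their convex hull (a cross-polytope) contains the concentric ellipsoid shrunk by the additional factor $\sqrt{n}$ --- this is exactly where the constant $\frac{1}{(n+1)\sqrt{n}}$ in case~(b) comes from. Your choice instead spends the full budget $\frac{1}{(n+1)\sqrt{n}}$ on the inscribed-ball step (via the simplex inradius $\delta/n$) and leaves nothing for the cut-depth requirement. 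With the corrected parameters the iteration count becomes $O(n^{4}\log R + n^{3}\log(\nicefrac{1}{\eps}))$ rather than $O(n^{2}\log(R/\eps))$, which is still polynomial, and your remarks about rational rounding and oracle output size are fine.

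For calibration: the paper does not prove this statement at all --- it is stated as a minor variant of a theorem of Gr\"otschel, Lov\'asz and Schrijver and used as a black box --- so you are in effect reconstructing the GLS shallow-cut argument. Your overall architecture (maintain $K \subseteq E_i$, test inner containment by finitely many oracle probes, shallow-cut update, volume stopping rule, rational rounding) is the right one; the specific parameter choice above is what breaks it.
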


The next theorem comes from the literature on random walks on convex bodies~\cite{LV3,LV06,LV2}.

\begin{theorem}[Algorithm Logconcave-Sampler,~\cite{LV06}]
  \label{thm:sampling-tech}
Let $K \subseteq \R^n$ be a $(a_0,r,R)$-centered convex body given by a weak membership oracle $O_K$. Let $f:K
\rightarrow \R_+$ be a polynomial time computable log-concave function satisfying
\begin{equation}
\sup_{x \in K} f(x) \leq \beta^{n} f(0)
\end{equation}
for some $\beta > 1$. Let $\eps,\tau > 0$. Then the following can be computed:
\begin{enumerate}
\item A random point $X \in K$ with distribution $\sigma$ satisfying $\TVD(\sigma,\pi_{f_s}) \leq \tau$ in time
\begin{equation}
  \poly(n,\enc{a_0}) \polylog(n, \nicefrac{R}{r}, \beta, \nicefrac{1}{\tau})
\end{equation}
\item A point $b \in K$ and a matrix $A \in \Q^{n \times n}$ such that $\forall ~ x \in \R^n$
\begin{equation}
|\pr{x}{b-b(f_s)}| \leq \eps~x^t \cov(f_s)x  \quad \text{ and } \quad |x^t(A-\cov(f_s))x| \leq \eps~x^t\cov(f_s)x \text{,}
\end{equation}
with probability $1-\delta$ in time
\begin{equation}
  \poly(n, \enc{a_0}, \nicefrac{1}{\eps}) \polylog(n, \nicefrac{R}{r}, \beta, \nicefrac{1}{\delta}) \text{.}
\end{equation}
\end{enumerate}
\end{theorem}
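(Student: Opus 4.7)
The plan is to establish both parts by constructing a geometric random walk on $K$ whose stationary distribution is $\pi_f$, proving it mixes rapidly in polynomial time, and then using polynomially many (nearly) independent samples to estimate the centroid and covariance. The two parts of the statement correspond to (i) a rapid-mixing analysis and (ii) an empirical concentration argument.

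For part (1), I would use a ball walk (or hit-and-run) on $K$ with a Metropolis filter: propose $y$ from a small ball or a random chord through $x$, and accept with probability $\min(1, f(y)/f(x))$, so that $\pi_f$ is the stationary distribution. Starting from a suitable initial point (e.g., $a_0$, which is interior to $K$ since $K$ is $(a_0,r,R)$-centered), the mixing time in total variation is controlled via a conductance argument, with conductance lower bounded by the isoperimetric inequality for log-concave measures (the KLS-style inequality for log-concave $\pi_f$). The bound $\sup_{K} f \leq \beta^n f(0)$ controls both the initial distribution's warmness and the Metropolis rejection probabilities, producing a mixing time polynomial in $n, \log(R/r), \log \beta, \log(1/\tau)$ as required.

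The main technical obstacle is that isoperimetric constants degrade when $K$ is poorly rounded, so a naive walk does not mix in polynomial time from an arbitrary $(a_0,r,R)$-centered body. The standard resolution is an iterative rounding scheme: maintain a linear transformation $T$ such that the pushforward of $\pi_f$ under $T$ is approximately isotropic; use the walk under the current $T$ to produce samples; use those samples to improve $T$; repeat $O(\log(R/r))$ phases. Each phase only needs a constant-factor improvement in roundness, which requires only a constant-warm start and constant conductance, so the inductive bootstrap closes. This is the hard part of the analysis and is carried out in detail in \cite{LV06,LV3}.

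For part (2), once we can sample from a distribution within total variation $\tau'$ of $\pi_f$, draw $N = \Theta(n/\eps^2)$ approximately independent samples $X_1, \ldots, X_N$ (generated by running the walk for its mixing time between samples) and form the empirical estimators $b = \frac{1}{N}\sum_i X_i$ and $A = \frac{1}{N}\sum_i (X_i - b)(X_i - b)^t$. The required multiplicative-error guarantees follow from a matrix concentration inequality for log-concave isotropic random vectors (Rudelson-type), applied in the coordinate system where $\cov(f)$ is the identity: $N = O(n/\eps^2)$ samples suffice to estimate the covariance within operator-norm error $\eps$, which translates to the stated quadratic-form bound. Setting $\tau' = 1/\poly(n, 1/\eps, 1/\delta)$ makes the sampling error negligible compared to the statistical error, and a union bound over the $N$ samples yields overall failure probability at most $\delta$. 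Combining the mixing time bound from part (1) with the $O(n/\eps^2)$ sample complexity yields the claimed running time.
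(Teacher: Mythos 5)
Your proposal is correct in outline, but note that the paper does not prove this statement at all: it is imported as a black-box citation to the random-walk sampling literature (\cite{LV06,LV3,LV2}), and your sketch --- Metropolis-filtered ball walk/hit-and-run with stationary distribution $\pi_f$, conductance via logconcave isoperimetry, an iterative rounding bootstrap to handle poorly rounded $K$, and empirical mean/covariance estimation from polynomially many samples --- is exactly the route taken in those cited works. So you are reconstructing the standard Lov\'asz--Vempala argument rather than diverging from the paper, and at the level of detail given there is no gap beyond what is already delegated to \cite{LV06,LV3}.
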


The following simple lemma allows us to construct a strong separation oracle for any hyperplane section of a convex body
already equipped with a strong separation oracle.

\begin{lemma} Let $K \subseteq \R^n$ be a convex body presented by a strong separation oracle $SEP_K$. Let $H = \set{x
\in \R^n: Ax = b}$ denote an affine subspace, where $A \in \Q^{m \times n}$, $b \in \Q^m$. Then one
can construct a separation oracle for $K \cap H$, such that on input $y \in H$, the oracle executes in time
$\poly(\enc{y}, \enc{A}, \enc{b})$ using a single call to $SEP_K$.
\label{lem:sep-slice}
\end{lemma}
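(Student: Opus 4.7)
The plan is to handle the oracle query with a single call to $\mathrm{SEP}_K$ plus a short amount of linear-algebraic post-processing. Given an input $y \in H \cap \Q^n$, I would first call $\mathrm{SEP}_K(y)$. If it returns YES, then $y \in K$ and, combined with $y \in H$, this gives $y \in K \cap H$, so the oracle for $K \cap H$ should also return YES. Otherwise $\mathrm{SEP}_K$ produces $c \in \Q^n$ with $\pr{c}{x} < \pr{c}{y}$ for all $x \in K$, which in particular separates $y$ from the smaller set $K \cap H$.

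In principle $c$ itself is already a valid separator; however, for downstream uses where $K \cap H$ is treated as a convex body living in the lower-dimensional affine subspace $H$, the returned normal should lie in the direction space $\ker(A)$. So I would replace $c$ by its orthogonal projection onto $\ker(A)$, namely
\begin{equation*}
c' \;=\; c - A^T (AA^T)^{-1} A c,
\end{equation*}
after first discarding redundant rows of $A$ in polynomial time to ensure $AA^T$ is invertible. All operations are rational and take $\poly(\enc{A}, \enc{c})$ time. The separation property is preserved: for any $x \in K \cap H$, both $x$ and $y$ lie in $H$, so $x - y \in \ker(A)$ is orthogonal to $c - c' \in \mathrm{row}(A)$, giving $\pr{c'}{x - y} = \pr{c}{x - y} < 0$.

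The one thing I would need to check is that $c' \neq 0$, since a zero normal is useless. If $c$ lay in the row space of $A$, say $c = A^T \lambda$, then $\pr{c}{z} = \lambda^T A z = \lambda^T b$ would be constant on all of $H$; in particular $\pr{c}{y} = \lambda^T b$, and no point of $H$ can strictly satisfy $\pr{c}{x} < \pr{c}{y}$. Hence this degenerate case can only occur when $K$ lies strictly on one side of a hyperplane parallel to $H$, forcing $K \cap H = \emptyset$, in which case the oracle may legitimately return any nonzero vector (the defining inequality is vacuous). Ruling this out otherwise is immediate from the assumption $y \in H$.

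For the running time, the output of $\mathrm{SEP}_K$ has size $\poly(\enc{y})$ by the assumed polynomial size bound $\phi$ on the oracle, and the projection step takes $\poly(\enc{A}, \enc{c})$ time, so the total is $\poly(\enc{y}, \enc{A}, \enc{b})$ as claimed. I do not foresee any genuine obstacle: the only delicate points are the rank-reduction of $A$ and the handling of the $K \cap H = \emptyset$ edge case, both of which are routine.
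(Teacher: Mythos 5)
Your proof is correct and takes essentially the same route as the paper: call $\mathrm{SEP}_K$ on $y$, project the returned separator $c$ orthogonally onto $\ker(A)$, and handle the degenerate case $c' = 0$ by noting that $\pr{c}{\cdot}$ is then constant on $H$, which forces $K \cap H = \emptyset$. The only cosmetic difference is that the paper has the oracle explicitly declare $K \cap H$ empty in that case rather than return an arbitrary (vacuously valid) separator.
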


\begin{proof}
  We wish to construct a strong separation oracle for $K ~\cap~ H$, where $H = \set{x \in \R^n: Ax = b}$ is an affine
  subspace, given a strong separation oracle for $K$. To do this given $y \in H$, we do the following.  First, we call
  $SEP_K$ on $y$. If $SEP_K$ returns that $y \in K$, we return YES. If $SEP_K$ returns a separator $c \in \R^n$ such that
  $\sup_{x \in K} \pr{c}{x} < \pr{c}{y}$, we compute $\bar{c}$ the orthogonal projection of $c$ onto $W = \set{x \in R^n:
    Ax = 0}$ (the lineality space of $H$). If $\bar{c} = 0$, we note that $\pr{\bar{c}}{\cdot}$ is constant over $H$.
  Therefore if $K \cap H \neq \emptyset$, there exists $x \in K \cap H \subseteq K$ such that $\pr{c}{x} = \pr{c}{y}$, a
  contradiction.  Hence if $\bar{c} = 0$, we return that $K \cap H$ is EMPTY. Otherwise, we simply return $\bar{c}$. Since
  the derived oracle simply calls $SEP_K$ once and projects any found separator onto the lineality space of $H$, the
  runtime is clearly $\poly(\enc{A},\enc{b},\enc{y})$ as needed.
\end{proof}

We now derive some straightforward applications of the above
fundamental tools.

\begin{corollary}[Algorithm Estimate-Covariance]
\label{lem:estimate-covariance}
Let $K \subseteq \R^n$ be an $(a_0,r,R)$-centered convex body given by a weak membership oracle $O_K$.
Let $f:K \rightarrow \R_+$ be a polynomial time computable log-concave function satisfying
\begin{equation}
\sup_{x \in K} f(x) \leq e^{2n} f(0).
\end{equation}
Then an ellipsoid $E(A)$, $A \in \Q^{n \times n}$, can be computed satisfying
\begin{equation}
    e^{-\nicefrac{1}{n}} E_{f_s} \subseteq E(A) \subseteq e^{\nicefrac{1}{n}} E_{f_s}
\end{equation}
with probability $1-\delta$ in time $\poly(n,\enc{a_0}, \log (\tfrac{R}{r}), \log(\tfrac{1}{\delta}))$.
\end{corollary}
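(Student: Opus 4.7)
The plan is to directly apply the covariance estimation algorithm from Theorem~\ref{thm:sampling-tech}(2) and then translate its multiplicative PSD approximation guarantee into the desired ellipsoid sandwiching condition.

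First, I would invoke Theorem~\ref{thm:sampling-tech}(2) on $K$ and $f$ with error parameter $\eps' = 1/n$ and failure probability $\delta$. The hypothesis $\sup_{x \in K} f(x) \leq e^{2n} f(0)$ yields $\beta = e^{2}$, a constant, so the stated runtime becomes $\poly(n, \enc{a_0}, n) \cdot \polylog(n, R/r, e^{2}, 1/\delta) = \poly(n, \enc{a_0}, \log(R/r), \log(1/\delta))$. With probability at least $1-\delta$, this returns $A' \in \Q^{n \times n}$ satisfying
$$|x^{t}(A' - \cov(f))x| \leq \eps' \cdot x^{t} \cov(f)\, x \quad \text{for all } x \in \R^{n}.$$

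Second, I would set $A = (A')^{-1}$ and output the ellipsoid $E(A)$. Since $\eps' = 1/n < 1$, the above display is equivalent to the Loewner order relation $(1-\eps')\cov(f) \preceq A' \preceq (1+\eps')\cov(f)$. In particular $A'$ is positive definite, hence invertible in polynomial time, and taking inverses (which reverses the Loewner order) yields $(1+\eps')^{-1}\cov(f)^{-1} \preceq A \preceq (1-\eps')^{-1}\cov(f)^{-1}$.

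Third, I would translate this PSD sandwich into an ellipsoid sandwich for $E_{f} = E(\cov(f)^{-1})$. If $x \in E(A)$, i.e.\ $x^{t} A x \leq 1$, then $x^{t}\cov(f)^{-1} x \leq (1+\eps')\, x^{t}Ax \leq 1+\eps'$, so $x \in \sqrt{1+\eps'}\cdot E_{f}$. Conversely, if $x \in \sqrt{1-\eps'} \cdot E_{f}$, then $x^{t}\cov(f)^{-1} x \leq 1-\eps'$, and hence $x^{t}Ax \leq (1-\eps')^{-1}\, x^{t}\cov(f)^{-1} x \leq 1$, i.e.\ $x \in E(A)$. Combining,
$$\sqrt{1-\eps'} \cdot E_{f} \;\subseteq\; E(A) \;\subseteq\; \sqrt{1+\eps'} \cdot E_{f}.$$
For $\eps' = 1/n$ and $n$ sufficiently large, the elementary inequalities $\sqrt{1+\eps'} \leq e^{\eps'/2} \leq e^{1/n}$ and $\sqrt{1-\eps'} \geq e^{-\eps'} \geq e^{-1/n}$ finish the job. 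There is no real obstacle here: the corollary is essentially a repackaging of Theorem~\ref{thm:sampling-tech}(2), and the only bookkeeping is to verify that a PSD error of size $\Theta(1/n)$ suffices to yield $e^{\pm 1/n}$ multiplicative ellipsoid error, while the choice $\eps' = 1/n$ keeps the polynomial $\poly(1/\eps')$ factor in the runtime absorbed into $\poly(n)$.
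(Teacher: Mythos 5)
Your proposal is correct and follows essentially the same approach as the paper: invoke the covariance-estimation subroutine of Theorem~\ref{thm:sampling-tech}(2) with error $1/n$ and $\beta = e^2$, invert the returned matrix, and convert the multiplicative PSD approximation into an $e^{\pm 1/n}$ ellipsoid sandwich. The only cosmetic difference is that the paper phrases the inversion step via polar ellipsoids ($E(B)^* = E(B^{-1})$) rather than via the Loewner order directly, and uses the inequalities $1 - 1/n \geq e^{-2/n}$ (for $n \geq 3$) and $1 + 1/n \leq e^{2/n}$ where you use $\sqrt{1+\eps'} \leq e^{\eps'/2}$ and $\sqrt{1-\eps'} \geq e^{-\eps'}$; these are the same elementary estimates.
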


\begin{proof}
  Using Theorem~\ref{thm:sampling-tech}, we can compute a matrix $B \subseteq \Q^{n \times n}$ satisfying
  \begin{equation}
    |x^t(B-\cov(f_s))x| \leq \frac{1}{n}~x^t\cov(f_s)x \quad \forall ~ x \in \R^n \text{,}
    \label{eq:ec-1}
  \end{equation}
  with probabiliy $1-\delta$ in time $\poly(n) \polylog(n, \nicefrac{R}{r}, \nicefrac{1}{\delta})$. We now condition on
  the event (\ref{eq:ec-1}). Remembering that $x^tBx = \|x\|_B^2$ and
  $x^t\cov(f_s)x =
  \|x\|_{\cov(f_s)}^2$, we may rewrite (\ref{eq:ec-1}) as
  \begin{equation}
    \sqrt{\frac{n-1}{n}}\|x\|_{\cov(f_s)} \leq \|x\|_B \leq \sqrt{\frac{n+1}{n}}\|x\|_{\cov(f_s)}
  \end{equation}
  From the above, we see that the ellipsoid $E(\cov(f_s)) = \set{x: \|x\|_{\cov(f_s)} \leq 1}$ and $E(B) = \set{x:
    \|x\|_B \leq 1}$ satisfy
  \begin{equation}
    \sqrt{\frac{n}{n+1}}E(\cov(f_s)) \subseteq E(B) \subseteq \sqrt{\frac{n}{n-1}}E(\cov(f_s))
    \label{eq:ell-con1}
  \end{equation}
  Remembering that the polar ellipsoids satisfy 
  \begin{equation}
    E(B)^* = E(B^{-1}) \quad \text{ and } \quad E(\cov(f_s))^{-1} = E(\cov(f_s)^{-1}) = E_{f_s} \text{.}
  \end{equation}
  where the last equality follows by the definition of $E_{f_s}$. Taking the polars of the above ellipsoids, the
  containment relationships in (\ref{eq:ell-con1}) flip, and we get
  \begin{equation}
    \sqrt{\frac{n-1}{n}}E_{f_s} \subseteq E(B^{-1}) \subseteq \sqrt{\frac{n+1}{n}}E_{f_s}
    \label{eq:ell-con2}
  \end{equation}
  Now using the inequalities $1-\frac{1}{n} \geq e^{-\nicefrac{2}{n}}$ for $n \geq 3$ and $1+\frac{1}{n} \leq
  e^{\nicefrac{2}{n}}$, we see that (\ref{eq:ell-con2}) implies
  \begin{equation}
    e^{-\nicefrac{1}{n}} E_{f_s} \subseteq E(B^{-1}) \subseteq e^{\nicefrac{1}{n}} E_{f_s}
  \end{equation}
  as needed. Letting $A = B^{-1}$, the ellipsoid $E(A)$ satisfies the desired requirements.
\end{proof}

\begin{corollary}[Algorithm Estimate-Centroid]
  \label{lem:estimate-centroid}
  There is a probabilistic algorithm Estimate-Centroid that, given a
  $(0,r,R)$-centered convex body $K$ presented by a weak membership
  oracle $O_K$ and some $\delta > 0$, in time $\poly(n) \polylog(n,
  \nicefrac{R}{r}, \nicefrac{1}{\delta})$ either outputs FAIL (with
  probability at most $\delta$) or some $b \in K$ such that:
  \begin{equation}
    b + \frac{r}{2(n+1)\sqrt{n}}B_2^n \subseteq K \subseteq b + 2RB_2^n
  \end{equation} 
  and with probability at least $1-\delta$,
  \begin{equation}
    \displaystyle b-b(K) \in \frac{1}{n+1}E_{K}.
  \end{equation}
\end{corollary}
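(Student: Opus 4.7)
The plan is to apply part~2 of Theorem~\ref{thm:sampling-tech} (Logconcave-Sampler) to the uniform density $f = \mathbf{1}_K$ on $K$, which is trivially log-concave with $\sup_K f / f(0) = 1$, using accuracy parameter $\eps := 1/(n+1)$ and confidence $\delta$. In time $\poly(n)\polylog(n, R/r, 1/\delta)$ this returns a point $b \in K$ (together with a matrix estimate we can ignore) satisfying, with probability at least $1-\delta$, $|\pr{x}{b - b(K)}| \leq \eps \cdot \|x\|_{\cov(K)}$ for every $x \in \R^n$. By the definition $E_K = E(\cov(K)^{-1})$, this is exactly $b - b(K) \in \tfrac{1}{n+1} E_K$, giving the second (probabilistic) claim.

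For the deterministic sandwich, the right-hand inclusion $K \subseteq b + 2RB_2^n$ is immediate from $b \in K \subseteq RB_2^n$ and the triangle inequality. For the left-hand inclusion I combine two ingredients. First, Theorem~\ref{thm:sandwich} gives $b(K) + E_K \subseteq K$; combined with $b - b(K) \in \tfrac{1}{n+1}E_K$ and the fact that $\alpha E_K + \beta E_K \subseteq (\alpha+\beta)E_K$ for the symmetric ellipsoid $E_K$, this yields $b + \tfrac{n}{n+1}E_K = b(K) + (b-b(K)) + \tfrac{n}{n+1}E_K \subseteq b(K) + E_K \subseteq K$. Second, a standard outer-sandwich bound on the inertial ellipsoid (complementing Theorem~\ref{thm:sandwich}) asserts $K - b(K) \subseteq Cn\cdot E_K$ for an absolute constant $C$. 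In particular $rB_2^n - b(K) \subseteq CnE_K$; using central symmetry of $E_K$ and averaging the inclusions applied to $z$ and $-z$ for $z \in rB_2^n$, we get $rB_2^n \subseteq Cn E_K$, equivalently $E_K \supseteq \tfrac{r}{Cn}B_2^n$. Combining, $b + \tfrac{r}{C(n+1)}B_2^n \subseteq K$, which is (much) stronger than the required $b + \tfrac{r}{2(n+1)\sqrt{n}}B_2^n \subseteq K$.

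The main obstacle is squaring the ``deterministic when not FAIL'' sandwich with the ``with probability $\geq 1-\delta$'' centroid guarantee, since the inclusion chain above crucially uses $b - b(K) \in \tfrac{1}{n+1}E_K$, which holds only on the sampler's success event. This is resolved by having the algorithm output FAIL precisely on the sampler's failure event (probability at most $\delta$), either via the sampler's internal convergence diagnostics or, alternatively, by running the sampler $O(\log(1/\delta))$ times and using $O_K$ membership tests near the returned $b$ to validate a sufficient margin of interior-ness. On the complementary event both conclusions then hold simultaneously, yielding the stated guarantee.
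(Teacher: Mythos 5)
Your high-level strategy matches the paper's: invoke the logconcave sampler on the uniform density over $K$ to estimate the centroid, then use the inertial-ellipsoid sandwich of Theorem~\ref{thm:sandwich} to translate ``$b$ is close to $b(K)$ in the $E_K$ norm'' into a Euclidean ball inside $K$ around $b$, and finally fall back on the membership oracle to certify the returned point. (Two small things: the sampler guarantee as stated in Theorem~\ref{thm:sampling-tech} is $|\pr{x}{b-b(K)}| \leq \eps\, x^t\cov(K)x$ without a square root, so the paper takes $\eps = \tfrac{1}{(n+1)^2}$ rather than $\tfrac{1}{n+1}$; and the outer containment $K-b(K) \subseteq \sqrt{n(n+2)}\,E_K$ you need is already part of Theorem~\ref{thm:sandwich}, you do not need a separate ``standard outer-sandwich bound.'')

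The genuine gap is in your last step. You correctly flag that the inclusion chain $b + \tfrac{n}{n+1}E_K \subseteq K$ only holds on the sampler's success event, but your proposed resolution misses the actual mechanism. ``Output FAIL precisely on the sampler's failure event'' is not an implementable algorithm --- the failure event is not observable, the sampler has no useful ``internal convergence diagnostics,'' and re-running it $O(\log(1/\delta))$ times does not help (there is no way to tell which run succeeded). The correct logical structure, which the paper spells out, is weaker and different: design a membership test (probing $O_K$ at the $2n$ points $b \pm \tfrac{3r}{4(n+1)}e_i$ with tolerance $\tfrac{r}{4(n+1)\sqrt{n}}$) such that (i) whenever the test passes, the deterministic inclusion $b + \tfrac{r}{2(n+1)\sqrt{n}}B_2^n \subseteq K$ follows from the oracle's semantics and convexity, \emph{regardless} of whether the sampler succeeded, and (ii) whenever $b - b(K) \in \tfrac{1}{n+1}E_K$, the test is guaranteed to pass (this is where Theorem~\ref{thm:sandwich} enters, to show $b + \tfrac{r}{n+1}B_2^n \subseteq K$ on the success event, with margin to spare for the weak oracle's slack). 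Then FAIL is a sub-event of the sampler's failure, giving $\Pr[\text{FAIL}] \leq \delta$, while the sandwich becomes an unconditional guarantee of any output $b$. Getting (i) to work with a \emph{weak} membership oracle requires the specific choice of probe radius and tolerance so that ``all probes answer 1'' implies a cross-polytope, hence a ball, sits inside $K^{\eps'}$, and then that $K + \eps' B_2^n \supseteq b + \rho B_2^n$ with $\rho > \eps'$ forces $b + (\rho-\eps')B_2^n \subseteq K$; none of this is present in your sketch, and it is exactly where the stated constant $\tfrac{r}{2(n+1)\sqrt{n}}$ comes from.
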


\begin{proof}
  Using Theorem~\ref{thm:sampling-tech}, we compute a center $b \in K$ satisfying
  \begin{equation}
    |\pr{x}{b-b(K)}| \leq \frac{1}{(n+1)^2} ~x^t \cov(K)x \quad \forall ~ x \in \R^n \text{,} 
    \label{eq:ec-2}
  \end{equation}
  with probability $1-\delta$ in time $\poly(n) \polylog(n, \nicefrac{R}{r}, \nicefrac{1}{\delta})$.

  First, check whether
  \begin{equation}
    O_K\left(b \pm \frac{3r}{4(n+1)} e_i, \frac{r}{4(n+1)\sqrt{n}}\right) = 1 \quad \text{ for } 1 \leq i \leq n
    \label{eq:well-guar-1}
  \end{equation}
  If any of the above tests fail, abort and return FAIL.  

  Let $\delta = \frac{r}{n+1}$. If these tests pass, by the properties of $O_K$ we know that
  \begin{align}
    \begin{split}
      b + \frac{3\delta}{4} \conv \set{\pm e_1, \dots, \pm e_n} \subseteq K^{\frac{\delta}{4\sqrt{n}}} 
      &\Rightarrow b + \frac{3\delta}{4\sqrt{n}} B_2^n \subseteq K^{\frac{\delta}{4\sqrt{n}}} 
      \Rightarrow b + \frac{\delta}{2\sqrt{n}} B_2^n \subseteq K
    \end{split}
  \end{align}
  Since $b \in K \subseteq RB_2^n$, we clearly also have that $K \subseteq b + 2RB_2^n$. Hence conditioned up outputting $b$,
  we have that
  \begin{equation}
    b + \frac{r}{2(n+1)\sqrt{n}}B_2^n \subseteq K \subseteq b + 2RB_2^n
  \end{equation} 
  as needed.

  We now show that if the event (\ref{eq:ec-2}) holds, then the above test will pass and condition $(b)$ will also be
  satisfied. Since this event holds with probability $1-\delta$, this will suffice to prove the statement.

  For the center $b$, we note that for all $x \in (n+1)E(\cov(f_s))$, by equation
  (\ref{eq:ec-2}) we have that
  \begin{equation}
    |\pr{b-b(K)}{x}| \leq \frac{1}{(n+1)^2} x^t \cov(K) x \leq \frac{1}{(n+1)^2} (n+1)^2 = 1
  \end{equation}
  Therefore, we have that $b-b(K) \in \left((n+1)E(\cov(K))\right)^* = \frac{1}{n+1} E_K$ as needed.

  We now show that the tests must all pass. From Theorem~\ref{thm:sandwich}, we know that
  \begin{equation}
    b(K) + \sqrt{\frac{n+2}{n}}E_K \subseteq K \subseteq b(K) + \sqrt{n(n+2)}E_K
  \end{equation}
  By the guarantee on $O_K$, we know that $rB_2^n \subseteq b(K) + \sqrt{n(n+2)}E_K$. But we have that
  \begin{align}
    \begin{split}
      rB_2^n - b(K) \subseteq \sqrt{n(n+2)}E_K 
      &\Rightarrow rB_2^n + b(K) \subseteq \sqrt{n(n+2)}E_K  \\
      &\Rightarrow \nicefrac{1}{2}(rB_2^n - b(K)) + \nicefrac{1}{2}(rB_2^n + b(K)) \subseteq \sqrt{n(n+2)}E_K \\
      &\Rightarrow rB_2^n \subseteq \sqrt{n(n+2)}E_K
    \end{split}
  \end{align}
  since both $E_K$ and $B_2^n$ are symmetric. From the inequality $n+1 \geq \sqrt{n(n+2)}$, we have that
  \begin{equation}
    \frac{r}{n+1}B_2^n \subseteq \frac{\sqrt{n(n+2)}}{n+1}E_K \subseteq E_K
    \label{eq:guar-iner}
  \end{equation}
  Since $b-b(K) \in \frac{1}{n+1}E_K$ by assumption, and $\sqrt{\frac{n+2}{n}}E_K + b(K) \subseteq K$, we get that
  \begin{equation}
    b \in b(K) + \frac{1}{n+1}E_K \Rightarrow b + E_K \subseteq b(K) + \frac{n+2}{n+1}E_K \Rightarrow
    b + E_K \subseteq b(K) + \sqrt{\frac{n+2}{n}}E_K \subseteq K
  \end{equation}
  Therefore by~\ref{eq:guar-iner}) we have that $b+ \frac{r}{n+1}B_2^n \subseteq K$. Letting $\delta = \frac{r}{n+1}$,
  from the previous sentence we see that 
  \begin{equation}
    b \pm \frac{3}{4} \delta e_i \in K^{-\nicefrac{\delta}{4}} \subseteq K^{-\frac{\delta}{4\sqrt{n}}}
  \end{equation}
  Therefore by the properties of $O_K$, the tests in~\ref{eq:well-guar-1} must all pass. The claim thus holds.
\end{proof}

\subsection{Geometric Inequalities}

Perhaps the most fundamental inequality in the geometry of numbers is Minkowski's first theorem,
which is stated as follows:

\begin{theorem}\label{thm:minkowski-first} Let $L \subseteq \R^n$ be an $n$ dimensional lattice
and let $K \subseteq \R^n$ denote a centrally symmetric convex body. Then
\[
\lambda_1(K,L) \leq 2\left(\frac{\det(L)}{\vol(K)}\right)^{\nicefrac{1}{n}}
\]
\end{theorem}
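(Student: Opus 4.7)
My plan is to prove Minkowski's first theorem via the classical Blichfeldt argument combined with the symmetry/convexity of $K$. The high-level idea is to inflate $K$ by a factor $r/2$, use a volume/pigeonhole argument to find two distinct points of $\frac{r}{2}K$ that are congruent modulo $L$, and then exploit convexity and central symmetry to conclude that their difference is a nonzero lattice point lying in $rK$.

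First I would establish the Blichfeldt-type lemma: for any measurable $S \subseteq \R^n$ with $\vol(S) > \det(L)$, there exist distinct $x,y \in S$ with $x - y \in L$. The standard argument is to fix a fundamental domain $F$ of $L$ (so $\vol(F) = \det(L)$ and the translates $F + v$, $v \in L$, tile $\R^n$), partition $S$ into the pieces $S \cap (F + v)$, and translate each piece back to $F$. Since the total volume exceeds $\vol(F)$, two distinct translated pieces must overlap, yielding $x \in S \cap (F + v_1)$ and $y \in S \cap (F + v_2)$ with $v_1 \neq v_2$ and $x - v_1 = y - v_2$, hence $x - y = v_1 - v_2 \in L \setminus \{0\}$.

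Next I would apply this to $S = \frac{r}{2}K$ for any $r > 2(\det(L)/\vol(K))^{1/n}$. Such an $r$ gives $\vol(\tfrac{r}{2}K) = (r/2)^n \vol(K) > \det(L)$, so Blichfeldt produces distinct $x, y \in \frac{r}{2}K$ with $x - y \in L \setminus \{0\}$. Since $K = -K$ by central symmetry, $-y \in \frac{r}{2}K$ as well. By convexity of $\frac{r}{2}K$ applied to the midpoint, $\tfrac{1}{2}(x + (-y)) \in \frac{r}{2}K$, equivalently $x - y \in rK$. Therefore $\length{x-y}_K \leq r$, so $\lambda_1(K, L) \leq r$.

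Finally I would conclude by taking the infimum over admissible $r$. Since the inequality $\lambda_1(K,L) \leq r$ holds for every $r > 2(\det(L)/\vol(K))^{1/n}$, we obtain $\lambda_1(K,L) \leq 2(\det(L)/\vol(K))^{1/n}$ as desired. There is no real obstacle here; the only mildly subtle point is justifying that the bound is attained in the limit (not strict), which follows because $K$ is compact and $\length{\cdot}_K$ is continuous on $\R^n$, so the set of $r$ achieving a nonzero lattice point in $rK$ is closed from above.
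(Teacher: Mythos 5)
Your proof is correct and is the classical Blichfeldt pigeonhole argument for Minkowski's first theorem. The paper itself does not give a proof of this theorem --- it simply quotes it as a standard result in the ``Geometric Inequalities'' background appendix --- so there is nothing to compare against. One small remark: your final paragraph invokes compactness of $K$ and continuity of $\length{\cdot}_{K}$ to pass to the limiting $r$, but this is unnecessary. Having shown $\lambda_1(K,L) \leq r$ for every $r > 2\bigl(\det(L)/\vol(K)\bigr)^{1/n}$, the bound $\lambda_1(K,L) \leq 2\bigl(\det(L)/\vol(K)\bigr)^{1/n}$ follows immediately from the order structure of $\R$ alone (a number bounded above by every element of an open ray $(c,\infty)$ is $\leq c$); no topological closedness argument is required. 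The extra remark is harmless but does not add anything.
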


The following gives bounds on how well the inertial ellipsoid approximates a convex body. The estimates below are from
~\cite{KLS95}:

\begin{theorem}\label{thm:sandwich}
For a convex body $K \subseteq \R^n$, the inertial ellipsoid $E_K$ satisfies
\begin{equation}
  \sqrt{\frac{n+2}{n}} \cdot E_K \subseteq K-b(K) \subseteq \sqrt{n(n+2)}
  \cdot E_K
\end{equation}
where equality holds for any simplex.
\end{theorem}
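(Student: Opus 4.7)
The plan is to reduce to the isotropic case and then to a one-dimensional extremal problem in which the simplex is the extremal body. Both $K - b(K)$ and $E_K$ transform equivariantly under the nonsingular affine map $x \mapsto \cov(K)^{-1/2}(x - b(K))$: the centroid is sent to the origin, the covariance becomes $\Id_n$, and so $E_K$ becomes $B_2^n$. Since affine images preserve the asserted inclusions, I may assume throughout that $b(K) = 0$, $\cov(K) = \Id_n$, and it suffices to prove $\sqrt{(n+2)/n}\, B_2^n \subseteq K \subseteq \sqrt{n(n+2)}\, B_2^n$.

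Next, I reduce each inclusion to a statement about 1D marginals. Fix a unit vector $u \in \R^n$ and let $g(t) = \vol_{n-1}(K \cap \set{x : \inner{x}{u} = t})$ on its support $[\alpha, \beta]$. By the Brunn-Minkowski inequality, $g^{1/(n-1)}$ is concave on $[\alpha,\beta]$, where $\beta = \sup_{x \in K} \inner{x}{u}$ and $\alpha = \inf_{x \in K}\inner{x}{u}$. Isotropy together with $b(K) = 0$ yields the three moment identities
\begin{equation*}
  \int_\alpha^\beta g(t)\, dt = \vol(K), \qquad \int_\alpha^\beta t\, g(t)\, dt = 0, \qquad \int_\alpha^\beta t^2\, g(t)\, dt = \vol(K).
\end{equation*}
The outer inclusion is equivalent to $\beta \le \sqrt{n(n+2)}$ for every unit $u$, and the inner inclusion to: if some halfspace $\set{\inner{x}{u} \le r}$ contains $K$, then $r \ge \sqrt{(n+2)/n}$ (so no unit direction can have a supporting hyperplane closer than $\sqrt{(n+2)/n}$).

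For both bounds the extremal case is a cone. Among nonnegative $g$ on $[\alpha,\beta]$ with $g^{1/(n-1)}$ concave and the prescribed zeroth, first, and second moments, the support endpoint $\beta$ is \emph{largest} when $g(t) = c(\beta - t)^{n-1}$ (affine power vanishing at $\beta$), and \emph{smallest} when $g(t) = c(t-\alpha)^{n-1}$. This is the standard localization/cone-replacement step: one chips away at the interior of the graph of $g^{1/(n-1)}$, concentrating mass into a one-parameter family of affine pieces supported at an endpoint, while preserving the moment constraints; a direct variational check shows $\beta$ moves monotonically under this replacement. Once extremality is in hand, a short explicit calculation for $g(t) = c(\beta - t)^{n-1}$ uses the first-moment constraint to force $\alpha = -\beta/n$, and then the ratio of the second to zeroth moment collapses (after cancelling $(n+1)^2 - n(n+2) = 1$) to $\beta^2/(n(n+2)) = 1$, giving $\beta = \sqrt{n(n+2)}$; the symmetric computation on $c(t-\alpha)^{n-1}$ gives $\beta = \sqrt{(n+2)/n}$. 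In both cases the extremal $g$ is the marginal of a simplex in isotropic position, which explains the equality statement.

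The main obstacle is Step 3, the cone-extremality among concave-power densities with fixed moments; the affine reduction and the explicit moment computations are routine. This extremality can be established either via the Lovász-Simonovits localization lemma (which reduces functional inequalities on log-concave measures to one-dimensional needle-like distributions exactly of the cone form above) or by a direct rearrangement/variational argument that pushes mass towards a chosen endpoint while maintaining concavity of $g^{1/(n-1)}$ and the three moment constraints, and then checks that $\beta$ evolves monotonically under the rearrangement.
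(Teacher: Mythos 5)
The paper does not prove Theorem~\ref{thm:sandwich}; it is quoted as background and attributed to Kannan--Lov\'asz--Simonovits~\cite{KLS95} (with the symmetric case attributed to~\cite{MP89} and a suboptimal general bound to~\cite{Son90}). So there is no ``paper proof'' to line up against; the relevant comparison is whether your argument is a complete proof on its own terms.

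Your skeleton is the standard one, and the routine pieces check out. The affine reduction to isotropic position is correct, the support-function reformulation of both inclusions is correct, and the moment identities for the 1D marginal $g(t) = \vol_{n-1}(K \cap \set{\inner{x}{u} = t})$ under isotropy are right. So are the explicit cone computations: with $g(t) = c(\beta - t)^{n-1}$, the first-moment constraint forces $\alpha = -\beta/n$, and the variance constraint then gives $\beta^2 \bigl((n+1)^2 - n(n+2)\bigr)/\bigl(n(n+2)\bigr) = \beta^2/(n(n+2)) = 1$, i.e.\ $\beta = \sqrt{n(n+2)}$; symmetrically, $g(t) = c(t-\alpha)^{n-1}$ gives $\beta = \sqrt{(n+2)/n}$. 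These are the isotropic simplex marginals, consistent with the equality case.

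The genuine gap is the cone-extremality in your Step~3. As stated, ``among nonnegative $g$ on $[\alpha,\beta]$ with $g^{1/(n-1)}$ concave and the prescribed zeroth, first, and second moments, $\beta$ is maximized/minimized by the two affine-power cones'' is exactly the content that needs proof, and your sketch does not supply it. Appealing to ``the Lov\'asz--Simonovits localization lemma'' is not quite right: that lemma reduces inequalities over $\R^n$ to needle integrals with a $(at+b)^{n-1}$ weight, but you have already reduced to a one-dimensional problem, and what you now need is a dedicated 1D extremal lemma for $(1/(n-1))$-concave densities with fixed zeroth/first/second moments. (It is a near relative of localization, and it is true, but ``a direct variational check shows $\beta$ moves monotonically'' is an assertion, not an argument: you must specify the replacement operation, show that it preserves all three moment constraints simultaneously --- a nontrivial point, since a single affine cone has exactly three free parameters --- show that it preserves $(1/(n-1))$-concavity, and show that it pushes $\beta$ in the claimed direction.) Until that lemma is proved or properly cited as an independent one-dimensional statement, the proof is incomplete. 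Everything else in your write-up is sound, and with that lemma in hand the argument does close.
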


The above containment relationship was shown in~\cite{MP89} for
centrally symmetric bodies (with better bounds), and by~\cite{Son90}
for general bodies with suboptimal constants.

The next theorem gives estimates on the volume product, a fundamental
quantity in Asymptotic Convex Geometry. The upper bound for centrally
symmetric bodies follows from the work of Blashke~\cite{Bla18}, and
for general bodies by Santal{\'o}~\cite{San49}. The lower bound was
first established by Bourgain and Milman~\cite{BM87}, and was recently
refined by Kuperberg~\cite{Kup08}, as well as by Nazarov~\cite{Naz09},
where Kuperberg achieves the best constants. Finding the exact
minimizer of the volume product is a major open problem in Asymptotic
Convex Geometry.

\begin{theorem}
  \label{thm:vol-prod}
  Let $K$ be a convex body in $\R^n$. Then we have
  \begin{equation}
    \vol(B_2^n)^2 \geq \inf_{x \in K} \vol(K-x)\vol((K-x)^*) \geq \left(\frac{\pi e(1+o(1))}{2n}\right)^n \text{.}
  \end{equation}
  If $K$ is centrally symmetric, then
  \begin{equation}
    \vol(B_2^n)^2 \geq \vol(K)\vol(K^*) \geq \left(\frac{\pi e(1+o(1))}{n}\right)^n \text{.}
  \end{equation}
  In both cases, the upper bounds are equalities if and only if $K$ is
  an ellipsoid.
\end{theorem}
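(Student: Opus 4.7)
The theorem has two independent halves. The upper bound is the Blaschke--Santal\'o inequality (Blaschke 1918, Santal\'o 1949), whose classical proof proceeds by Steiner symmetrization. The lower bound is the Bourgain--Milman inequality; with the M-ellipsoid machinery already developed in the paper available, the natural route is to derive it as a short consequence of Theorem~\ref{thm:m-ell-exist}. Equality is claimed only in the upper bound.

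\textbf{Upper bound.} I first observe affine invariance: for any invertible linear $T$, $(TK)^* = T^{-t}K^*$ and the two determinants cancel, so $\vol(TK)\vol((TK)^*) = \vol(K)\vol(K^*)$. In the centrally symmetric case I Steiner-symmetrize about a hyperplane $H$ through the origin; the symmetrized body $S_H K$ is convex, $0$-symmetric, and has the same volume as $K$. The core inequality is $\vol((S_H K)^*) \geq \vol(K^*)$, proved by slicing $K^*$ along lines perpendicular to $H$, expressing each slice in terms of support values of $K$, and applying the one-dimensional Brunn--Minkowski inequality; central symmetry of $K$ is exactly what aligns the Brunn--Minkowski combination with the polar of the symmetrized body. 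Iterating along a dense sequence of directions, $K$ converges in Hausdorff distance to a ball of the same volume and $\vol(K^*)$ increases monotonically to the polar volume of that ball, giving $\vol(K)\vol(K^*) \leq \vol(B_2^n)^2$. For the general case I define the Santal\'o point $s(K)$ as the unique minimizer of $x \mapsto \vol((K-x)^*)$ on $\mathrm{int}(K)$ (uniqueness from strict convexity of this functional, itself a consequence of Brunn--Minkowski), translate so $s(K) = 0$, and re-run the symmetrization argument about hyperplanes through the new origin, using the fact that Steiner symmetrization about a hyperplane through the Santal\'o point preserves the property of being Santal\'o-centered.

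\textbf{Lower bound.} I may assume without loss of generality that $s(K) = 0$, so that the centroid of $K^*$ is at the origin (a direct consequence of the variational definition of $s(K)$). I apply Theorem~\ref{thm:m-ell-exist} to $K^*$ to obtain an ellipsoid $E$ with $N(K^*,E) \cdot N(E,K^*) \leq C^n$; standard covering-number volumetrics yield $\vol(K^*) \geq C^{-n} \vol(E)$. Invoking condition~4 of Theorem~\ref{thm:m-ellipsoid-equiv} with the body $K^*$, the polar ellipsoid $E^*$ is an M-ellipsoid of $(K^*)^* = K$, and the same covering estimate gives $\vol(K) \geq C^{-n} \vol(E^*)$. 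Multiplying these bounds and using $\vol(E)\vol(E^*) = \vol(B_2^n)^2$ for any ellipsoid,
\begin{equation}
\vol(K) \cdot \vol(K^*) \;\geq\; C^{-2n} \vol(B_2^n)^2,
\end{equation}
which is the infimum value since the inf over translates is attained at $s(K)$. Stirling gives $\vol(B_2^n)^{2/n} = \Theta(1/n)$, yielding the $\Omega(1/n)^n$ order claimed.

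\textbf{Main obstacle and equality.} The M-ellipsoid route only produces \emph{some} absolute constant in place of the sharp factor $\pi e(1+o(1))$ in the Bourgain--Milman bound. Achieving the precise constant requires either Kuperberg's differential-geometric argument, which sharpens a key inductive step, or Nazarov's complex-analytic proof via Bergman kernel asymptotics on the tube domain $\R^n + iK \subset \C^n$; both fall outside the paper's machinery, and I would invoke them only to obtain the sharp constant. The equality case in the upper bound is settled by tracing back when equality holds in the one-dimensional Brunn--Minkowski step of each Steiner symmetrization: it forces each perpendicular slice of $K^*$ to be a translate of its reflection across $H$, and iterating over a dense set of directions forces $K^*$, hence $K$, to be an ellipsoid.
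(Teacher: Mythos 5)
First, note that the paper does not prove Theorem~\ref{thm:vol-prod} at all: it is stated as background, with the upper bound attributed to Blaschke and Santal\'o \cite{Bla18,San49} and the lower bound to Bourgain--Milman, Kuperberg and Nazarov \cite{BM87,Kup08,Naz09}. So there is no in-paper argument to compare against, and your decision to import the sharp constant $\pi e(1+o(1))$ from \cite{Kup08,Naz09} is exactly what the paper does. Your M-ellipsoid derivation of a \emph{weak} reverse Santal\'o bound is also sound as far as it goes: with the Santal\'o point at the origin the barycenter of $K^*$ is at the origin, so Theorem~\ref{thm:m-ellipsoid-equiv} applies to $K^*$, the covering estimates give $\vol(K^*)\geq C^{-n}\vol(E)$ and $\vol(K)\geq C^{-n}\vol(E^*)$, and $\vol(E)\vol(E^*)=\vol(B_2^n)^2$ finishes it (this is essentially Milman's original route to the reverse Santal\'o inequality, and the paper itself remarks right after the theorem that the M-ellipsoid yields only sub-optimal constants). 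One caution specific to this paper: the constructive M-ellipsoid material here is \emph{not} independent of Theorem~\ref{thm:vol-prod} --- Lemma~\ref{lem:exp-slice} and Theorem~\ref{thm:dual-entr} both invoke Bourgain--Milman and Blaschke--Santal\'o --- so your argument is non-circular only if Theorem~\ref{thm:m-ell-exist} is taken from the original non-constructive proofs \cite{M86,MP00}, which predate and do not use the volume-product lower bound. You should say this explicitly.

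The genuine gaps are in the Blaschke--Santal\'o half. Your symmetric-case sketch is the standard Meyer--Pajor argument, but two details are off: the slices of $K^*$ perpendicular to the symmetrization direction are $(n-1)$-dimensional, so the Brunn--Minkowski step is $(n-1)$-dimensional (central symmetry is what makes the two opposite slices have equal volume, so no loss occurs in passing from the Minkowski average to the original slice volume). More importantly, the general case is not handled by the lemma you invoke: it is not a standard fact (and I do not believe it is true) that Steiner symmetrization about a hyperplane through the Santal\'o point keeps the Santal\'o point at the origin; the symmetral's Santal\'o point lies on the hyperplane but can move within it. What the Meyer--Pajor treatment actually establishes is that the translation-minimized volume product is monotone under Steiner symmetrization, by comparing the symmetral polarized at an arbitrary candidate center with $K$ polarized at a suitably chosen (generally different) center; without that, the infimum for the symmetral is not controlled. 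Similarly, the equality characterization requires more than noting equality in each Brunn--Minkowski step: one must show that a body all of whose Steiner symmetrals (about hyperplanes through the Santal\'o point) preserve the polar volume is an ellipsoid, which is the nontrivial content of Saint-Raymond's and Petty's work. As a citation-backed background statement your write-up is on par with the paper's; as a self-contained proof, the non-symmetric upper bound, the equality case, and the sharp lower-bound constant all remain imported from the literature rather than proved.
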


We remark that the upper and lower bounds match within a $4^n$ factor
($2^n$ for symmetric bodies) since $\vol(B_2^n)^2 = \left(\frac{2\pi
    e(1+o(1))}{n}\right)^n$. Using the M-ellipsoid, one can directly
derive weak bounds (i.e., with sub-optimal constants) on the volume
product. Furthermore, as we shall see in Section
\ref{sec:m-ellipsoid-proofs} , the techniques developed by
Klartag~\cite{K06} can be used to derive the existence of the
M-ellipsoid as an essential consequence of the volume product bounds.

The next theorem gives useful volume estimates for some basic
operations on a convex body. The first estimate is due to Rogers and
Shepard~\cite{RS57}, and the second is due Milman and
Pajor~\cite{MP00}:

\begin{theorem}
\label{thm:symmetrize}
Let $K \subseteq \R^n$ be a convex body. Then
\[
\vol(K-K) \leq \binom{2n}{n}\vol(K) \leq 4^n \vol(K).
\]
If $b(K) = 0$, i.e., the centroid of $K$ is at the origin, then
\[
\vol(K) \leq 2^n \vol(K \cap -K).
\]
\end{theorem}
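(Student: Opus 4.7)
The statement bundles two classical volume inequalities---the Rogers--Shepard bound and the Milman--Pajor symmetric-part bound---and I would handle them separately.

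For Rogers--Shepard, the plan is to analyze the difference convolution
\[
f(z) \;=\; (1_K * 1_{-K})(z) \;=\; \vol\bigl(K \cap (K+z)\bigr),
\]
which is supported on $K-K$, satisfies $f(0) = \vol(K)$, and has $\int f(z)\,dz = \vol(K)^2$ by Fubini. The Brunn--Minkowski inequality yields that $f^{1/n}$ is concave on its support. Normalizing $h = f/\vol(K) \in [0,1]$, concavity of $h^{1/n}$ along each ray from $0$ to a boundary point of $K-K$ gives $h(z) \geq (1 - \|z\|_{K-K})^n$, where $\|\cdot\|_{K-K}$ is the Minkowski gauge of the symmetric body $K-K$. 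Polar integration in this gauge evaluates
\[
\vol(K) \;=\; \int h(z)\,dz \;\geq\; n\,\vol(K-K)\int_0^1 (1-r)^n r^{n-1}\,dr \;=\; \frac{\vol(K-K)}{\binom{2n}{n}},
\]
which rearranges to the Rogers--Shepard bound; the estimate $\binom{2n}{n} \leq 4^n$ is standard.

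For Milman--Pajor, assume $b(K) = 0$ and consider
\[
\phi(x) \;=\; \vol\bigl(K \cap (2x-K)\bigr) \;=\; \vol(M_x), \qquad M_x \;:=\; (K-x) \cap -(K-x),
\]
so $\phi$ equals $(1_K * 1_K)(2x)$. Its support is $K$ (since $2x \in K + K = 2K$ iff $x \in K$ by convexity). A direct check gives the inclusion $(M_x + M_y)/2 \subseteq M_{(x+y)/2}$, so Brunn--Minkowski shows $\phi^{1/n}$ is concave on $K$. The substitution $(y,z) = (x+t,\,x-t)$ in a Fubini argument computes $\int_K \phi(x)\,dx = 2^{-n}\vol(K)^2$. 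The inequality then reduces to the pointwise bound $\phi(x) \leq \phi(0) = \vol(K \cap -K)$ for all $x \in K$, since combining with the integral identity gives $2^{-n}\vol(K)^2 \leq \phi(0) \vol(K)$, i.e., $\vol(K) \leq 2^n \vol(K \cap -K)$.

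The main obstacle is establishing the ``max at the centroid'' property $\max_{x \in K} \phi(x) = \phi(0)$ from the assumption $b(K) = 0$. A short moment computation shows that the $\phi$-weighted centroid of $K$ equals $b(K) = 0$, and the inclusion $(M_x + M_{-x})/2 \subseteq M_0$ yields the two-point inequality $\phi(0)^{1/n} \geq \tfrac{1}{2}\bigl(\phi(x)^{1/n} + \phi(-x)^{1/n}\bigr)$. Promoting this to a pointwise maximum statement requires exploiting the centroid condition more carefully---likely via a Pr\'ekopa--Leindler-type functional argument applied to $1_K$ and $1_{-K}$, or an exchange/symmetrization argument on the integrand of $\int \phi$. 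I expect this step to absorb the bulk of the effort, while the Rogers--Shepard half is a routine application of Brunn--Minkowski and radial integration.
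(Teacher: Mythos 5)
The paper does not prove this statement: it is quoted as a known fact, with the first inequality attributed to Rogers--Shephard~\cite{RS57} and the second to Milman--Pajor~\cite{MP00}. So the comparison here is against the classical proofs, not against an argument in the paper.

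Your Rogers--Shephard half is correct and is exactly the standard argument: $f(z)=\vol(K\cap(K+z))$ is $1/n$-concave by Brunn--Minkowski, the normalization and the supporting-line estimate give $h(z)\geq(1-\|z\|_{K-K})^n$, and radial integration produces the Beta integral $n\int_0^1(1-r)^nr^{n-1}\,dr=1/\binom{2n}{n}$. Nothing to add.

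The Milman--Pajor half is where there is a genuine gap, and I think it is deeper than you indicate. Your setup is sound: $\phi(x)=\vol\bigl((K-x)\cap(x-K)\bigr)$ has support $K$, $\phi^{1/n}$ is concave, $\int_K\phi=2^{-n}\vol(K)^2$, and $\int_K x\,\phi(x)\,dx=0$ when $b(K)=0$. But the reduction you propose requires the pointwise maximum statement $\phi(x)\leq\phi(0)$ for every $x\in K$, i.e.\ that $\phi$ is maximized at the centroid. This is \emph{false} in general: the maximizer of $x\mapsto\vol\bigl(K\cap(2x-K)\bigr)$ is the Kovner--Besicovitch center of symmetry of $K$, which generally does not coincide with the centroid (they agree for bodies with a large affine symmetry group, such as simplices, which is why small examples are misleading, but not for generic bodies). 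So ``promoting the two-point inequality to a pointwise max'' is not a step that can be filled in; it is the wrong target. What your integral identity actually needs is only the weaker bound $\phi(0)\geq\frac{1}{\vol(K)}\int_K\phi$, i.e.\ that the centroid value is at least the average. That is a Gr\"unbaum/Fradelizi-type centroid inequality for $1/n$-concave densities, and it is a substantive lemma in its own right---it does not follow from the midpoint inequality $\phi(0)^{1/n}\geq\tfrac12(\phi(x)^{1/n}+\phi(-x)^{1/n})$ together with the moment computation, which is all you have on the table. The known proofs of $\vol(K)\leq 2^n\vol(K\cap-K)$ take a different route (for instance via the Rogers--Shephard bound $\vol(\conv(K,-K))\leq 2^n\vol(K)$ paired with a companion lower bound on $\vol(K\cap-K)\vol(\conv(K,-K))$, or via Fradelizi's functional form), precisely because the ``max at centroid'' shortcut is unavailable. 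You correctly identified this as the crux, but you should replace the false max claim with the average claim and then prove \emph{that}; otherwise the argument does not close.
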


Lastly, we relate some well-known covering estimates. Here $N(K,T) = \min \set{|\Lambda|: \Lambda \subseteq \R^n, K
\subseteq \Lambda + T}$, where $K,T$ are convex bodies in $\R^n$.
\begin{lemma}
  \label{lem:cov-est}
  Let $K,T \subseteq \R^n$ be convex bodies. Then
  \begin{equation}
    N(K,T) \leq 6^n \inf_{c \in \R^n} \frac{\vol(K)}{\vol(K \cap (T+c))}
    \quad \text{ and } \quad \frac{\vol(K+T)}{\vol(T)} \leq 2^n N(K,T).
  \end{equation}
  If $T$ is centrally symmetric, then
  \begin{equation}
   N(K,T) \leq \frac{\vol(K + T/2)}{\vol(T/2)}.
  \end{equation}
  If both $K$ and $T$ are centrally symmetric, then
  \begin{equation}
  N(K,T) \leq 3^n \frac{\vol(K)}{\vol(K \cap T)}.
  \end{equation}
\end{lemma}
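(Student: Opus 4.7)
I would prove the four parts in the order (2), (3), (4), (1): (2) and (3) are short direct computations, (4) is a warm-up for (1), and (1) requires combining the maximal-packing argument of (3) with the Milman--Pajor symmetrization from Theorem~\ref{thm:symmetrize} in order to reduce to the symmetric case.

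For (2), take a minimum cover $K \subseteq \Lambda + T$ with $|\Lambda| = N(K,T)$. Since $T$ is convex, $T + T = 2T$, so $K + T \subseteq \Lambda + 2T$ and $\vol(K+T) \leq |\Lambda|\,\vol(2T) = 2^n N(K,T)\vol(T)$. For (3), let $\Lambda \subseteq K$ be a maximal set such that the interiors of $\{x + T/2 : x \in \Lambda\}$ are pairwise disjoint. Maximality forces $K \subseteq \Lambda + T$: for every $y \in K$ the translate $y + T/2$ meets some $x + T/2$, and $T/2 - T/2 = T$ because $T$ is symmetric and convex. The disjoint packing lies in $K + T/2$, so $|\Lambda|\,\vol(T/2) \leq \vol(K+T/2)$ and $N(K,T) \leq |\Lambda|$.

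For (4), set $U = K \cap T$, which is symmetric (since both $K$ and $T$ are) and convex. Running the argument of (3) with $U$ in place of $T$ yields $N(K,U) \leq \vol(K + U/2)/\vol(U/2)$, and $N(K,T) \leq N(K,U)$ because $U \subseteq T$ means any $U$-cover is a $T$-cover. Finally $K + U/2 \subseteq K + K/2 = (3/2)K$ by convexity and $U \subseteq K$, so the ratio simplifies to $3^n \vol(K)/\vol(U)$.

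Part (1) is the main step. Since $N(K,T)$ is translation-invariant in $T$, I may assume the optimal $c$ equals $0$, so $U := K \cap T$ is nonempty and convex but possibly non-symmetric. Applying Theorem~\ref{thm:symmetrize} to $U - b(U)$ produces a symmetric convex body $S := (U - b(U)) \cap -(U - b(U))$ with $\vol(S) \geq 2^{-n}\vol(U)$ and $S + b(U) \subseteq U \subseteq T$; hence $N(K,T) \leq N(K,S)$, because each $S$-translate sits inside a $T$-translate after shifting by $b(U)$. Part (3) applied to the symmetric $S$ gives $N(K,S) \leq \vol(K+S/2)/\vol(S/2)$. The numerator is controlled by $S \subseteq U - b(U) \subseteq K - b(U)$, which yields that $K + S/2$ is contained in a translate of $(3/2)K$ and so has volume at most $(3/2)^n \vol(K)$; and $\vol(S/2) = 2^{-n}\vol(S) \geq 4^{-n}\vol(U)$. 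Multiplying produces exactly the constant $(3/2)^n \cdot 2^n \cdot 2^n = 6^n$. The only real difficulty is (1): volume ratios only cleanly control $N(K,\cdot)$ when the covering body is symmetric, so one must exhibit a symmetric subset of $T$ (up to translation) of comparable volume, and Theorem~\ref{thm:symmetrize} supplies exactly this, at the cost of a factor of $2^n$ in the denominator.
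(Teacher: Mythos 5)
Your proof is correct and follows essentially the same argument as the paper: the maximal-packing argument for the symmetric case, reduction to $K\cap T$ for the both-symmetric case, and the Milman--Pajor symmetrization (Theorem~\ref{thm:symmetrize}) to handle the general case, with the $2^n$ covering bound falling out of the cover-doubling argument. The only cosmetic difference is bookkeeping of translations: the paper re-centers the whole picture so that $b(K\cap T)=0$ before forming $S=(K\cap T)\cap -(K\cap T)$, whereas you form $S=(U-b(U))\cap -(U-b(U))$ in place and use translation invariance of $N(K,\cdot)$; these are equivalent and lead to the same constant $6^n$.
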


\begin{proof}
  Let us first examine the case where $T$ is centrally symmetric, where we wish to show that
  \begin{equation}
    \label{eq:ce-st-pck}
    N(K,T) \leq \frac{\vol(K+T/2)}{\vol(T/2)}
  \end{equation}
  Let $\Lambda \subseteq K$ be a maximal subset of $K$ such that for
  $x_1,x_2 \in \Lambda$, $x_1 \neq x_2$, $x_1 + T/2 \cap x_2 + T/2 =
  \emptyset$.

  \paragraph{Claim 1:} $\displaystyle K \subseteq \cup_{x \in \Lambda}~ x + T~$.

  Take $y \in K$. By maximality of $\Lambda$, there exists $x \in \Lambda$ such that
  \[
  y + T/2 ~ \cap ~ x + T/2 \neq \emptyset \quad \Rightarrow \quad y
  \in x + T/2 - T/2 \quad \Rightarrow \quad y \in x + T
  \]
  where the last equality follows since $T$ is centrally
  symmetric. The claim thus follows.

  \paragraph{Claim 2:} $\displaystyle |\Lambda| \leq \frac{\vol(K+T/2)}{\vol(T/2)}$.

  For $x \in \Lambda$, note that since $x \in K$, we have
  that $x + T/2 \subseteq K + T/2$. Therefore $\Lambda + T/2 \subseteq K$.  Since
  the sets $x + T/2$, $x \in \Lambda$, are disjoint, we have that
  \begin{align}
    \vol(K+T/2) \geq \vol(\Lambda + T/2) = |\Lambda| \vol(T/2)
  \end{align}
  as needed.

  Now let us assume that $K$ is also symmetric. Since both $K$ and $T$ are symmetric, we have
  that $K \cap T$ is also symmetric. Therefore by the estimate in \eqref{eq:ce-st-pck} we get that
  \begin{equation}
    \label{eq:ce-2}
    N(K,T) \leq N(K,T \cap K) \leq \frac{\vol(K + \nicefrac{1}{2}(T \cap K))}{\vol(\nicefrac{1}{2}(T \cap K))}
    \leq \frac{\vol(\nicefrac{3}{2} K)}{\vol(\nicefrac{1}{2}(T \cap K))} = 3^n \frac{\vol(K)}{\vol(T \cap K)}
  \end{equation}
  as needed.

  Now we examine the case where neither $K$ nor $T$ is necessarily symmetric. Since the covering estimate is shift
  invariant, we may assume that $K$ and $T$ have been shifted such that $\vol(K \cap T)$ is maximized, and that the
  centroid of $K \cap T$ is at $0$. Let $S = (K \cap T) \cap -(K \cap T)$. By Theorem \ref{thm:symmetrize} we have that
  $\vol(S) \geq 2^{-n} \vol(K \cap T)$. Note that $S$ is a centrally symmetric convex body. Hence by identical reasoning as
  in \eqref{eq:ce-2} we get that
  \[
  N(K,T) \leq 3^n \frac{\vol(K)}{\vol(S)} \leq 6^n \frac{\vol(K)}{\vol(K \cap T)}
  \]
  as needed.

  Lastly, pick any $\Lambda \subseteq \R^n$ such that $K \subseteq \Lambda + T$ and $|\Lambda| = N(K,T)$. Now we see that
  \[
  \vol(K+T) \leq \vol((\Lambda + T) + T) = \vol(\Lambda + 2T) \leq |\Lambda| \vol(2T) = 2^n \vol(T) N(K,T)
  \]
  as needed.
\end{proof}


\end{document}
